\newtheorem{corollary}{Corollary}[section]
\newtheorem{lemma}{Lemma}[section]
\newtheorem{proposition}{Proposition}[section]
\newtheorem{definition}{Definition}[section]
\def\bitcoinB{\leavevmode
  {\setbox0=\hbox{\textsf{B}}%
    \dimen0\ht0 \advance\dimen0 0.2ex
    \ooalign{\hfil \box0\hfil\cr
      \hfil\vrule height \dimen0 depth.2ex\hfil\cr
    }%
  }%
}
\date{\displaydate{date}}
\begin{document}

\title{Unified Approach for Hedging Impermanent Loss of Liquidity Provision}

\author{Alex Lipton\thanks{Abu Dhabi Investment Authority \& ADIA Lab \& Khalifa University, Abu Dhabi, UAE}, Vladimir Lucic\thanks{Imperial College \& Marex, London, UK} and Artur Sepp\thanks{LGT Bank, Zurich, Switzerland, artursepp@gmail.com}}

\maketitle

\begin{abstract}
We develop static and dynamic approaches for hedging of the impermanent loss
(IL) of liquidity provision (LP) staked at Decentralised Exchanges (DEXes)
which employ Uniswap V2 and V3 protocols. We provide detailed definitions
and formulas for computing the IL to unify different definitions occurring
in the existing literature. We show that the IL can be seen a contingent
claim with a non-linear payoff for a fixed maturity date. Thus, we introduce
the contingent claim termed as IL protection claim which delivers the
negative of IL payoff at the maturity date. We apply arbitrage-based methods
for valuation and risk management of this claim. First, we develop the
static model-independent replication method for the valuation of IL
protection claim using traded European vanilla call and put options. We
extend and generalize an existing method to show that the IL protection
claim can be hedged perfectly with options if there is a liquid options
market. Second, we develop the dynamic model-based approach for the
valuation and hedging of IL protection claims under a risk-neutral measure.
We derive analytic valuation formulas using a wide class of price dynamics
for which the characteristic function is available under the risk-neutral
measure. As base cases, we derive analytic valuation formulas for IL
protection claim under the Black-Scholes-Merton model and the log-normal
stochastic volatility model. We finally discuss estimation of risk-reward of
LP staking using our results.
\end{abstract}

\textit{Keywords}: Automated Market Making, Liquidity Provision, Decentralized Finance, Uniswap, Cryptocurrencies, Impermanent Loss

\textit{JEL Classifications}:  C02, G12, G23

\section{Introduction}

Decentralised Exchanges (DEXes) play fundamental part in blockchain
ecosystem by allowing users to swap digital assets. The functioning of DEXes
requires liquidity providers who stake their liquidity to so-called pools,
so that traders can use these pools for buying and selling tokens. Automated
Market Making (AMM) protocol is a mechanism for settling buy and sell orders
at DEXes. An AMM protocol is characterized by a constant function market
maker (CFMM) which assigns buy and sell prices for given orders using order
sizes and current liquidity of a pool. Uniswap V3 (\citet{V3}) is the most
widely employed CFMM which is adopted by many DEXes, in addition to Uniswap
DEX itself\footnote{%
In the second half of year 2024, Uniswap development team is launching
Uniswap V4 AMM which has the same CFMM as Uniswap V3.}. Uniswap V2 (%
\citet{V2}) is an earlier AMM protocol employing the constant product CFMM
which is less capital efficient than V3 CFMM. Uniswap V2 is still in use for
old altcoins pools. For details of the design of various CFMMs, see among
others \citet{Angeris2019}, \citet{LiptonTreccani2021}, \citet{Mohan2022}, %
\citet{LiptonHardjono2022}, \citet{LiptonSepp2022}, \citet{Milionis2022}, %
\citet{Park2023}, \citet{Lehar2024}.

Uniswap V3 protocol allows liquidity providers to concentrate liquidity in
specified ranges. As a result, the liquidity of the pool can be increased in
certain ranges (typically around the current price) and the potential to
generate more trading fees from the LP is increased accordingly. We
illustrate the dynamics of staked LP using ETH/USDT pool as an example. A
liquidity provider stakes liquidity to a specific range using initial amount
of ETH and USDT tokens as specified by Uniswap V3 CFMM. When the price of
ETH falls, traders use the pool to swap USDT by depositing ETH, so that the
LP accrues more units of ETH. Thus when ETH falls persistently, the
liquidity provider ends up holding more units of the depreciating asset,
which is similar to being short a put option. In opposite, when ETH price
increases, traders will deplete ETH reserves from the pool by depositing
USDT tokens. Thus, the liquidity provider ends up holding less units of the
appreciating asset, which is similar to being short a call option. The
combined effect of increasing / decreasing the exposure to depreciating /
appreciating asset leads to what is known as the impermanent loss in
Decentralised Finance (DeFi) applications.

It is clear that the mechanism behind the impermanent loss of LPs is similar
to being short a portfolio of call and put options. Accordingly, we can
apply well-developed methods from financial engineering for designing the
valuation and risk management of LPs using static and dynamic methods for
valuation and risk-management of derivative securities. Our contributions
include static model-independent replication and dynamic model-dependent
replication of IL.

\subsection{Literature review and contributions}

For the static replication of IL under Uniswap V2 protocol, %
\citet{Fukasawa2023} derive approximate hedging portfolio using variance and
gamma swaps. In addition, \citet{Lipton2024} extends \citet{LiptonSepp2008}, 
\citet{Lipton2018}, and obtains model-dependent costs
of hedging portfolios using values of variance and gamma swaps under Heston
model. \citet{Deng2023} and \citet{MaireWunsch2024} develop the static
replication for Uniswap V3 protocol. We note that their result hinges
specifically on the analytical formula of the IL under Uniswap V3 AMM. Also
this method requires to hold both call and put options for in-the-money and
out-of-the-money strikes, which could be very costly for practical
implementation\footnote{%
At major options exchanges for digital and traditional markets, the
liquidity is concentrated in call and put options with out-of-the-money and
near at-the-money strikes. Options with in-the-money strikes are not liquid
with high bid-ask spreads. As a result, a liquid and cost-efficient
replicating portfolio should include only out-of-the-money and near
at-the-money strikes.}. We provide a generic approach to design
cost-efficient replicating portfolio for IL hedging for general CFMM and
illustrate our approach for Uniswap V3 protocol.

We note that currently the liquid option market (on Deribit options
exchanges and some other centralised exchanges) exists only for Bitcoin and
Ether. Thus, hedging of LP stakes for altcoins requires a model-dependent
approach for replication of IL by introducing IL protection claim and
applying dynamic delta-hedging of this claim. In the current ecosystem, a
few market-making and trading companies provide off-chain (over-the-counter)
IL protection of claims for a wide range of digital assets. We develop a
dynamic model-dependent approach for the valuation of IL protection claims,
which appears to be new in the literature.

For valuation purposes under Uniswap V3 protocol, we provide an original
result with a decomposition of the IL into components including payoffs of
vanilla call and put options, digital options, and an exotic payoff on the
square root of the price. This decomposition allows for the valuation of the
IL protection claims under a large class of price dynamics, which have
solution for their characteristic functions, by utilizing Lipton-Lewis
formula. As an important example we derive an analytic solution for IL
protecting claim under Black-Scholes-Merton model, which allows to analyze
the value of the IL protection claim using a single volatility parameter. As
realistic price dynamics including stochastic volatility correlated with the
price dynamics, we apply the log-normal stochastic volatility model
developed in \citet{SeppRakhmonoV2023}.

Our paper is organized as follows. In Section \ref{sec:lp}, we provide
definitions and derivations of the IL and payoffs of IL protection claims.
In Section \ref{sec:apps}, we apply these results for Uniswap V2 and V3
protocols. Hereby, we derive the decomposition formula for IL in Uniswap V3
protocol into payoffs of vanilla, digital and square root contracts, which
we use further for model-dependent valuation. In Section \ref{sec:sh}, we
develop generic approach for static replication of IL using traded vanilla
options. In Section \ref{sec:mdp}, we develop the model-dependent approach
for the valuation of protection claim against IL. We conclude in Section \ref%
{sec:conc}.

\section{Impermanent Loss of Liquidity Provision}

\label{sec:lp}

\subsection{Liquidity Provision}

We consider a liquidity pool on a pair of token 1 and token 2. Without loss
of generality, we assume that token 1 is a volatile token and token 2 is a
stable token with the spot price $p$ of swapping one unit of token 1 to $p$
units of token 2. For concreteness, we fix token 1 to be ETH and token 2 to
be USDT with spot price $p$ being ETH/USDT exchange price ($p_{0}=3800$ as
of 7th June 2024).

We consider a liquidity provision (LP) provided on $x_{0}$ and $y_{0}$ units
of token 1 (ETH) and token 2 (USDT), respectively. The initial value of the
LP in the units of token 2 (USDT) is given by 
\begin{equation}  \label{eq:pnl1}
V^{(y)}_{0} = p_{0} x_{0} + y_{0}
\end{equation}

The value of the LP at time $t$ is given by 
\begin{equation}  \label{eq:pnl2}
V^{(y)}_{t} = p_{t} x_{t} + y_{t}
\end{equation}
where $x_{t}$ and $y_{t}$ are the current units of ETH and USDT (these units
are the outputs from AMM protocol), respectively, in the staked LP and $%
p_{t} $ is the current ETH/USDT spot price. We treat accrued LP fees
separately in line with Uniswap convention.

The value of the LP in units of token 1 is given by 
\begin{equation}  \label{eq:pnl1_}
V^{(x)}_{t} = x_{t} + p^{-1}_{t} y_{t}
\end{equation}
Our further results can be directly applied for pools with USDT/ETH type of
conversion using corresponding inverse prices and ranges. We also note that,
in Uniswap V2 and V3, the price is defined on the grid of price ticks which
are functions of the pool fee tiers. Price ticks are dense for pools with
small fee tiers, so we assume that price range for $p_{t}$ is continuous
(see \cite{Echenim2023} for the analysis using discrete ticks).

\subsection{Profit-and-Loss of LP}

We consider two types of LP strategies excluding and including static delta
hedging of the initial stake.

\begin{definition}[USDT Funded LP position]
Funded position is created by funding the initial allocation of $x_{0}$ and $%
y_{0}$ units with the total capital commitment of $V^{(y)}_{0}$ USDT.
Staking the funded position includes the purchase of $x_{0}$ units of ETH
token at price $p_{0}$.
\end{definition}

The value of the funded position equals to the value of LP in Eq (\ref%
{eq:pnl2}): $V \ funded^{(y)}_{t} = V^{(y)}_{t}$. As a result, the
Profit-and-Loss (P\&L) of the funded position in token 2 (USDT) at time $t$
is given by 
\begin{equation}  \label{eq:pnl_funded}
P\&L \ funded^{(y)} \equiv V^{(y)}_{t} - V^{(y)}_{0} = \left(p_{t} x_{t} +
y_{t}\right) - \left(p_{0}x_{0} + y_{0}\right)
\end{equation}

\begin{definition}[Borrowed LP position]
Borrowed position is created either by borrowing $x_{0}$ units of ETH or by
purchasing $x_{0}$ units of ETH for staking and by simultaneously selling
short the perpetual future for hedging the initial stake of $x_{0}$ units of
ETH\footnote{%
There is very liquid market for core cryptocurrencies on both on-chain
exchanges (such as Hyperliquid, GMX, Aevo) and on off-chain exchanges (such
as Binance, Bybit, Deribit), so that hedging of long exposures is possible.}.
\end{definition}

For the borrowed position with hedging, we set the hedge position of selling
short $x_{0}$ units of token 1 with strike/entry price $p_{0}$. The P\&L of
the hedge position in units of USDT token at time $t$ is given by 
\begin{equation}  \label{eq:pnl4}
Hedge^{(y)}_{t}=-\left( p_{t} - p_{0}\right) x_{0}
\end{equation}
We assume that the hedge can be implemented by short selling the perpetual
future and we treat the funding cost separately from the LP P\&L. The
initial value of the staking position is given in Eq (\ref{eq:pnl2}). The
value of the borrowed LP position at time $t$ is given by 
\begin{equation}  \label{eq:v_borrowed}
V \ borrowed^{(y)}_{t} = p_{t} x_{t} + y_{t} -\left[p_{t}-p_{0}\right]x_{0}
\end{equation}
The P\&L of the borrowed position at time $t$ is given by 
\begin{equation}  \label{eq:pnl_borrowed}
\begin{split}
P\&L\ borrowed^{(y)}_{t} & = \left(p_{t} x_{t} + y_{t} -\left[p_{t}-p_{0}%
\right]x_{0}\right) - \left(p_{0}x_{0} + y_{0}\right) \\
& = \left(p_{t} x_{t} + y_{t} \right) - \left(p_{t} x_{0} + y_{0}\right)
\end{split}%
\end{equation}

\subsection{Impermanent Loss}

Using the two definitions of LPs in Eqs \eqref{eq:pnl_funded} and %
\eqref{eq:pnl_borrowed}, we define the quantity known as the impermanent
loss in the following three ways.

\begin{definition}[Impermanent Loss $\%$]
\label{def:il} The nominal IL of funded LP is defined for the LP funded with
USDT by 
\begin{equation}  \label{eq:il_funded}
\begin{split}
&IL \ funded^{(y)}(p_{t}) = \frac{\left(p_{t} x_{t} +
y_{t}\right)-\left(p_{0}x_{0} + y_{0}\right)}{\left(p_{0}x_{0} + y_{0}\right)%
}
\end{split}%
\end{equation}
The nominal IL of borrowed LP is defined for the LP with borrowed ETH by 
\begin{equation}  \label{eq:il_borrowed_nom}
\begin{split}
& IL \ borrowed^{(y)}(p_{t}) = \frac{\left(p_{t} x_{t} + y_{t} \right) -
\left(p_{t} x_{0} + y_{0}\right)}{\left(p_{0}x_{0} + y_{0}\right)}
\end{split}%
\end{equation}
The relative IL of borrowed LP is defined for borrowed LP by 
\begin{equation}  \label{eq:il_borrowed_rel}
\begin{split}
&Rel \ IL \ borrowed^{(y)}(p_{t}) = \frac{\left(p_{t} x_{t} + y_{t} \right)
- \left(p_{t} x_{0} + y_{0}\right)}{\left(p_{t}x_{0} + y_{0}\right)}
\end{split}%
\end{equation}
\end{definition}

In the literature, all three definitions are being used. Hereby, we clarify
the meaning of each definition. The nominal IL of funded LP is applicable
when the LP provider funds the position by allocation $V_{0}$ USDT tokens
and buys the initial stake of $x_{0}$ ETH tokens. The nominal IL for
borrowed LP is common for LPs accompanied with either borrowing the initial
stake of $x_{0}$ ETH tokens or with static delta-hedging of the initial
stake of $x_{0}$ tokens using perpetual futures. The relative IL of borrowed
LP defines the P\&L of the borrowed LP relative to the buy-and-hold position
rather than the initial staked value of the LP.

The nominal IL can be easily interpreted because P\&L of the LP in USDT is
the nominal IL multiplied by the initial staked notional so that we obtain 
\begin{equation}  \label{eq:V3aaa}
\begin{split}
P\&L \ funded^{(y)}(p_{t}) & = N^{(y)} \times \ IL \ funded^{(y)}(p_{t}) \\
P\&L \ borrowed^{(y)}(p_{t}) & = N^{(y)} \times \ IL \ borrowed^{(y)}(p_{t})
\\
\end{split}%
\end{equation}
where $N^{(y)}$ is the initial notional in USDT token of the staked LP. In
opposite, the relative IL lacks this interpretation. Thus, while relative IL
appears in some of the literature to emphasize the IL relative to the
buy-and-hold portfolio, its practical application for modelling of the
realised P\&L from a LP is not obvious. In this paper we focus only on
hedging of the nominal IL for funded and borrowed LPs in Eqs (\ref%
{eq:il_funded}) and (\ref{eq:il_borrowed_nom}), respectively.

Since the tokens can be used interchangeably, our definitions are symmetric.
For a position funded in $x$ (ETH) tokens the corresponding P\&L is obtained
using $p^{-1}_{t}=1/p_{t}$ and Eq \eqref{eq:V3aaa} becomes 
\begin{equation}  \label{eq:V3x}
\begin{split}
P\&L \ funded^{(x)}(p^{-1}_{t}) & = N^{(x)} \times \ IL \
funded^{(x)}(p^{-1}_{t}), \\
P\&L \ borrowed^{(x)}(p^{-1}_{t}) & = N^{(x)} \times \ IL \
borrowed^{(x)}(p^{-1}_{t}), \\
\end{split}%
\end{equation}
where $N^{(x)}$ is notional in $x$ units and 
\begin{equation}  \label{eq:il_funded_x}
\begin{split}
&IL \ funded^{(x)}(p^{-1}_{t}) = \frac{\left(x_{t} +
p^{-1}_{t}y_{t}\right)-\left(x_{0} + p^{-1}_{0}y_{0}\right)}{\left(x_{0} +
p^{-1}_{0}y_{0}\right)} \\
& IL \ borrowed^{(x)}(p^{-1}_{t}) = \frac{\left(x_{t} + p^{-1}_{t}y_{t}
\right) - \left(x_{0} + p^{-1}_{t}y_{0}\right)}{\left(x_{0} +
p^{-1}_{0}y_{0}\right)}
\end{split}%
\end{equation}

\subsection{Payoff of IL Protection Claim}

We fix maturity time $T$.

\begin{definition}[Payoff of IL protection claim]
We define the protection claim against IL as a derivative security whose
payoff at time $T$ equals to negative value of the IL. For the funded LP,
the payoff at time $T$ is defined by 
\begin{equation}
Payoff^{funded}(p_{T})=-IL\ funded^{(y)}(p_{T})  \label{eq:il_funded_p}
\end{equation}%
For the borrowed LP, the payoff at time $T$ is defined by 
\begin{equation}
Payoff^{borrowed}(p_{T})=-IL\ borrowed^{(y)}(p_{T})  \label{eq:il_borrowed_p}
\end{equation}
\end{definition}

A liquidity provider of staked LP with notional $N^{(y)}$ can buy the IL
protection claim to eliminate the impermanent loss from the staked LP. By Eq %
\eqref{eq:V3aaa}, at time $T$ the P\&L of holder's LP will be matched by the
payoff of the IL protection claim in Eq \eqref{eq:il_funded_p} or Eq %
\eqref{eq:il_borrowed_p}. As a result, the liquidity provider can perfectly
hedge the IL at time $T$.

\section{Applications to Uniswap AMM Protocol}

\label{sec:apps}

We now derive explicit formulas for the IL of funded and borrowed LP stakes
under Uniswap V2 and V3 protocols.

\subsection{Uniswap V2}

\label{sec:V2}

In Uniswap V2 (\cite{V2}), the CFMM is defined by the constant product rule
as follows 
\begin{equation}  \label{eq:v1}
x y = L^{2}
\end{equation}
where $x$ and $y$ are pool reserves and $L$ is the pool liquidity parameter.
In the Uniswap V2 white paper, the constant is defined by $k$. We use $L^{2}$%
, $L>0$, in line with V3 specification.

The pool price is determined by pool reserves as follows 
\begin{equation}  \label{eq:v1a}
p \equiv \frac{y}{x}
\end{equation}
Thus, we need to solve Eq (\ref{eq:v1}) and (\ref{eq:v1a}) in the two
unknowns $x$ and $y$. Substituting $y=p x$ from Eq (\ref{eq:v1a}) into Eq (%
\ref{eq:v1}), we obtain that the LP stakes are given as follows 
\begin{equation}  \label{eq:V2}
x = \sqrt{\frac{L^{2}}{p}} , \ y = \sqrt{p L^{2}}.
\end{equation}

From \eqref{eq:V2} the value of LP position is given by\footnote{%
This relationship follows from the condition that the internal price in Eq %
\eqref{eq:v1a} inferred by pool reserves follows an external price $p_t$,
observed on other DEXes and centralised exchanges. In practice, reserves of
liquidity pools are balanced so that the internal price in Eq \eqref{eq:v1a}
follows external price feeds withing tight bands most of times due to
arbitrage operations of multiple arbitragers in blockchain ecosystem. For
details of such arbitrages see among others \citet{Milionis2022}, %
\citet{Park2023}, \citet{Lehar2024}, \citet{Cartea2023}, \citet{Cartea2024}.
Same considerations apply for the internal price in Eq \eqref{eq:V32a}
implied by pool reserves for Uniswap V3 pools.} 
\begin{equation}  \label{eq:V3a}
\begin{split}
V^{(y)}_{t} & = p_{t} x_{t} + y_{t} = 2 L \sqrt{p_{t}}.
\end{split}%
\end{equation}

\begin{proposition}[Funded LP]
The funded P\&L in Eq (\ref{eq:pnl_funded}) is computed by: 
\begin{equation}  \label{eq:pnl_funded_V2}
P\&L \ funded^{(y)}(p_{t}) = 2 L \sqrt{p_{0}} \left( \sqrt{\frac{p_{t}}{p_{0}%
}} -1 \right)
\end{equation}

The nominal IL for funded LP in Eq (\ref{eq:il_funded}) is computed by 
\begin{equation}  \label{eq:il_funded_V2}
Nom \ IL \ funded^{(y)}(p_{t}) = \sqrt{\frac{p_{t}}{p_{0}}} -1
\end{equation}
\end{proposition}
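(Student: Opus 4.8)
The plan is to read both formulas off directly from the closed-form value of the LP position established in Eq~\eqref{eq:V3a}, so that the argument reduces to a substitution followed by an algebraic factorization. First I would observe that Eq~\eqref{eq:V3a} holds at every time along the price path, in particular at the initial time: evaluating it at $t=0$ and matching against the definition \eqref{eq:pnl1} gives $V^{(y)}_{0} = p_{0}x_{0} + y_{0} = 2L\sqrt{p_{0}}$. The one hypothesis doing the work here is that the liquidity parameter $L$ in the constant product rule \eqref{eq:v1} is invariant along the arbitrage-balanced price path, so that the \emph{same} $L$ appears in the value at times $0$ and $t$; this is precisely the content flagged in the footnote to Eq~\eqref{eq:V3a}.

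Next I would substitute $V^{(y)}_{t} = 2L\sqrt{p_{t}}$ and $V^{(y)}_{0} = 2L\sqrt{p_{0}}$ into the definition of the funded P\&L in Eq~\eqref{eq:pnl_funded}, obtaining the difference $2L\sqrt{p_{t}} - 2L\sqrt{p_{0}}$. Factoring out $2L\sqrt{p_{0}}$ and writing $\sqrt{p_{t}}/\sqrt{p_{0}} = \sqrt{p_{t}/p_{0}}$ rewrites this as $2L\sqrt{p_{0}}\bigl(\sqrt{p_{t}/p_{0}}-1\bigr)$, which is exactly Eq~\eqref{eq:pnl_funded_V2}.

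For the nominal IL in Eq~\eqref{eq:il_funded_V2}, I would divide the P\&L just obtained by the initial stake value $p_{0}x_{0} + y_{0} = 2L\sqrt{p_{0}}$, as prescribed by the definition in Eq~\eqref{eq:il_funded}. The common factor $2L\sqrt{p_{0}}$ then cancels, leaving $\sqrt{p_{t}/p_{0}}-1$ and completing the proof. It is worth noting that the $L$-independence of the final expression is itself the reason this quantity is a clean \emph{percentage} loss.

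I do not expect a genuine obstacle: all the substantive content is already packaged in Eq~\eqref{eq:V3a}, which in turn rests on solving the CFMM constraints \eqref{eq:v1}--\eqref{eq:v1a} for the reserves in \eqref{eq:V2} and on the external/internal price-matching assumption. The only point deserving explicit mention is the time-invariance of $L$: if liquidity were added to or withdrawn from the pool between times $0$ and $t$ the common factor would differ and the cancellation would fail, so I would state this invariance as the single standing assumption driving both formulas.
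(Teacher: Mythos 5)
Your proposal is correct and follows essentially the same route as the paper: evaluate $V^{(y)}_{t}=2L\sqrt{p_{t}}$ from Eq~\eqref{eq:V3a} at times $0$ and $t$, factor the difference as $2L\sqrt{p_{0}}\bigl(\sqrt{p_{t}/p_{0}}-1\bigr)$, and divide by the initial notional $V_{0}=2L\sqrt{p_{0}}$ to obtain the nominal IL. Your explicit remark on the time-invariance of $L$ (no liquidity added or withdrawn between $0$ and $t$) is a reasonable clarification that the paper leaves implicit, but it does not change the argument.
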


\begin{proof}
Using Eq \eqref{eq:V3a}, we obtain
\begin{equation} \label{eq:V3aa} 
\begin{split}
P\&L \ funded^{(y)}(p_{t}) & = 2 L \sqrt{p_{t}} - 2 L \sqrt{p_{0}}  = 2 L \sqrt{p_{0}}  \left( \sqrt{\frac{p_{t}}{p_{0}}} -1 \right)
\end{split}
\end{equation}
Given an initial notional of the stake such as $N^y=V_{0}=2L\sqrt{p_{0}}$, we obtain the nominal IL.
\end{proof}

\begin{proposition}[IL for Borrowed LP in Uniswap V2]
The borrowed P\&L in Eq (\ref{eq:pnl_borrowed}) is given by: 
\begin{equation}  \label{eq:pnl_borrowed_V2}
P\&L\ borrowed^{(y)}(p_{t}) = - L \sqrt{p_{0}} \left( \sqrt{\frac{p_{t}}{%
p_{0}}} - 1\right)^{2}
\end{equation}
The nominal IL for borrowed LP in Eq (\ref{eq:il_borrowed_nom}) is given by 
\begin{equation}  \label{eq:il_borrowed_nom_V2}
Nom \ IL \ borrowed^{(y)}p_{t} = - \frac{1}{2} \left( \sqrt{\frac{p_{t}}{%
p_{0}}} - 1\right)^{2}
\end{equation}

Relative IL in Eq (\ref{eq:il_borrowed_rel}) is given by 
\begin{equation}  \label{eq:il_borrowed_rel_V2}
Rel \ IL \ borrowed^{(y)}(p_{t}) = - \frac{ \left( \sqrt{\frac{p_{t}}{p_{0}}}
- 1\right)^{2}}{ \frac{p_{t}}{p_{0}} +1}
\end{equation}
\end{proposition}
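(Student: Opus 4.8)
The plan is to compute the borrowed P&L directly from the formula derived for the funded case, and then extract the two IL quantities by dividing by the appropriate notionals. The key observation is that the borrowed and funded P&L differ only by the hedge term in Eq~\eqref{eq:pnl4}, so I would not redo the whole computation from scratch.

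First I would start from the borrowed P&L definition in Eq~\eqref{eq:pnl_borrowed}, which can be written as $P\&L\ borrowed^{(y)}_{t} = P\&L\ funded^{(y)}_{t} + Hedge^{(y)}_{t}$. Using the funded result $P\&L\ funded^{(y)}(p_{t}) = 2L\sqrt{p_{0}}\bigl(\sqrt{p_{t}/p_{0}}-1\bigr)$ from Eq~\eqref{eq:pnl_funded_V2} together with the hedge payoff $Hedge^{(y)}_{t} = -(p_{t}-p_{0})x_{0}$, and substituting the initial stake $x_{0} = \sqrt{L^{2}/p_{0}} = L/\sqrt{p_{0}}$ from Eq~\eqref{eq:V2}, I would combine the two pieces. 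Writing $p_{t}-p_{0} = p_{0}\bigl(p_{t}/p_{0}-1\bigr)$ and factoring, the hedge term becomes $-L\sqrt{p_{0}}\bigl(p_{t}/p_{0}-1\bigr)$. Adding this to the funded P&L and recognizing the algebraic identity $\bigl(\sqrt{p_{t}/p_{0}}-1\bigr)^{2} = p_{t}/p_{0} - 2\sqrt{p_{t}/p_{0}} + 1$ should collapse everything to the claimed $-L\sqrt{p_{0}}\bigl(\sqrt{p_{t}/p_{0}}-1\bigr)^{2}$.

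Next, for the nominal IL I would divide the borrowed P&L by the initial notional $N^{(y)} = V^{(y)}_{0} = 2L\sqrt{p_{0}}$, as in Eq~\eqref{eq:V3aaa}; the factor of $L\sqrt{p_{0}}$ cancels and the factor of $2$ in the denominator produces the $-\tfrac{1}{2}$ prefactor in Eq~\eqref{eq:il_borrowed_nom_V2}. For the relative IL in Eq~\eqref{eq:il_borrowed_rel_V2}, the definition in Eq~\eqref{eq:il_borrowed_rel} divides the same numerator by the buy-and-hold value $p_{t}x_{0}+y_{0}$ rather than the initial stake. Here I would again substitute $x_{0}=L/\sqrt{p_{0}}$ and $y_{0}=\sqrt{p_{0}L^{2}}=L\sqrt{p_{0}}$ from Eq~\eqref{eq:V2}, compute the denominator $p_{t}x_{0}+y_{0} = L\sqrt{p_{0}}\bigl(p_{t}/p_{0}+1\bigr)$, and divide to obtain the stated ratio.

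I do not anticipate a genuine obstacle here since the result follows from the funded proposition by elementary algebra; the only thing requiring care is the bookkeeping of which notional normalizes each definition of IL and correctly substituting the reserve expressions for $x_{0}$ and $y_{0}$. The main step where a sign or factor error could creep in is the combination of the funded P&L with the hedge term and the completion of the square, so I would verify the $\bigl(\sqrt{p_{t}/p_{0}}-1\bigr)^{2}$ identity explicitly before normalizing.
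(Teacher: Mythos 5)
Your proposal is correct and amounts to essentially the same elementary computation as the paper: the paper evaluates $P\&L\ borrowed^{(y)}=(p_tx_t+y_t)-(p_tx_0+y_0)=2L\sqrt{p_t}-L\sqrt{p_0}\left(\frac{p_t}{p_0}+1\right)$ directly and completes the square, whereas you reach the identical expression by adding the hedge term $-(p_t-p_0)x_0$ to the funded P\&L, which is just an algebraic rearrangement of the same quantity. The normalizations by $2L\sqrt{p_0}$ for the nominal IL and by $p_tx_0+y_0=L\sqrt{p_0}\left(\frac{p_t}{p_0}+1\right)$ for the relative IL match the paper's treatment exactly.
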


\begin{proof}
From Eq \eqref{eq:V2} we note that 
\begin{equation} \label{eq:v4a} 
p_{t} x_{0} + y_{0}= p_{t} \sqrt{\frac{L^{2}}{p_{0}}}+ \sqrt{p_{0} L^{2}} = \sqrt{p_{0}} L\left( \frac{p_{t}}{p_{0}} +1 \right)
\end{equation}
Thus, we obtain
\begin{equation} \label{eq:v4} 
\begin{split}
P\&L \ borrowed^{(y)}(p_{t}) &  = \left(p_{t} x_{t} + y_{t}\right)  - \left(p_{t} x_{0} + y_{0}\right)\\
& = 2\sqrt{p_{t} L^{2}} - \sqrt{p_{0}} L\left( \frac{p_{t}}{p_{0}} +1 \right)\\
& = - L \sqrt{p_{0}}  \left( \sqrt{\frac{p_{t}}{p_{0}}} - 1\right)^{2}
\end{split}
\end{equation}

\end{proof}

It is clear that the minimum is $0$ at $p_{t}=p_{0}$ and otherwise the
borrowed P\&L is negative for any value of $p_{1}$.

\begin{corollary}[Payoff of claim for IL protection for Uniswap V2 AMM]
Using definitions in Eq \eqref{eq:il_funded_p} and Eq %
\eqref{eq:il_borrowed_p} for payoffs of protection claim against funded and
borrowed LP, respectively, along with respective Eqs \eqref{eq:il_funded_V2}
and \eqref{eq:il_borrowed_nom_V2}, we obtain 
\begin{equation}  \label{eq:il_payoff_v2}
\begin{split}
& Payoff^{funded}(p_{T}) = 1 - \sqrt{\frac{p_{t}}{p_{0}}} , \\
& Payoff^{borrowed}(p_{T}) = \frac{1}{2} \left( \sqrt{\frac{p_{t}}{p_{0}}} -
1\right)^{2}
\end{split}%
\end{equation}
\end{corollary}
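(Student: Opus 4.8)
The plan is to obtain both payoffs by direct substitution, since the IL protection claim is defined as the negative of the nominal IL, and the latter has already been computed in closed form in the two preceding propositions. No new machinery is required: this corollary is a bookkeeping step that merely flips a sign, so the entire content is carried by the cited results.

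First I would treat the funded case. Starting from the definition in Eq~\eqref{eq:il_funded_p}, namely $Payoff^{funded}(p_{T}) = -IL\ funded^{(y)}(p_{T})$, I would insert the closed form of the funded nominal IL established in Eq~\eqref{eq:il_funded_V2}, which gives $IL\ funded^{(y)}(p_{T}) = \sqrt{p_{T}/p_{0}} - 1$. Negating then yields $Payoff^{funded}(p_{T}) = 1 - \sqrt{p_{T}/p_{0}}$, matching the first line of Eq~\eqref{eq:il_payoff_v2}.

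Second, for the borrowed case I would repeat exactly the same step, now combining Eq~\eqref{eq:il_borrowed_p} with the borrowed nominal IL from Eq~\eqref{eq:il_borrowed_nom_V2}. Since that IL equals $-\tfrac{1}{2}\bigl(\sqrt{p_{T}/p_{0}}-1\bigr)^{2}$, its negative is $\tfrac{1}{2}\bigl(\sqrt{p_{T}/p_{0}}-1\bigr)^{2}$, which reproduces the second line of Eq~\eqref{eq:il_payoff_v2}. Both identities thus follow by a single sign change each.

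The only point requiring any care -- and it is not really an obstacle -- is notational consistency: the IL formulas are written with the generic time index $t$, whereas the protection claim is evaluated at the fixed maturity $T$, so one should read the argument $p_{t}$ as $p_{T}$ throughout. Beyond keeping the two signs straight, the statement is immediate from the already-proven Propositions, and I anticipate no genuine difficulty.
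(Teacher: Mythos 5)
Your proof is correct and matches the paper's (implicit) argument exactly: the corollary follows by negating the nominal IL formulas in Eqs \eqref{eq:il_funded_V2} and \eqref{eq:il_borrowed_nom_V2} per the definitions in Eqs \eqref{eq:il_funded_p} and \eqref{eq:il_borrowed_p}. Your remark that the $p_{t}$ appearing on the right-hand side of Eq \eqref{eq:il_payoff_v2} should be read as $p_{T}$ correctly identifies a notational slip in the statement.
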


In subplot (A) of Figure (\ref{V2il}), we show ETH units (left y-axis) and
USDT units (right y-axis) for LP Uniswap V2 with $1m$ USDT notional and $%
p_{0}=2000$ ETH/USDT price. The initial LP units of (ETH, USDT) are $(250,
500000)$. The red bar at $p=1500$ shows LP units of (289, 433013) with the
higher allocation to ETH units as ETH/USDT price falls. The green bar at $%
p=2500$ shows corresponding LP units of $(224, 559017)$ with the higher
allocation to USDT units as ETH/USDT price rises. In subplot (B), we show
USDT values of 50\%/50\% ETH/USDT portfolio, Funded LP position and Borrowed
LP position. The funded LP underperforms the 50\%/50\% portfolio on both the
upside (because LP position reduces ETH units) and on the downside (because
LP position reduces ETH units). The value of the borrowed LP has zero
first-order beta to ETH with negative quadratic convexity to ETH/USDT
changes.

\begin{figure}[]
\begin{center}
\includegraphics[width=1.0\textwidth, angle=0]
{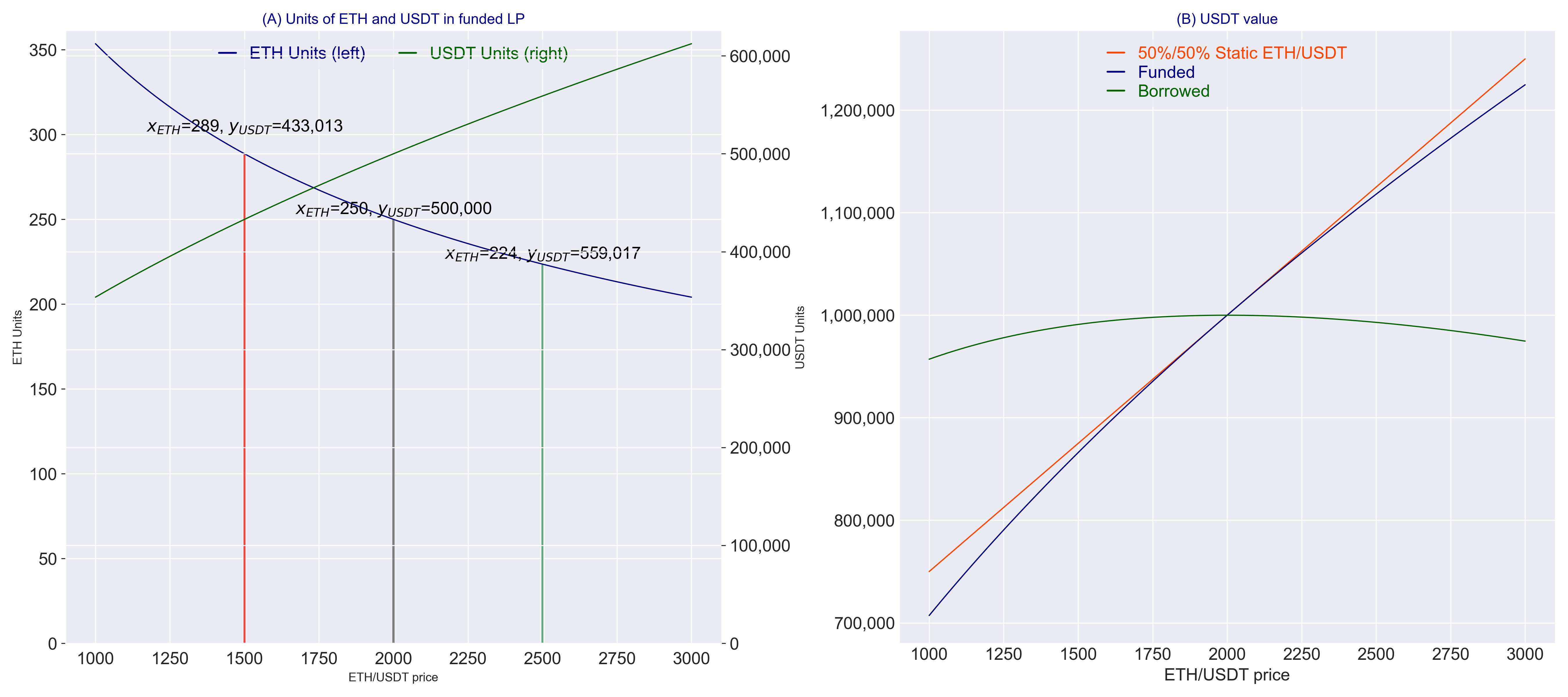}\vspace*{-1.\baselineskip}
\end{center}
\caption{(A) ETH units (left y-axis) and USDT units (right y-axis) for LP
Uniswap V2. (B) USDT value of 50\%/50\% ETH/USDT portfolio, Funded LP
position and Borrowed LP position. Uniswap V2 LP position is constructed
using $1m$ USDT notional with $p_{0}=2000$ ETH/USDT price.}
\label{V2il}
\end{figure}

\subsection{Application to Uniswap V3}

\label{sec:V3}

In Uniswap V3 protocol (see \cite{V3}), the CFMM is defined by 
\begin{equation}  \label{eq:V31}
x_{v} y_{v} = L^{2}
\end{equation}
where $x_{v}$ and $y_{v}$ are termed as virtual reserves: 
\begin{equation}  \label{eq:V32}
x_{v} \equiv x + \frac{L}{\sqrt{p_{b}}}, \ y_{v} \equiv y + L\sqrt{p_{a}}
\end{equation}
with the liquidity amount $L$ provided in the price range $[p_{a}, p_{b}]$.

For ETH/USDT pool, the price $p$ is set by: 
\begin{equation}  \label{eq:V32a}
p \equiv \frac{y_{v}}{x_{v}} = \frac{y + L\sqrt{p_{a}}}{x+\frac{L}{\sqrt{%
p_{b}}}}
\end{equation}

Eqs (\ref{eq:V31}) and (\ref{eq:V32a}) are viewed as two equations in four
unknowns ($x$, $y$, $L$, $p$): 
\begin{equation}  \label{eq:V33}
\begin{split}
& \left( x + \frac{L}{\sqrt{p_{b}}}\right)\left( y + L\sqrt{p_{a}} \right) =
L^{2} \\
& p = \frac{y + L\sqrt{p_{a}}}{x+\frac{L}{\sqrt{p_{b}}}}.
\end{split}%
\end{equation}
The above equations have to hold at any point in time, with internal price $%
p $ tracking an internal price observed on other trading venues within tight
bounds due to arbitragers adjust pool reserves accordingly to eliminate
arbitrage opportunities. Adding (removing) liquidity amounts to increasing
(decreasing) $L$ via increasing $x$ and $y$, while keeping $p$ constant.
Swapping (trading) tokens amounts to keeping changing $x$, $y$ and $p$,
while keeping $L$ the same.

Importantly, we solve for $x$ and $y$ given $L$ and $p$ as independent
variables. For a given position with $L$ and $p$ as external parameters,
this solution provides how much units $x$ and $y$ are assigned to the LP
position.

\begin{lemma}[Solution for $x$ and $y$]
Thus for $p \in (p_{a}, p_{b})$, the LP units $x$ and $y$ are given by: 
\begin{equation}  \label{eq:V34}
x = L\left(\frac{1}{\sqrt{p}} - \frac{1}{\sqrt{p_{b}}}\right), \ y = L \left(%
\sqrt{p} - \sqrt{p_{a}}\right)
\end{equation}

For $p \leq p_{a}$, the position is fully in token 1: 
\begin{equation}  \label{eq:V34a}
x = L\left(\frac{1}{\sqrt{p_{a}}} - \frac{1}{\sqrt{p_{b}}}\right), \ y = 0
\end{equation}

For $p \geq p_{b}$, the position is fully in token 2: 
\begin{equation}  \label{eq:V34b}
x = 0, \ y = L \left(\sqrt{p_{b}} - \sqrt{p_{a}}\right)
\end{equation}
\end{lemma}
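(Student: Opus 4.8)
The plan is to reduce the two-equation system \eqref{eq:V33} to a single algebraic equation by passing to the virtual reserves, solve it in closed form to get the interior formula, and then treat the two out-of-range regimes separately using nonnegativity of the physical reserves. First I would set $x_v = x + L/\sqrt{p_b}$ and $y_v = y + L\sqrt{p_a}$ as in \eqref{eq:V32}, so that \eqref{eq:V33} becomes simply $x_v y_v = L^2$ together with $p = y_v/x_v$. Substituting $y_v = p\,x_v$ from the price relation into the constant-product relation gives $p\,x_v^2 = L^2$, whence $x_v = L/\sqrt{p}$ on taking the positive root (reserves being nonnegative), and correspondingly $y_v = p\,x_v = L\sqrt{p}$.

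Subtracting the constant offsets then recovers the physical reserves, $x = x_v - L/\sqrt{p_b} = L\!\left(1/\sqrt{p} - 1/\sqrt{p_b}\right)$ and $y = y_v - L\sqrt{p_a} = L\!\left(\sqrt{p} - \sqrt{p_a}\right)$, which is exactly \eqref{eq:V34}. I would then verify that this interior solution is admissible precisely on $(p_a, p_b)$: the requirement $x \geq 0$ is equivalent to $p \leq p_b$, and $y \geq 0$ is equivalent to $p \geq p_a$, so both reserves are strictly positive for $p$ strictly inside the range, and each one vanishes as $p$ approaches the corresponding endpoint.

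The remaining and conceptually most delicate step is to justify the boundary regimes \eqref{eq:V34a} and \eqref{eq:V34b}, which are not solutions of \eqref{eq:V33} in the interior sense but instead reflect the concentrated-liquidity design: the liquidity $L$ is active only while $p \in [p_a, p_b]$. When $p \leq p_a$ the nonnegativity constraint $y \geq 0$ binds, the position is held entirely in token 1, and by continuity of the interior solution at $p = p_a$ the frozen amount is $x = L\!\left(1/\sqrt{p_a} - 1/\sqrt{p_b}\right)$ with $y = 0$; symmetrically, for $p \geq p_b$ the constraint $x \geq 0$ binds, forcing $x = 0$ and, by continuity at $p = p_b$, $y = L\!\left(\sqrt{p_b} - \sqrt{p_a}\right)$. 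I expect this boundary argument, which invokes the economic meaning of the range and the single-asset freezing rather than a purely algebraic manipulation, to be the main point requiring care, whereas the interior computation is entirely routine.
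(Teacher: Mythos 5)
Your proof is correct and follows essentially the same route as the paper's: the paper substitutes the price relation into the constant-product relation to get $\left(x + L/\sqrt{p_b}\right)^2 = L^2/p$ and hence $x = L\left(1/\sqrt{p} - 1/\sqrt{p_b}\right)$, $y = L\left(\sqrt{p} - \sqrt{p_a}\right)$, which is exactly your virtual-reserve computation $x_v = L/\sqrt{p}$, $y_v = L\sqrt{p}$ followed by subtracting the offsets. You go somewhat further than the paper by explicitly justifying the out-of-range regimes \eqref{eq:V34a} and \eqref{eq:V34b} via nonnegativity of the physical reserves and continuity at $p_a$ and $p_b$ --- the paper's proof is silent on these cases --- so this is added completeness rather than a divergence in method.
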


\begin{proof}
We substitute the second equation in Eq (\ref{eq:V33})
\begin{equation} \label{eq:V33a} 
 \left(y + L\sqrt{p_{a}}\right) = p \left( x+\frac{L}{\sqrt{p_{b}}}\right)
\end{equation}
into the first one to obtain
 \begin{equation} \label{eq:V33aa} 
 \left( x + \frac{L}{\sqrt{p_{b}}}\right)^{2} = \frac{L^{2}}{p} \ \Rightarrow  x = L\left(\frac{1}{\sqrt{p}} - \frac{1}{\sqrt{p_{b}}}\right)
\end{equation}
and
\begin{equation} \label{eq:V33b} 
 \left(y + L\sqrt{p_{a}}\right) = L \sqrt{p} \ \Rightarrow y = L \left(\sqrt{p} - \sqrt{p_{a}}\right)
\end{equation}

\end{proof}

We obtain that the initial value of LP using Eq (\ref{eq:V34}) for $p_{t}
\in (p_{a}, p_{b})$ is given by 
\begin{equation}  \label{eq:V3p}
V_{0} \equiv p_{0} x_{0} + y_{0}= L \left( 2\sqrt{p_{0}} - \frac{p_{0}}{%
\sqrt{p_{b}}} - \sqrt{p_{a}} \right)
\end{equation}

\begin{corollary}[Initial notional $N^y$]
Given an initial notional of the stake such as $N^y=V_{0}$, using Eq (\ref%
{eq:V3p}) we obtain that the provided liquidity $L$ is set by 
\begin{equation}  \label{eq:V3n2}
\begin{split}
& \Rightarrow L = \frac{N^y}{2\sqrt{p_{0}} - \frac{p_{0}}{\sqrt{p_{b}}} - 
\sqrt{p_{a}} }
\end{split}%
\end{equation}
\end{corollary}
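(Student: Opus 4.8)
The plan is to treat this as an immediate inversion of the linear relationship already established in Eq~\eqref{eq:V3p}. That equation expresses the initial value $V_0$ as the product of the unknown liquidity $L$ and a price-dependent factor that depends only on the external parameters $p_0$, $p_a$ and $p_b$. Since $L$ enters linearly and the bracketed factor carries no dependence on $L$, the entire content of the corollary reduces to a one-line algebraic rearrangement, so I do not expect any substantive difficulty.

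Concretely, I would start from Eq~\eqref{eq:V3p},
\begin{equation*}
V_0 = L\left(2\sqrt{p_0} - \frac{p_0}{\sqrt{p_b}} - \sqrt{p_a}\right),
\end{equation*}
impose the calibration condition $N^y = V_0$ from the hypothesis, and divide both sides by the bracketed factor to isolate $L$, which yields exactly Eq~\eqref{eq:V3n2}. No appeal to the boundary cases of the preceding lemma is required, since the relevant regime here is $p_0 \in (p_a, p_b)$, which is precisely the regime under which Eq~\eqref{eq:V3p} was derived.

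The only point requiring genuine care --- and the nearest thing to an obstacle --- is to confirm that the division is legitimate, i.e.\ that the denominator $2\sqrt{p_0} - p_0/\sqrt{p_b} - \sqrt{p_a}$ is nonzero, and in fact strictly positive, so that a positive notional $N^y$ together with the standing assumption $L>0$ is consistent. I would verify this by setting $u = \sqrt{p_0}$, $a = \sqrt{p_a}$, $b = \sqrt{p_b}$ with $a < u < b$, which recasts the denominator as $f(u) = 2u - u^2/b - a$. Evaluating at the endpoints gives $f(a) = a(b-a)/b > 0$ and $f(b) = b - a > 0$, and since $f$ is concave in $u$ (because $f''(u) = -2/b < 0$), positivity at both endpoints forces $f(u) > 0$ on the whole open interval. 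Hence the denominator is strictly positive for every admissible $p_0 \in (p_a,p_b)$, the inversion is well defined, and the resulting $L$ is positive, completing the argument.
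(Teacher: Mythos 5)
Your proposal is correct and matches the paper's (implicit) argument: the corollary is a one-line inversion of Eq~(\ref{eq:V3p}) under the calibration $N^y=V_0$, which is exactly what you do. Your additional verification that the denominator $2\sqrt{p_0}-p_0/\sqrt{p_b}-\sqrt{p_a}$ is strictly positive for $p_0\in(p_a,p_b)$ (via concavity of $f(u)=2u-u^2/b-a$ and positivity at the endpoints) is sound and goes slightly beyond what the paper records, but it does not change the route.
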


\subsubsection{Impermanent Loss}

\begin{proposition}[Funded P\&L and IL in Uniswap V3]
\label{prop:pnl_funded_V3} The P\&L of the funded position in Eq (\ref%
{eq:pnl_funded}) at time $t$ with current price $p_{t}$ is given by 
\begin{equation}  \label{eq:pnl_funded_V3}
P\&L \ funded^{(y)} = 
\begin{cases}
L \left[ 2\left(\sqrt{p_{t}} -\sqrt{p_{0}} \right) + \frac{p_{0}-p_{t}}{%
\sqrt{p_{b}}}\right] \quad & \, p_{t} \in (p_{a}, p_{b}) \\ 
L \left[ p_{t} \left(\frac{1}{\sqrt{p_{a}}} - \frac{1}{\sqrt{p_{b}}}\right)
+ \frac{p_{0}}{\sqrt{p_{b}}} - 2\sqrt{p_{0}} + \sqrt{p_{a}}\right] \quad & 
p_{t} \leq p_{a} \\ 
L\left[ \sqrt{p_{b}} + \frac{p_{0}}{\sqrt{p_{b}}} - 2\sqrt{p_{0}}\right]
\quad & p_{t} \geq p_{b} \\ 
& 
\end{cases}%
\end{equation}
where 
\begin{equation}  \label{eq:pnl_funded_V3n}
L = \frac{N^{y}}{2\sqrt{p_{0}} - \frac{p_{0}}{\sqrt{p_{b}}} - \sqrt{p_{a}} }
\end{equation}
and $N^{y}$ is USDT notional.

The nominal IL for funded position defined in Eq (\ref{eq:il_funded}) is
computed by 
\begin{equation}  \label{eq:il_funded_V3}
\begin{split}
&Nom \ IL \ funded^{(y)} = \frac{P\&L \ funded^{(y)}}{L \left( 2\sqrt{p_{0}}
- \frac{p_{0}}{\sqrt{p_{b}}} - \sqrt{p_{a}} \right)}
\end{split}%
\end{equation}
\end{proposition}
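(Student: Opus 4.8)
The plan is to substitute the three-branch solution for $x_t$ and $y_t$ from the Lemma (Eqs \eqref{eq:V34}, \eqref{eq:V34a}, \eqref{eq:V34b}) directly into the definition of the funded P\&L, $P\&L = (p_t x_t + y_t) - (p_0 x_0 + y_0)$ of Eq \eqref{eq:pnl_funded}, and simplify separately in each price regime. Throughout I would assume, as is implicit in the initial-value formula \eqref{eq:V3p}, that the staking price satisfies $p_0 \in (p_a, p_b)$, so that the initial value keeps the fixed form $V_0 = L\left(2\sqrt{p_0} - \frac{p_0}{\sqrt{p_b}} - \sqrt{p_a}\right)$ no matter where $p_t$ subsequently lands. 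The whole argument is then a case split on $p_t$ relative to the range endpoints.

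First I would treat the interior case $p_t \in (p_a, p_b)$. Inserting \eqref{eq:V34} gives the current value $V_t = p_t x_t + y_t = L\left(2\sqrt{p_t} - \frac{p_t}{\sqrt{p_b}} - \sqrt{p_a}\right)$, where the key simplification is that the cross term $p_t/\sqrt{p_t}$ collapses to $\sqrt{p_t}$. Subtracting $V_0$ cancels the $\sqrt{p_a}$ contributions and leaves $P\&L = L\left[2(\sqrt{p_t} - \sqrt{p_0}) + \frac{p_0 - p_t}{\sqrt{p_b}}\right]$, which is the first branch of Eq \eqref{eq:pnl_funded_V3}.

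The two boundary regimes are even simpler. For $p_t \le p_a$ the position is entirely in token 1, so from \eqref{eq:V34a} we have $y_t = 0$ and $V_t = p_t L\left(\frac{1}{\sqrt{p_a}} - \frac{1}{\sqrt{p_b}}\right)$; subtracting $V_0$ and regrouping the constant terms yields the second branch. For $p_t \ge p_b$ the position is entirely in token 2, so from \eqref{eq:V34b} we have $x_t = 0$ and $V_t = L\left(\sqrt{p_b} - \sqrt{p_a}\right)$ is price-independent; subtracting $V_0$ again cancels $\sqrt{p_a}$ and gives the flat third branch, matching the intuition that once $p_t$ exceeds $p_b$ the LP has been fully converted to stable tokens and its USDT value no longer moves with $p_t$.

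Finally, the nominal IL formula \eqref{eq:il_funded_V3} is immediate: by Definition \ref{def:il} the nominal IL is the funded P\&L divided by the initial staked value $V_0 = p_0 x_0 + y_0$, and substituting \eqref{eq:V3p} for the denominator produces the claimed ratio. There is no genuine obstacle here, as the statement reduces to a direct case-by-case substitution; the only point demanding care is the bookkeeping of which branch of the Lemma applies to $p_t$ while the initial value is held fixed at its in-range expression. A useful sanity check, which I would note, is that the three P\&L branches agree at $p_t = p_a$ and $p_t = p_b$, confirming that the payoff is continuous across the regime boundaries.
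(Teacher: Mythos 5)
Your proposal is correct and follows essentially the same route as the paper's proof in Appendix \ref{ap_prop:pnl_funded_V3}: substitute the three-branch solution for $(x_t,y_t)$ from the Lemma into $P\&L = (p_t x_t + y_t) - (p_0 x_0 + y_0)$ with the initial value held at its in-range expression, simplify case by case, and divide by $V_0$ for the nominal IL. The continuity check at $p_t = p_a$ and $p_t = p_b$ is a nice addition not present in the paper but changes nothing substantive.
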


\begin{corollary}[The payoff of IL protection claim for funded LP]
The payoff of IL protection claim at maturity date $T$ for funded LP in Eq %
\eqref{eq:il_funded_p} is given by the following compact formula 
\begin{equation}  \label{eq:il_funded_V3a}
Payoff^{funded} (p_{t}) = - \frac{ \frac{p_{t}}{\sqrt{f(p_{t}; p_{a}, p_{b})}%
} +\sqrt{f(p_{t}; p_{a}, p_{b})} - \frac{p_{t}}{\sqrt{p_{b}}} -\sqrt{p_{a}}%
} { \frac{p_{0}}{\sqrt{f(p_{0}; p_{a}, p_{b})}} +\sqrt{f(p_{0}; p_{a}, p_{b})%
} - \frac{p_{0}}{\sqrt{p_{b}}} -\sqrt{p_{a}} } + 1
\end{equation}
where $f(x; p_{a}, p_{b})=\max\left(\min\left(x, p_{b}\right), p_{a}\right)$.
\end{corollary}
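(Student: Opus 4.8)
The plan is to show that the compact fraction is nothing but $1 - V_t/V_0$, where $V_t = p_t x_t + y_t$ denotes the LP value at price $p_t$. Since $IL\ funded^{(y)}(p_t) = (V_t - V_0)/V_0$ by Eq~\eqref{eq:il_funded} and $Payoff^{funded} = -IL\ funded^{(y)}$ by Eq~\eqref{eq:il_funded_p}, this identification delivers the stated formula at once. The engine of the argument is the observation that the clip $f$ packages the three-regime behaviour of the Uniswap V3 position into a single algebraic expression, so the piecewise P\&L of Proposition~\ref{prop:pnl_funded_V3} becomes closed form.

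Concretely, I would introduce the numerator map
\begin{equation*}
h(p) = \frac{p}{\sqrt{f(p; p_a, p_b)}} + \sqrt{f(p; p_a, p_b)} - \frac{p}{\sqrt{p_b}} - \sqrt{p_a},
\end{equation*}
and prove the key claim $h(p) = V(p)/L$, with $x(p), y(p)$ supplied by the Lemma. The verification splits into exactly the three cases of the Lemma. For $p \in (p_a, p_b)$ one has $f(p) = p$, so the first two terms collapse to $2\sqrt{p}$ and $h(p) = 2\sqrt{p} - p/\sqrt{p_b} - \sqrt{p_a}$, matching $V(p)/L$ from Eq~\eqref{eq:V34}. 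For $p \leq p_a$ one has $f(p) = p_a$, the $\sqrt{p_a}$ terms cancel, and $h(p) = p\bigl(1/\sqrt{p_a} - 1/\sqrt{p_b}\bigr)$, matching the fully-token-1 value from Eq~\eqref{eq:V34a}. For $p \geq p_b$ one has $f(p) = p_b$, the $p/\sqrt{p_b}$ terms cancel, and $h(p) = \sqrt{p_b} - \sqrt{p_a}$, matching the fully-token-2 value from Eq~\eqref{eq:V34b}. Continuity of $f$ guarantees the three branches of $h$ agree at the endpoints $p_a, p_b$, so no ambiguity arises there.

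Evaluating the same identity at $p_0 \in (p_a, p_b)$ supplies the denominator: $h(p_0) = 2\sqrt{p_0} - p_0/\sqrt{p_b} - \sqrt{p_a} = V_0/L$, which is precisely Eq~\eqref{eq:V3p}. Assembling everything then gives
\begin{equation*}
- \frac{h(p_t)}{h(p_0)} + 1 = 1 - \frac{V_t}{V_0} = - \frac{V_t - V_0}{V_0} = - IL\ funded^{(y)}(p_t) = Payoff^{funded}(p_t),
\end{equation*}
as required.

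The only genuine work is the case-by-case algebraic collapse of $p/\sqrt{f} + \sqrt{f}$, which is routine. The one subtlety worth flagging is notational rather than analytic: one must check that the factor $L$ appearing implicitly in numerator and denominator is the same single pool liquidity of Eq~\eqref{eq:V3n2}, so that $L$ cancels and the payoff is genuinely notional-independent. There is no real obstacle; the corollary is a rewriting of the piecewise funded P\&L into a uniform closed form through the clip $f$.
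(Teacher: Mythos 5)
Your proposal is correct and follows essentially the same route as the paper: the paper's cited appendix verifies the three-regime formula by substituting the Lemma's reserves $x(p),y(p)$ into $p\,x+y$, exactly the case analysis you perform, and the compact form then follows from $Payoff^{funded}=-\left(V_t-V_0\right)/V_0$ with $h(p)=V(p)/L$ and $h(p_0)=V_0/L$. Your only reorganization is to name the clipped numerator $h$ and identify it with $V(p)/L$ directly rather than first writing the piecewise P\&L, which is a harmless (and slightly cleaner) repackaging of the same computation.
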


\begin{proof}
See Appendix \ref{ap_prop:pnl_funded_V3}.
\end{proof}

\begin{proposition}[P\&L of borrowed LP]
\label{prop:pnl_borrowed_V3}

The P\&L of the borrowed position in Eq (\ref{eq:pnl_borrowed}) is given by 
\begin{equation}  \label{eq:pnl_borrowed_V3}
P\&L\ borrowed^{(y)}(p_{t}) = 
\begin{cases}
- L \sqrt{p_{0}} \left(\sqrt{\frac{p_{t}}{p_{0}}}-1\right)^{2} \quad & \,
p_{t} \in (p_{a}, p_{b}) \\ 
L \left[ p_{t} \left(\frac{1}{\sqrt{p_{a}}} - \frac{1}{\sqrt{p_{0}}}\right)
- \left(\sqrt{p_{0}} - \sqrt{p_{a}}\right) \right] \quad & p_{t} \leq p_{a}
\\ 
L\left[ \left(\sqrt{p_{b}} - \sqrt{p_{0}}\right)- p_{t} \left(\frac{1}{\sqrt{%
p_{0}}} - \frac{1}{\sqrt{p_{b}}}\right) \right] \quad & p_{t} \geq p_{b} \\ 
& 
\end{cases}%
\end{equation}

Nominal borrowed impermanent loss in Eq (\ref{eq:il_borrowed_nom}) is given
by 
\begin{equation}  \label{eq:il_borrowed_nom_V3}
\begin{split}
&Nom \ IL \ borrowed^{(y)}(p_{t}) = \frac{P\&L \ borrowed^{(y)}(p_{t})}{L
\left( 2\sqrt{p_{0}} - \frac{p_{0}}{\sqrt{p_{b}}} - \sqrt{p_{a}} \right)}
\end{split}%
\end{equation}
\end{proposition}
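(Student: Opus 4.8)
The plan is to compute the borrowed P\&L directly from its definition in Eq~\eqref{eq:pnl_borrowed}, namely $P\&L\ borrowed^{(y)}(p_{t}) = (p_{t}x_{t}+y_{t}) - (p_{t}x_{0}+y_{0})$, by substituting the explicit LP unit formulas from the Lemma on the solution for $x$ and $y$. The key organising observation is that the subtracted leg $p_{t}x_{0}+y_{0}$ is the value of the \emph{initial} stake marked at the current price; since the position is initialised at $p_{0}\in(p_{a},p_{b})$, the initial units are fixed at $x_{0}=L\left(1/\sqrt{p_{0}}-1/\sqrt{p_{b}}\right)$ and $y_{0}=L\left(\sqrt{p_{0}}-\sqrt{p_{a}}\right)$, so this leg is an affine function of $p_{t}$, namely $p_{t}x_{0}+y_{0}=L\left(p_{t}/\sqrt{p_{0}}-p_{t}/\sqrt{p_{b}}+\sqrt{p_{0}}-\sqrt{p_{a}}\right)$. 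First I would record this expression once, since it enters unchanged in all three cases.

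Next I would split into the three price regimes of the Lemma and evaluate the current LP value $p_{t}x_{t}+y_{t}$ in each. For $p_{t}\in(p_{a},p_{b})$, substituting Eq~\eqref{eq:V34} gives $p_{t}x_{t}+y_{t}=L\left(2\sqrt{p_{t}}-p_{t}/\sqrt{p_{b}}-\sqrt{p_{a}}\right)$; subtracting the affine leg cancels both the $p_{t}/\sqrt{p_{b}}$ and the $\sqrt{p_{a}}$ terms and leaves $L\left(2\sqrt{p_{t}}-p_{t}/\sqrt{p_{0}}-\sqrt{p_{0}}\right)$, which I would then factor as the perfect square $-L\sqrt{p_{0}}\left(\sqrt{p_{t}/p_{0}}-1\right)^{2}$. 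For $p_{t}\le p_{a}$, the position is fully in token~1 and Eq~\eqref{eq:V34a} gives $p_{t}x_{t}+y_{t}=Lp_{t}\left(1/\sqrt{p_{a}}-1/\sqrt{p_{b}}\right)$; here only the $p_{t}/\sqrt{p_{b}}$ terms cancel against the affine leg, yielding the claimed $L\left[p_{t}\left(1/\sqrt{p_{a}}-1/\sqrt{p_{0}}\right)-\left(\sqrt{p_{0}}-\sqrt{p_{a}}\right)\right]$. For $p_{t}\ge p_{b}$, the position is fully in token~2 and Eq~\eqref{eq:V34b} gives $p_{t}x_{t}+y_{t}=L\left(\sqrt{p_{b}}-\sqrt{p_{a}}\right)$; now only the $\sqrt{p_{a}}$ terms cancel, giving $L\left[\left(\sqrt{p_{b}}-\sqrt{p_{0}}\right)-p_{t}\left(1/\sqrt{p_{0}}-1/\sqrt{p_{b}}\right)\right]$.

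Finally, the nominal borrowed IL in Eq~\eqref{eq:il_borrowed_nom_V3} follows immediately by dividing each branch by the initial notional $V_{0}=N^{y}=L\left(2\sqrt{p_{0}}-p_{0}/\sqrt{p_{b}}-\sqrt{p_{a}}\right)$ from Eq~\eqref{eq:V3p}, so nothing remains beyond this normalisation.

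Since this is essentially a bookkeeping exercise, there is no deep obstacle; the only care needed is in tracking which terms survive the cancellation in each regime, and in particular verifying that the interior branch genuinely collapses to the perfect square $-L\sqrt{p_{0}}\left(\sqrt{p_{t}/p_{0}}-1\right)^{2}$ rather than merely a quadratic in $\sqrt{p_{t}}$. As a consistency check I would confirm continuity of the three branches at the boundaries $p_{t}=p_{a}$ and $p_{t}=p_{b}$, since any sign slip in an affine-leg cancellation would manifest as a spurious jump there.
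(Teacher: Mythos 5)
Your proposal is correct and follows essentially the same route as the paper's own proof: substitute the LP unit formulas from the Lemma into the definition of the borrowed P\&L, simplify in each of the three price regimes, and factor the interior branch into the perfect square $-L\sqrt{p_{0}}\left(\sqrt{p_{t}/p_{0}}-1\right)^{2}$. The only cosmetic difference is that you record the affine leg $p_{t}x_{0}+y_{0}$ once up front, whereas the paper carries it through each case explicitly.
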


\begin{proof}
See Appendix \ref{ap_prop:pnl_borrowed_V3}.
\end{proof}

\begin{corollary}[Payoff of IL protection claim for borrowed LP]
The payoff of IL protection claim against for funded LP in Eq %
\eqref{eq:il_funded_p} is given by the following compact formula 
\begin{equation}  \label{eq:il_borrowed_nom_V3a}
Payoff^{borrowed}(p_{t}) = \frac{ \frac{p_{t}}{\sqrt{f(p_{t}; p_{a}, p_{b})}}
+\sqrt{f(p_{t}; p_{a}, p_{b})} - \frac{p_{t}}{\sqrt{f(p_{0}; p_{a}, p_{b})}}
-\sqrt{f(p_{0}; p_{a}, p_{b})}} { \frac{p_{0}}{\sqrt{f(p_{0}; p_{a}, p_{b})}}
+\sqrt{f(p_{0}; p_{a}, p_{b})} - \frac{p_{0}}{\sqrt{p_{b}}} -\sqrt{p_{a}} }
\end{equation}
where $f(x; p_{a}, p_{b})=\max\left(\min\left(x, p_{b}\right), p_{a}\right)$.
\end{corollary}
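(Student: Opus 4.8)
The plan is to reduce the statement to the P\&L expression already established in Proposition \ref{prop:pnl_borrowed_V3}. By the definition in Eq \eqref{eq:il_borrowed_p}, $Payoff^{borrowed}(p_t)=-IL\ borrowed^{(y)}(p_t)$, and by Eq \eqref{eq:il_borrowed_nom_V3} this equals $-P\&L\ borrowed^{(y)}(p_t)$ divided by $V_0=L\left(2\sqrt{p_0}-p_0/\sqrt{p_b}-\sqrt{p_a}\right)$. Thus the whole task is to show that the three-branch expression of Eq \eqref{eq:pnl_borrowed_V3}, divided by $V_0$, collapses to the single clamp-based quotient of Eq \eqref{eq:il_borrowed_nom_V3a}, with the regime boundaries $p_a,p_b$ folded into $f(x;p_a,p_b)=\max(\min(x,p_b),p_a)$.

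The key device is a single closed form for the two value legs, valid across all three regimes. Writing the current reserves from Eqs \eqref{eq:V34}, \eqref{eq:V34a} and \eqref{eq:V34b} uniformly as $x_t=L\left(1/\sqrt{f(p_t)}-1/\sqrt{p_b}\right)$ and $y_t=L\left(\sqrt{f(p_t)}-\sqrt{p_a}\right)$, the staked value becomes $S(p_t):=p_tx_t+y_t=L\left(p_t/\sqrt{f(p_t)}+\sqrt{f(p_t)}-p_t/\sqrt{p_b}-\sqrt{p_a}\right)$. The same formula with $p_0$ inside $f$ gives the initial reserves, hence the buy-and-hold leg $B(p_t):=p_tx_0+y_0=L\left(p_t/\sqrt{f(p_0)}+\sqrt{f(p_0)}-p_t/\sqrt{p_b}-\sqrt{p_a}\right)$ and, at $p_t=p_0$, the initial value $V_0=L\left(p_0/\sqrt{f(p_0)}+\sqrt{f(p_0)}-p_0/\sqrt{p_b}-\sqrt{p_a}\right)$, which reproduces both Eq \eqref{eq:V3p} and the denominator of Eq \eqref{eq:il_borrowed_nom_V3a}.

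Assembling the payoff by substituting the two legs and $V_0$ into $-P\&L\ borrowed^{(y)}(p_t)/V_0$, the common factor $L$ cancels and the shared terms $-p_t/\sqrt{p_b}-\sqrt{p_a}$ cancel between $S(p_t)$ and $B(p_t)$, so that only the $f(p_t)$-block and the $f(p_0)$-block remain in the numerator, over the $V_0/L$ block in the denominator, yielding the compact quotient of Eq \eqref{eq:il_borrowed_nom_V3a}. I would close by cross-checking this quotient against each branch of Eq \eqref{eq:pnl_borrowed_V3}: on $(p_a,p_b)$ its numerator should reduce to a multiple of $\left(\sqrt{p_t/p_0}-1\right)^2$, and for $p_t\le p_a$ and $p_t\ge p_b$ to the two affine branches, confirming that the clamp reproduces the boundary behaviour. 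The main obstacle is the sign bookkeeping between the staked and buy-and-hold legs together with the verification that $f$ faithfully captures the switching at $p_a$ and $p_b$ and the placement of $p_0$; fixing the orientation of the leg difference is precisely what pins down the numerator ordering of Eq \eqref{eq:il_borrowed_nom_V3a}.
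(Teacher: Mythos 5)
Your reduction is, in substance, the intended one: the paper offers no separate proof of this corollary because it is just Proposition \ref{prop:pnl_borrowed_V3} combined with the definitions in Eqs \eqref{eq:il_borrowed_p} and \eqref{eq:il_borrowed_nom_V3}, and your clamp-based unification of the reserves, $x_{t}=L\bigl(1/\sqrt{f(p_{t})}-1/\sqrt{p_{b}}\bigr)$, $y_{t}=L\bigl(\sqrt{f(p_{t})}-\sqrt{p_{a}}\bigr)$, correctly collapses Eqs \eqref{eq:V34}, \eqref{eq:V34a}, \eqref{eq:V34b} into one formula, so that $S(p_{t})=p_{t}x_{t}+y_{t}$ and $B(p_{t})=p_{t}x_{0}+y_{0}$ have the closed forms you state, the terms $-p_{t}/\sqrt{p_{b}}-\sqrt{p_{a}}$ cancel in the difference, and $S(p_{0})$ reproduces $V_{0}$ of Eq \eqref{eq:V3p}. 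Your branch checks are also correct. (You were also right to read the corollary's reference as \eqref{eq:il_borrowed_p}; the ``funded LP in Eq \eqref{eq:il_funded_p}'' in the statement is a typo.)

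The genuine issue is the sign, which you flag as ``bookkeeping'' but never resolve --- and it cannot be resolved in favour of the displayed formula. Carried out literally, your pipeline gives
\begin{equation*}
Payoff^{borrowed}(p_{t})=-\frac{S(p_{t})-B(p_{t})}{V_{0}}
=\frac{\frac{p_{t}}{\sqrt{f(p_{0})}}+\sqrt{f(p_{0})}-\frac{p_{t}}{\sqrt{f(p_{t})}}-\sqrt{f(p_{t})}}{\frac{p_{0}}{\sqrt{f(p_{0})}}+\sqrt{f(p_{0})}-\frac{p_{0}}{\sqrt{p_{b}}}-\sqrt{p_{a}}},
\end{equation*}
i.e.\ the \emph{negative} of Eq \eqref{eq:il_borrowed_nom_V3a}: the quotient as printed equals $(S-B)/V_{0}=IL\ borrowed^{(y)}(p_{t})$ itself, not its negative. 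A sanity check settles which sign is right: for $p_{a}<p_{t},p_{0}<p_{b}$ the printed numerator reduces to $-\bigl(\sqrt{p_{t}}-\sqrt{p_{0}}\bigr)^{2}/\sqrt{p_{0}}\leq 0$, whereas the protection payoff must be nonnegative --- compare the Uniswap V2 analogue $Payoff^{borrowed}=\frac{1}{2}\bigl(\sqrt{p_{t}/p_{0}}-1\bigr)^{2}$ in Eq \eqref{eq:il_payoff_v2}, and note that the funded corollary \eqref{eq:il_funded_V3a} does carry the overall minus sign (and the $+1$). So the corollary as printed contains a sign typo, and your write-up, by asserting simultaneously that the payoff is $-P\&L/V_{0}$ and that this ``yields'' the printed quotient, is internally inconsistent at exactly the step you defer. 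Commit to the orientation --- numerator $\frac{p_{t}}{\sqrt{f(p_{0})}}+\sqrt{f(p_{0})}-\frac{p_{t}}{\sqrt{f(p_{t})}}-\sqrt{f(p_{t})}$ --- and state that the displayed equation holds up to an overall sign; with that correction the rest of your argument is a complete proof.
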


In subplot (A) of Figure (\ref{V3il}), we show ETH units (left y-axis) and
USDT units (right y-axis) for LP on Uniswap V3 with $1m$ USDT notional and $%
p_{0}=2000$, $p_{a}=1500$, $p_{b}=2500$. The initial LP units of (ETH, USDT)
are $(220, 559282)$. The red bar at $p=1500$ shows LP units of (543, 0) with
LP fully in ETH units when price falls below lower threshold $p_{a}$. The
green bar at $p=2500$ shows corresponding LP units of $(0, 1052020)$ with LP
fully in USDT units when price rises above upper threshold $p_{b}$. In
subplot (B), we show USDT values of 50\%/50\% ETH/USDT portfolio, Funded LP
positions and Borrowed LP positions. The value profile of funded LP
resembles the profile of a covered call option (long ETH and short
out-of-the-money call). The value of the borrowed LP resembles the payoff of
a short straddle (short both at-the-money call and put).

\begin{figure}[]
\begin{center}
\includegraphics[width=1.0\textwidth, angle=0]
{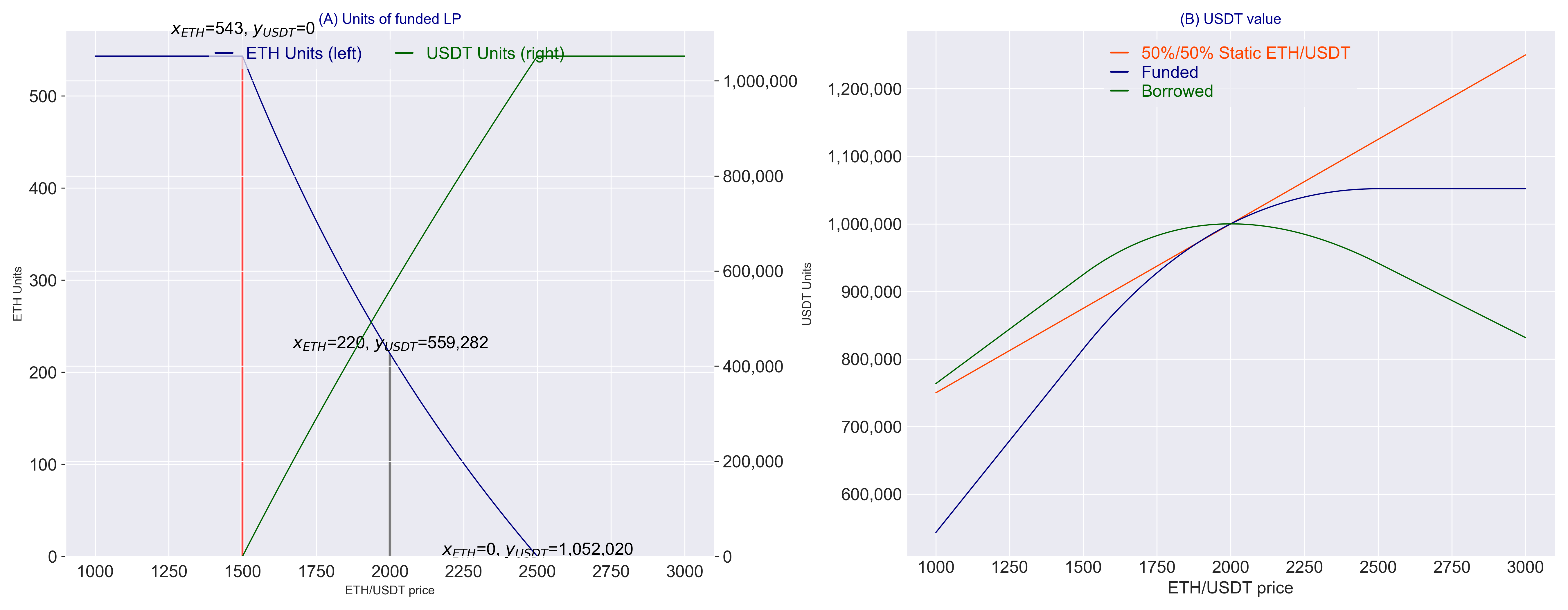}\vspace*{-1.\baselineskip}
\end{center}
\caption{(A) ETH units (left y-axis) and USDT units (right y-axis) for LP on
Uniswap V3. (B) USDT value of 50\%/50\% ETH/USDT portfolio, Funded LP
position and Borrowed LP position. Uniswap V3 LP position is constructed
using $1m$ USDT notional with $p_{0}=2000$, $p_{a}=1500$, $p_{0}=2500$.}
\label{V3il}
\end{figure}

In Figure (\ref{lp_pnl_ranges}) we show P\&L profiles of borrowed and funded
LPs as functions of ranges for Uniswap V3 and full range for Uniswap V2. For
borrowed LPs, narrow ranges result in higher losses at same price levels.
For funded LPs, narrower ranges result in higher downside losses and smaller
upside potential. 
\begin{figure}[]
\begin{center}
\includegraphics[width=0.8\textwidth, angle=0]
{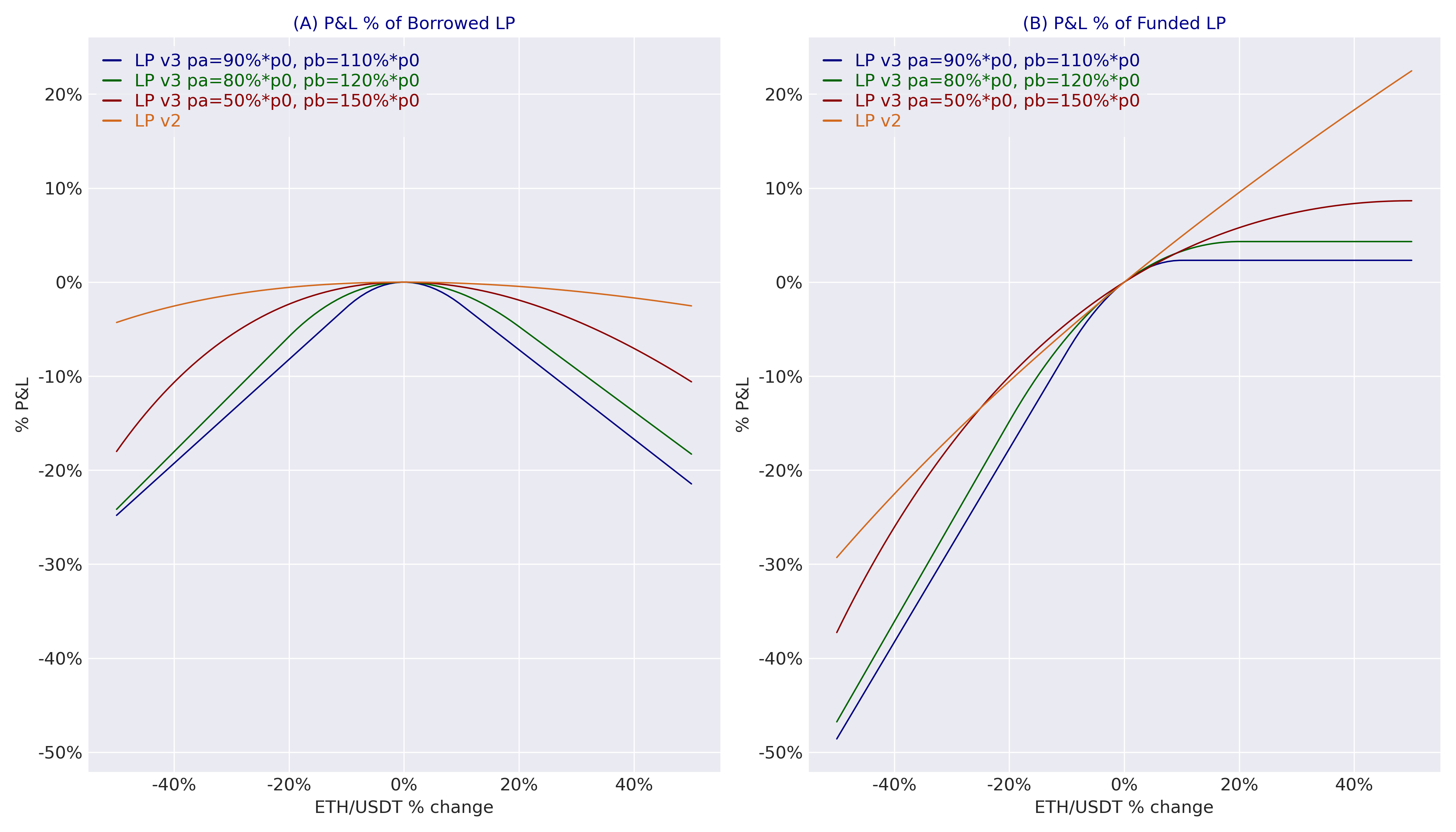}\vspace*{-1.\baselineskip}
\end{center}
\caption{P\&L $\%$ of hedged and funded LPs as functions of ranges for
Uniswap V3 and full range for Uniswap V2. (A) P\&L $\%$ of borrowed LP; (B)
P\&L $\%$ of funded LP. LP positions are constructed using $1m$ USDT
notional with $p_{0}=2000$.}
\label{lp_pnl_ranges}
\end{figure}

\subsection{Decomposition of the IL under Uniswap V3 into Simple Payoffs}

\label{sec:ild}

First we derive the model independent decomposition of IL into payoffs of
vanilla and ``exotic'' options. We will further apply this decompositions
for model-based valuation of IL protection claims in Uniswap V3.

\begin{proposition}[Decomposition of IL for Funded LP]
\label{prop:il_dec_funded1} IL of funded LP can be decomposed into the three
parts as follows: 
\begin{equation}  \label{eq:il_dec_funded1}
IL \ funded^{(y)}(p_{t}) = u_{0}(p_{t}) + u_{1/2}(p_{t}) + u_{1}(p_{t}),
\end{equation}
where $u_{0}(p_{t})$, $u_{1/2}(p_{t})$, and $u_{1}(p_{t})$ are linear part,
(exotic) square root price, and (vanilla) option part defined as follows 
\begin{equation}  \label{eq:il_dec_funded2}
\begin{split}
& u_{0}(p_{t}) = - \frac{1}{\sqrt{p_{b}}} p_{t} + \left(\frac{p_{0}}{\sqrt{%
p_{b}}}-2\sqrt{p_{0}}\right) \\
& u_{1/2}(p_{t}) = \sqrt{p_{t}} \mathbbm{1}\left\{ p_{a} < p_{t} <
p_{b}\right\} \\
& u_{1}(p_{t}) = -\frac{1}{\sqrt{p_{a}}} \max\left\{ p_{a}-p_{t}, 0\right\}
+ \frac{1}{\sqrt{p_{b}}} \max\left\{ p_{t}-p_{b}, 0\right\} + 2\sqrt{p_{a}} %
\mathbbm{1}\left\{p_{t}\leq p_{a}\right\} + 2\sqrt{p_{b}}\mathbbm{1}%
\left\{p_{t} \geq p_{b}\right\}.
\end{split}%
\end{equation}
\end{proposition}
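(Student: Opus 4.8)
The plan is to establish the identity by collapsing the three-branch formula for the funded P\&L of Proposition~\ref{prop:pnl_funded_V3} into a single expression valid for all $p_t\in(0,\infty)$, and then reading off the three summands according to the type of payoff each term represents. The starting point is the observation that the LP holdings of the Lemma (eqs~\eqref{eq:V34}--\eqref{eq:V34b}) admit a uniform representation via the clamping function $f(p_t;p_a,p_b)=\max\bigl(\min(p_t,p_b),p_a\bigr)$, namely $x_t = L\bigl(1/\sqrt{f}-1/\sqrt{p_b}\bigr)$ and $y_t = L\bigl(\sqrt{f}-\sqrt{p_a}\bigr)$, which reduces to the correct branch in each of the three regions $p_t\le p_a$, $p_a<p_t<p_b$, $p_t\ge p_b$. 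Consequently $V_t^{(y)}/L = p_t/\sqrt{f}+\sqrt{f}-p_t/\sqrt{p_b}-\sqrt{p_a}$, and the funded IL of eq~\eqref{eq:il_funded_V3} is this quantity shifted by $V_0^{(y)}/L$ (eq~\eqref{eq:V3p}) and divided by the constant initial notional; since that normalization is a mere scalar, the real content is the decomposition of the price-dependent profile.

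First I would expand the two clamped quantities over the three regions using indicators, writing $\sqrt{f}=\sqrt{p_a}\,\mathbbm{1}\{p_t\le p_a\}+\sqrt{p_t}\,\mathbbm{1}\{p_a<p_t<p_b\}+\sqrt{p_b}\,\mathbbm{1}\{p_t\ge p_b\}$ and the analogous expansion for $p_t/\sqrt{f}$. Next I would regroup the resulting terms by payoff class. The genuinely linear contribution $-p_t/\sqrt{p_b}$ together with the price-independent constants collapses into the bond/linear part $u_0(p_t)$; the terms proportional to $\sqrt{p_t}$ that survive only on the open range $(p_a,p_b)$ form the exotic square-root part $u_{1/2}(p_t)$; and the remaining contributions, which are linear in $p_t$ but supported only on $\{p_t\le p_a\}$ or $\{p_t\ge p_b\}$, are to be converted into vanilla payoffs. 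Concretely, the piece $p_t/\sqrt{p_a}$ on $\{p_t\le p_a\}$ is rewritten as $-\max\{p_a-p_t,0\}/\sqrt{p_a}$ plus a constant cash-or-nothing jump, and the piece $p_t/\sqrt{p_b}$ on $\{p_t\ge p_b\}$ as $+\max\{p_t-p_b,0\}/\sqrt{p_b}$ plus a constant jump; gathering these hockey-stick and digital terms yields $u_1(p_t)$, with the indicator coefficients $2\sqrt{p_a}$ and $2\sqrt{p_b}$ absorbing the level shifts.

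The step I expect to be the main obstacle is precisely this boundary bookkeeping: translating the indicator-weighted linear terms at the two edges into the standard put/call payoffs $\max\{p_a-p_t,0\}$ and $\max\{p_t-p_b,0\}$ plus the digital corrections, while guaranteeing that the reassembled function stays continuous across $p_a$ and $p_b$, as $V_t^{(y)}$ must be. I would therefore verify the identity region by region and, as an independent consistency check, confirm value-matching at the two kinks $p_t=p_a$ and $p_t=p_b$; this pins down the digital coefficients and certifies that $u_0+u_{1/2}+u_1$ reproduces the piecewise IL of eq~\eqref{eq:pnl_funded_V3}. The remaining manipulations are routine algebraic simplifications of the square-root expressions.
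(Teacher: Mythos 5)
Your proposal is correct and follows essentially the same route as the paper's own proof: the paper also expands the piecewise P\&L region by region with indicators, converts the indicator-weighted linear terms at $p_a$ and $p_b$ into put/call payoffs plus digital corrections, and collects the range-restricted $\sqrt{p_t}$ term as the exotic part; your uniform clamped representation $V_t^{(y)}/L=p_t/\sqrt{f}+\sqrt{f}-p_t/\sqrt{p_b}-\sqrt{p_a}$ is just a compact repackaging of the paper's branch-by-branch extension. Note that your value-matching check at the kinks would (like the paper's own appendix computation, which ends with $2u_{1/2}$) show that the square-root term must enter with coefficient $2$ to reproduce $2(\sqrt{p_t}-\sqrt{p_0})+(p_0-p_t)/\sqrt{p_b}$ on the range, so the coefficient of $\sqrt{p_t}\,\mathbbm{1}\{p_a<p_t<p_b\}$ in the stated $u_{1/2}$ is off by a factor of two rather than anything being wrong with your argument.
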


\begin{proposition}[Decomposition of IL for Borrowed LP]
\label{prop:il_dec_borrowed1} IL of borrowed LP can be decomposed into the
three parts as follows: 
\begin{equation}  \label{eq:il_dec_borrowed1}
IL \ borrowed^{(y)}(p_{t}) = u_{0}(p_{t}) + u_{1/2}(p_{t}) + u_{1}(p_{t}),
\end{equation}
where $u_{0}(p_{t})$, $u_{1/2}(p_{t})$, and $u_{1}(p_{t})$ are linear part,
(exotic) square root price, and (vanilla) option part defined as follows 
\begin{equation}  \label{eq:il_dec_borrowed2}
\begin{split}
& u_{0}(p_{t}) = - \frac{1}{\sqrt{p_{0}}} p_{t} - \sqrt{p_{0}} \\
& u_{1/2}(p_{t}) = \sqrt{p_{t}} \mathbbm{1}\left\{ p_{a} < p_{t} <
p_{b}\right\} \\
& u_{1}(p_{t}) = -\frac{1}{\sqrt{p_{a}}} \max\left\{ p_{a}-p_{t}, 0\right\}
+ \frac{1}{\sqrt{p_{b}}} \max\left\{ p_{t}-p_{b}, 0\right\} + 2\sqrt{p_{a}} %
\mathbbm{1}\left\{p_{t}\leq p_{a}\right\} + 2\sqrt{p_{b}}\mathbbm{1}%
\left\{p_{t} \geq p_{b}\right\}.
\end{split}%
\end{equation}
\end{proposition}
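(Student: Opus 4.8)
The plan is to deduce the borrowed decomposition from the funded decomposition of Proposition~\ref{prop:il_dec_funded1}, which I treat as already established, by exploiting that the borrowed and funded positions differ only by a \emph{static linear hedge}. First I would subtract the two P\&L definitions \eqref{eq:pnl_funded} and \eqref{eq:pnl_borrowed}: since both contain the same current value $p_t x_t + y_t$ and the initial stake $x_0$ is fixed once and for all, their difference collapses to
\begin{equation*}
P\&L\ borrowed^{(y)}(p_t) - P\&L\ funded^{(y)}(p_t) = (p_0 - p_t)\,x_0 .
\end{equation*}
This is an affine function of $p_t$ on the whole price axis: it carries no $\sqrt{p_t}$ content and no kinks at $p_a$ or $p_b$, precisely because the short hedge of $x_0$ units has a globally linear payoff.

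The decisive consequence is that, in passing from the funded to the borrowed IL, the square-root part $u_{1/2}$ and the option part $u_1$ are inherited verbatim, and only the linear part $u_0$ is shifted by this hedge term. Concretely I would substitute the initial stake $x_0 = L\!\left(\tfrac{1}{\sqrt{p_0}}-\tfrac{1}{\sqrt{p_b}}\right)$ from \eqref{eq:V34} (legitimate since $p_0\in(p_a,p_b)$), normalise by $L$ consistently with the decomposition, and verify the single affine identity
\begin{equation*}
u_0^{\mathrm{funded}}(p_t) + (p_0-p_t)\!\left(\tfrac{1}{\sqrt{p_0}}-\tfrac{1}{\sqrt{p_b}}\right) = -\tfrac{1}{\sqrt{p_0}}\,p_t - \sqrt{p_0} = u_0^{\mathrm{borrowed}}(p_t).
\end{equation*}
Combined with Proposition~\ref{prop:il_dec_funded1}, this yields \eqref{eq:il_dec_borrowed1}--\eqref{eq:il_dec_borrowed2} immediately, with $u_{1/2}$ and $u_1$ left unchanged.

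As an independent route (the one I would follow if I wished to avoid citing the funded result), I would verify the decomposition directly and piecewise, feeding in the closed-form borrowed P\&L of Proposition~\ref{prop:pnl_borrowed_V3}. On the middle range $(p_a,p_b)$ one expands $-\sqrt{p_0}\bigl(\sqrt{p_t/p_0}-1\bigr)^2$ into a square-root term, a linear term and a constant, and matches the square-root term against $u_{1/2}$ and the affine remainder against $u_0$ (with $u_1=0$ there); on $p_t\le p_a$ and on $p_t\ge p_b$ the square-root part switches off and the $\max$-payoffs together with the digital terms in $u_1$ must reproduce the two affine branches.

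The hard part is not conceptual but the boundary bookkeeping in the two tail regions for this direct route: one must confirm that the four constituents of $u_1$, namely the put-type payoff $-\tfrac{1}{\sqrt{p_a}}\max\{p_a-p_t,0\}$, the call-type payoff $\tfrac{1}{\sqrt{p_b}}\max\{p_t-p_b,0\}$, and the two digital terms $2\sqrt{p_a}\,\mathbbm{1}\{p_t\le p_a\}$ and $2\sqrt{p_b}\,\mathbbm{1}\{p_t\ge p_b\}$, conspire to replace the in-range square-root profile by exactly the correct linear tails while keeping the glued payoff continuous at $p_a$ and $p_b$. The reduction argument of the first two paragraphs sidesteps this entirely, inheriting the region structure wholesale from the funded case and reducing the whole proof to the one-line linear identity above.
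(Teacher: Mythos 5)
Your proposal is correct, and your primary route is genuinely different from the paper's. The paper proves this proposition by the direct piecewise computation you describe as your fallback: in Appendix \ref{ap_prop:il_dec_borrowed1} it takes the closed-form borrowed P\&L of Proposition \ref{prop:pnl_borrowed_V3}, splits it into Range, Put and Call pieces, extends each to $(0,\infty)$ with indicator functions and collar identities, and sums, carrying out exactly the tail bookkeeping you flag as the tedious part. Your reduction instead observes that
\begin{equation*}
P\&L\ borrowed^{(y)}(p_{t})-P\&L\ funded^{(y)}(p_{t})=(p_{0}-p_{t})\,x_{0},
\end{equation*}
a globally affine function with no kinks at $p_{a},p_{b}$ and no square-root content, so the $u_{1/2}$ and $u_{1}$ pieces of Proposition \ref{prop:il_dec_funded1} carry over unchanged and only $u_{0}$ shifts; with $x_{0}=L\left(\tfrac{1}{\sqrt{p_{0}}}-\tfrac{1}{\sqrt{p_{b}}}\right)$ your one-line identity does check out algebraically and yields precisely $u_{0}^{borrowed}(p_{t})=-p_{t}/\sqrt{p_{0}}-\sqrt{p_{0}}$. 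This is shorter, avoids repeating the boundary analysis, and makes transparent the remark the paper itself makes after stating both propositions, namely that the two decompositions differ only in the linear term; its only cost is that it presupposes the funded decomposition (which the paper proves independently in Appendix \ref{ap_prop:il_dec_funded1}, so the dependence is legitimate and non-circular). Two minor caveats, neither fatal: your reduction uses $p_{0}\in(p_{a},p_{b})$ to evaluate $x_{0}$, an assumption the paper makes implicitly throughout; and, like the paper, you are really decomposing $P\&L/L$ rather than the normalized IL of Definition \ref{def:il} --- your phrase ``normalise by $L$ consistently with the decomposition'' inherits the paper's own looseness on this point rather than introducing a new error.
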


\begin{proof}
See Appendix \ref{ap_prop:il_dec_borrowed1}.
\end{proof}

To summarize, we note that the only difference between the decomposition of
IL of funded LP in Eq \eqref{eq:il_dec_funded1} and borrowed LP in Eq %
\eqref{eq:il_dec_borrowed1} is given by the linear term $u_{0}(p_{t})$ with
the square root and option terms being the same. As a result, we can unify
our results for ILs of funded LP and borrowed LP as defined in Eq %
\eqref{eq:il_funded} and \eqref{eq:il_borrowed_nom}, respectively, for
Uniswap V3 for as follows.

\begin{corollary}[Decomposition of IL for Funded and Borrowed LPs]
Using Eq \eqref{eq:il_dec_funded1} for IL of funded LP and Eq %
\eqref{eq:il_dec_borrowed1} for IL of borrowed LP we obtain 
\begin{equation}  \label{eq:il_payoff}
\begin{split}
& IL \ funded^{(y)}(p_{t}) = u^{funded}_{0}(p_{t}) + u_{1/2}(p_{t}) +
u_{1}(p_{t}), \\
& IL \ borrowed^{(y)}(p_{t}) = u^{borrowed}_{0}(p_{t}) + u_{1/2}(p_{t}) +
u_{1}(p_{t}),
\end{split}%
\end{equation}
where 
\begin{equation}  \label{eq:il_payoff2}
\begin{split}
& u^{funded}_{0}(p_{t}) = - \frac{1}{\sqrt{p_{b}}} p_{t} + \left(\frac{p_{0}%
}{\sqrt{p_{b}}}-2\sqrt{p_{0}}\right) \\
& u^{borrowed}_{0}(p_{t}) = - \frac{1}{\sqrt{p_{0}}} p_{t} - \sqrt{p_{0}} \\
& u_{1/2}(p_{t}) = \sqrt{p_{t}} \mathbbm{1}\left\{ p_{a} < p_{t} <
p_{b}\right\} \\
& u_{1}(p_{t}) = -\frac{1}{\sqrt{p_{a}}} \max\left\{ p_{a}-p_{t}, 0\right\}
+ \frac{1}{\sqrt{p_{b}}} \max\left\{ p_{t}-p_{b}, 0\right\} + 2\sqrt{p_{a}} %
\mathbbm{1}\left\{p_{t}\leq p_{a}\right\} + 2\sqrt{p_{b}}\mathbbm{1}%
\left\{p_{t} \geq p_{b}\right\}.
\end{split}%
\end{equation}
\end{corollary}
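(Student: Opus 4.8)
The plan is to derive the statement directly from the two decomposition propositions that precede it, Proposition~\ref{prop:il_dec_funded1} and Proposition~\ref{prop:il_dec_borrowed1}, which already furnish the full three-term expansions \eqref{eq:il_dec_funded1} and \eqref{eq:il_dec_borrowed1}. Since all of the analytic content is carried by those two results, the corollary reduces to a comparison: I would place the funded expansion \eqref{eq:il_dec_funded2} next to the borrowed expansion \eqref{eq:il_dec_borrowed2} and record which components agree and which differ.

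The first step is to read off from \eqref{eq:il_dec_funded2} and \eqref{eq:il_dec_borrowed2} that the square-root term $u_{1/2}(p_t)$ and the vanilla option term $u_1(p_t)$ are written identically in the two propositions, while the linear pieces are not. The second step is to rename the two distinct linear parts as $u^{funded}_0(p_t)$ and $u^{borrowed}_0(p_t)$, recording their explicit forms $-p_t/\sqrt{p_b}+(p_0/\sqrt{p_b}-2\sqrt{p_0})$ and $-p_t/\sqrt{p_0}-\sqrt{p_0}$, respectively. The third step is to substitute the common $u_{1/2}$ and $u_1$ together with these two linear terms back into \eqref{eq:il_dec_funded1} and \eqref{eq:il_dec_borrowed1}; this reproduces exactly the unified display \eqref{eq:il_payoff} with the definitions \eqref{eq:il_payoff2}, and the proof closes.

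I do not expect a genuine obstacle here, because the only thing requiring verification is the asserted coincidence of $u_{1/2}$ and $u_1$ between the two propositions, and this holds by inspection. If one prefers to explain the coincidence rather than merely observe it, a one-line structural argument suffices: both impermanent losses in \eqref{eq:il_funded} and \eqref{eq:il_borrowed_nom} are built from the same position value $V^{(y)}_t=p_t x_t+y_t$, and it is this common $V^{(y)}_t$ that generates the non-linear terms $u_{1/2}$ and $u_1$. The funded and borrowed definitions differ only in the reference subtracted from $V^{(y)}_t$, namely the constant $V^{(y)}_0=p_0 x_0+y_0$ for the funded position versus the price-dependent $p_t x_0+y_0$ for the borrowed one. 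This difference, $(p_t-p_0)x_0$, coincides up to sign with the static hedge P\&L \eqref{eq:pnl4} and is affine in $p_t$; being affine, it is absorbed entirely into the linear component, so it shifts $u_0$ from $u^{funded}_0$ to $u^{borrowed}_0$ while leaving $u_{1/2}$ and $u_1$ untouched, which is precisely the content of the corollary.
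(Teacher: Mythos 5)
Your proposal is correct and follows essentially the same route as the paper, which proves this corollary simply by the observation (stated in the paragraph preceding it) that the two propositions' expansions share $u_{1/2}$ and $u_{1}$ and differ only in the linear term, so the unified display is obtained by renaming the linear parts. Your closing structural remark—that the funded and borrowed ILs differ by $(p_{t}-p_{0})x_{0}$ normalized by the initial value, which is affine in $p_{t}$ and hence absorbed entirely into $u_{0}$—is a nice addition that explains, rather than merely observes, the coincidence of the non-linear terms.
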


\begin{corollary}[Payoff of IL protection claim for Uniswap V3 AMM]
Using definitions in Eq \eqref{eq:il_funded_p} and Eq %
\eqref{eq:il_borrowed_p} for payoffs of IL protection claim of funded and
borrowed LP, respectively, at maturity time $T$ along with decomposition of
IL in Eq \eqref{eq:il_payoff}, we obtain 
\begin{equation}  \label{eq:il_payoff3}
\begin{split}
& Payoff^{funded}(p_{T}) = - \left[u^{funded}_{0}(p_{T}) + u_{1/2}(p_{T}) +
u_{1}(p_{T})\right], \\
& Payoff^{borrowed}(p_{T}) = - \left[u^{borrowed}_{0}(p_{T}) +
u_{1/2}(p_{T}) + u_{1}(p_{T})\right],
\end{split}%
\end{equation}
where functions $u$ are defined in Eq \eqref{eq:il_payoff2}.
\end{corollary}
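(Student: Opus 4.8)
The plan is to recognize that this corollary is an immediate consequence of the two results it cites and requires nothing beyond distributing a sign across a finite sum. First I would recall the two defining relations for the IL protection claim, namely $Payoff^{funded}(p_T) = -IL\ funded^{(y)}(p_T)$ from Eq \eqref{eq:il_funded_p} and $Payoff^{borrowed}(p_T) = -IL\ borrowed^{(y)}(p_T)$ from Eq \eqref{eq:il_borrowed_p}. These are pointwise identities in the terminal price $p_T$, so the whole task reduces to evaluating the two right-hand sides and negating them.

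Next I would invoke the decomposition established in the preceding corollary (Eq \eqref{eq:il_payoff}), which expresses each nominal IL at a generic price $p_t$ as a sum of a linear piece, a square-root piece, and an option piece: $IL\ funded^{(y)}(p_t) = u^{funded}_0(p_t) + u_{1/2}(p_t) + u_1(p_t)$ and $IL\ borrowed^{(y)}(p_t) = u^{borrowed}_0(p_t) + u_{1/2}(p_t) + u_1(p_t)$, with the three functions $u$ given explicitly in Eq \eqref{eq:il_payoff2}. Specializing the generic price to the maturity value $p_t = p_T$ and substituting these into the defining relations of the first step produces exactly Eq \eqref{eq:il_payoff3}. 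Because negation distributes over the finite sum, the only bookkeeping is to carry the overall minus sign in front of the bracketed three-term sum, which is precisely the form in which the claim is stated.

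No regularity, measurability, or convergence hypotheses enter, since each $u$ is a bounded, piecewise-defined deterministic function of the single real variable $p_T$, so the substitution is purely algebraic and holds for every $p_T > 0$. Consequently there is no genuine obstacle in this proof. The only point meriting a word of care is that the two distinct linear terms $u^{funded}_0$ and $u^{borrowed}_0$ are consistent with the funded and borrowed IL formulas derived earlier; but that consistency was already secured in Propositions \ref{prop:il_dec_funded1} and \ref{prop:il_dec_borrowed1} and is merely propagated here, so I would simply cite it rather than re-derive it.
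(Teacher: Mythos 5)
Your proposal is correct and coincides with the paper's (implicit) argument: the corollary is stated without a separate proof precisely because it is the pointwise substitution of the decomposition in Eq \eqref{eq:il_payoff} into the sign-flip definitions \eqref{eq:il_funded_p} and \eqref{eq:il_borrowed_p} evaluated at $p_T$. One immaterial slip: the functions $u_0$ and $u_1$ are not bounded (they grow linearly in $p_T$), but since the identity is purely algebraic and pointwise, this does not affect the argument.
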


\section{Static Replication of Impermanent Loss with Vanilla Options}

\label{sec:sh}

We derive a static replication of the IL at the fixed maturity time $T$
using a portfolio of European call and put options. Accordingly, the IL can
be hedged by a buying a portfolio of traded options.

We introduce payoff functions of call and put options as follows\footnote{%
We note that on crypto exchanges most options are the so-called inverse
options with the payoff paid in the underlying token. There is a direct
arbitrage-based equivalence between vanilla and inverse options, see %
\citet{Alexander2023} and \citet{LucicSepp2024} for details, so that our
analysis follows the same logic when using inverse options.}: 
\begin{equation}  \label{eq:sr0}
u^{call}(p, k)=(p-k)^{+}, \ u^{put}(p, k)=(k-p)^{+}
\end{equation}
where $k$ is the strike price $k$ and $p$ is the current price.

We note that \cite{Deng2023} and \cite{MaireWunsch2024} obtain the following
replication formula for the replication of funded P\&L as defined in Eq %
\eqref{eq:pnl_funded} and its analytic expression for Uniswap V3 defined in
Eq \eqref{eq:pnl_funded_V3} (expressed using our notation) 
\begin{equation}  \label{eq:sr1}
\begin{split}
\frac{1}{L} \ P\&L \ funded^{(y)}(p_{t}) + 1 & = -\frac{1}{4}
\int^{p_{b}}_{p_{a}}k^{-3/2}\left(u^{put}(p_{t}, k)+u^{call}(p_{t},
k)\right)dk \\
& + \frac{1}{2\sqrt{p_{a}}}\left(u^{call}(p_{t}, p_{a})-u^{put}(p_{t},
p_{a})\right) + \frac{1}{2\sqrt{p_{b}}}\left(u^{put}(p_{t},
p_{a})-u^{call}(p_{t}, p_{a})\right)
\end{split}%
\end{equation}

\cite{Deng2023} derive the replication formula \eqref{eq:sr1} using the
following representation of IL under Uniswap V3 
\begin{equation}  \label{eq:sr2}
\begin{split}
\frac{1}{L} \ P\&L \ funded^{(y)}(p_{t}) + 1 & = p_{t}\left( \frac{1}{\sqrt{%
p_{t}}} - \frac{1}{\sqrt{p_{b}}} \right)^{+} - p_{t}\left( \frac{1}{\sqrt{%
p_{t}}} - \frac{1}{\sqrt{p_{a}}} \right)^{+} \\
& + \left( \sqrt{p_{t}} - \sqrt{p_{a}}\right)^{+} -\left( \sqrt{p_{t}} - 
\sqrt{p_{b}}\right)^{+}
\end{split}%
\end{equation}
where we apply our definition of nominal IL in Eq \eqref{eq:il_funded_V3}
and Carr-Madan representation (\citet{Carr2001}). We note that, since the
function on the left-hand side in \eqref{eq:sr2} is not twice
differentiable, strictly speaking the Carr-Madan representation does not
apply. For completeness, in Appendix \ref{sec:carrmadan} we provide a
derivation which only relies of existence of the generalized derivatives.

Decomposition in Eq \eqref{eq:sr2} includes four exotic payoffs on the
square root of the price. In contrast, we derive an alternative
decomposition of IL of the funded LP in Eq \eqref{eq:il_dec_funded1} which
decomposes the IL into one exotic payoff on the square root of the price,
two payoffs of vanilla call and put options, and two payoffs of digital
options. It is clear that the decomposition of the IL is not unique and can
be done with different base payoff function. Our decomposition for IL of
funded and borrowed LPs in Eq \eqref{eq:il_dec_funded1} and %
\eqref{eq:il_dec_borrowed1}, respectively, is most suited for the
model-dependent valuation of the claims for IL protection.

We note that there are two complications with replication formula %
\eqref{eq:sr1} for practical usage. First, formula \eqref{eq:sr1} assumes
that there are strikes corresponding to lower and upper levels $p_{a}$ and $%
p_{b}$. In practice, on Deribit exchange, BTC and ETH options are traded
with the strike width of $1000$ and $50$ USD, respectively. Therefore, the
practical application of formula \eqref{eq:sr1} is very limited because it
only applicable to LPs with ranges contained in strikes of traded options.
Second, replication formula \eqref{eq:sr1} requires to buy both calls and
puts at the same strikes. Thus, for strikes smaller than $p_{t}$, both
out-the-money puts and in-the-money calls are purchased and, for strikes
above $p_{t}$, both in-the-money puts and out-the-money calls are purchased.
In practice on Deribit exchange, the liquidity for in-the-money calls and
puts is limited with much wider bid-ask spreads, so that implementation of
formula \eqref{eq:sr1} can be too cost-inefficient in reality.

Instead, we derive an alternative replication formula which only purchases
out-out-the money puts for strikes below the current price and out-out-the
money calls for strikes above the current price. We also note that our
replication formula for a generic CFMM of AMM protocols in addition to CFMMs
of Uniswap V2 or V3 protocols.

\subsection{Replication of IL with Vanilla Options}

We derive a replication portfolio for IL under a generic AMM protocol. We
assume that IL $IL(p)$ is a function of the current price $p$ and specified
by Eq \eqref{eq:pnl_funded} for funded LP or by Eq \eqref{eq:pnl_borrowed}
for borrowed LP. In particular, the IL for funded and borrowed LPs in
Uniswap V2 is obtained using Eq (\ref{eq:il_funded_V2}) and (\ref%
{eq:il_borrowed_nom_V2}), respectively. For Uniswap V3, the IL for funded
and borrowed LPs is obtained using Eq (\ref{eq:il_funded_V3}) and (\ref%
{eq:il_borrowed_nom_V3}), respectively. The corresponding IL for all these
specifications is denoted by $IL(p)$.

\begin{proposition}[Replication portfolio for generic LP]
\label{prop:replication_lp}

We consider IL of a generic AMM as function of price $IL(p)$. We fix option
maturity time $T$ and consider a set of call and put options traded for this
maturity time.

\textbf{Put side}. We assume a discrete grid of strikes $\mathcal{K}^{put}$
and corresponding payoffs of put options $\mathcal{U}^{put}$: 
\begin{equation}  \label{eq:ph1}
\mathcal{K}^{put} = \left(k_{1}, k_{2},...,k_{N} \right), \ \mathcal{U}%
^{put} = \left(u_{1}, u_{2},...,u_{N} \right)
\end{equation}
where $u_n=(k_n-p)^+$, $n=1,..,N$, $k_{n-1} < k_{n}$ and $k_{N}\leq p_{0}$
with $p_{0}$ being $T$-forward price. We consider the replication portfolio
of puts with weights $w_n$: 
\begin{equation}  \label{eq:ph2}
\begin{split}
\Pi^{put} = \sum^{N}_{n=1}w_{n}u^{put}_{n}
\end{split}%
\end{equation}
We define the first-order derivative of the IL function at discrete strike
points as follows: 
\begin{equation}  \label{eq:ph4}
\delta IL(k_{n}) = \frac{IL(k_{n})-IL(k_{n-1})}{k_{n}-k_{n-1}}
\end{equation}

Then the weights of put options for the replication portfolio in Eq %
\eqref{eq:ph2} are computed by: 
\begin{equation}  \label{eq:ph5}
w_{n-1} = -\left(\delta IL(k_{n})- \delta IL(k_{n-1})\right), n=N, ...,3
\end{equation}
with $w_{N} = \delta IL(k_{N})$ and with $w_{1}=0$.

\textbf{Call side}. We assume a discrete grid of strikes $\mathcal{K}$ and
corresponding payoffs of call options $\mathcal{U}^{call}$: 
\begin{equation}  \label{eq:phc1}
\mathcal{K}^{call} = \left(k_{1}, k_{2},...,k_{M} \right), \ \mathcal{U}%
^{call} = \left(u_{1}, u_{2},...,u_{M} \right)
\end{equation}
where $u_m=(p-k_m)^+$, $m=1,..,M$, with $k_{m} > k_{m-1}$ and $k_{1} \geq
p_{0}$ being $T$-forward price.. We consider the hedging portfolio of calls
with weights $w_m$: 
\begin{equation}  \label{eq:phc2}
\Pi^{call} = \sum^{M}_{m=1}w_{m}u^{call}_{m}
\end{equation}
We compute the first-order derivative of the IL function $\delta IL(k_{m})$
as in Eq \eqref{eq:ph4}.

Then the weights of call options for the replication portfolio in Eq %
\eqref{eq:phc2} are computed by: 
\begin{equation}  \label{eq:phc3}
w_{m} = -\left(\delta IL(k_{m})- \delta IL(k_{m-1})\right), m=2, ...,M-1
\end{equation}
with $w_{1} = \delta IL(k_{1})$ and with $w_{M}=0$.
\end{proposition}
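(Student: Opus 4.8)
The plan is to recognise this as the discrete analogue of the Carr--Madan spanning formula (\citet{Carr2001}, already invoked in Eq.~\eqref{eq:sr2}) and to prove it by matching, interval by interval, the one-sided slope of the option basket against the divided differences $\delta IL(k_j)$ of the target payoff. The organising observation is that each traded payoff is continuous and piecewise linear with exactly one kink: $u^{put}(p,k_n)=(k_n-p)^+$ has slope $-1$ for $p<k_n$ and $0$ for $p>k_n$, whereas $u^{call}(p,k_m)=(p-k_m)^+$ has slope $0$ for $p<k_m$ and $+1$ for $p>k_m$. Hence $\Pi^{put}+\Pi^{call}$ is itself continuous and piecewise linear, and on any interval between consecutive strikes its slope is a \emph{partial sum} of the weights. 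The proposition then amounts to the statement that, up to an affine (forward-plus-cash) term that the finite basket does not itself carry, the portfolio equals the piecewise-linear interpolant of the (negated) IL payoff through the strike grid; so I would first reduce everything to the slope-matching identities ``portfolio slope on $(k_{j-1},k_j)$ $=$ interpolant slope there $=\pm\,\delta IL(k_j)$''.

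For the put side I would fix $p\in(k_{j-1},k_j)$, note that exactly the puts with index $n\ge j$ are in the money, so the portfolio slope equals $-\sum_{n\ge j}w_n$, and then substitute the recursion $w_{n-1}=-\bigl(\delta IL(k_n)-\delta IL(k_{n-1})\bigr)$. Seeding the sum with the terminal value $w_N=\delta IL(k_N)$, the inner differences telescope and I expect $\sum_{n\ge j}w_n=\delta IL(k_j)$, reproducing precisely the interpolant slope on that interval. The endpoint convention $w_1=0$ is what forces the portfolio to continue linearly below the lowest strike (no spurious kink at $k_1$), and the overall level is then fixed by a single cash adjustment, obtained by evaluating at $k_N$ where every put expires worthless. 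I would write this telescoping as a short downward induction on $j$ from $N$ to $2$.

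The call side is symmetric but built from the left of the forward $p_0$: for $p\in(k_j,k_{j+1})$ the in-the-money calls are those with $m\le j$, the portfolio slope is $\sum_{m\le j}w_m$, and substituting the call recursion together with the seed $w_1=\delta IL(k_1)$ and terminator $w_M=0$ should telescope to the interpolant slope on each upside interval. I would then verify that the two halves join continuously at $p_0$, where the put side terminates and the call side begins, so that together with the linear and cash terms the whole portfolio is the global piecewise-linear interpolant of the target. Because the construction matches exactly at every node $k_n$ and is affine in between, the replication is exact on the chosen grid.

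The part I expect to be the main obstacle is the boundary and sign bookkeeping rather than the telescoping itself. Concretely: (i) the target is only piecewise-$C^1$, so the argument must be phrased via one-sided slopes / generalised second derivatives rather than classical differentiation; (ii) one must correctly account for the linear (forward) and cash pieces that the finite option basket does not carry, and show that the special endpoint weights $w_1=0,\ w_N=\delta IL(k_N)$ (puts) and $w_1=\delta IL(k_1),\ w_M=0$ (calls) are exactly what absorbs this affine part while suppressing any extrapolation kink beyond the outer strikes; and (iii) the put-replicated downside and the call-replicated upside must be reconciled at $p_0$ with consistent orientation, which is precisely where the asymmetry between the forward difference in the put recursion and the backward difference in the call recursion has to be handled carefully to keep the two sign conventions coherent.
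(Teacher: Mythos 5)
Your proposal is correct and follows essentially the same route as the paper's proof: the paper likewise computes the portfolio's divided difference $\delta\Pi(k_{n})$ as a partial sum of the weights of the in-the-money options, identifies the weight $w_{n-1}$ as the second divided difference of the target, and concludes exactness at the grid nodes via piecewise-linear interpolation of the slopes. Your additional remarks on the affine/cash level and the join at $p_{0}$ go slightly beyond what the paper spells out (it only matches first divided differences), but the core telescoping argument is the same.
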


\begin{proof}
See Appendix \ref{sc:replication_lp}.
\end{proof}

\begin{corollary}[Static portfolio for replication of the payoff of IL
Protection claim]
The option replication portfolio for payoff of IL protection claim is given
by: 
\begin{equation}  \label{eq:il_static_rep}
\begin{split}
& \Pi \equiv \Pi^{put} +\Pi^{call} = \sum^{N}_{n=1}w_{n}U^{put}_{n} +
\sum^{M}_{m=1}w_{m}U^{call}_{m}
\end{split}%
\end{equation}
where the weights of put and call options are computed using Eq %
\eqref{eq:ph4} and Eq \eqref{eq:phc3}, respectively, with IL function $IL(p)$
specified using Eq \eqref{eq:il_payoff_v2} and \eqref{eq:il_payoff3} for
payoffs under Uniswap V2 and V3 respectively.

The cost of the replication portfolio is $\Pi_{0}$ computed using option
prices observed at inception of a IL protection claim .
\end{corollary}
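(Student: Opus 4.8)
The plan is to obtain this statement as a direct assembly of Proposition \ref{prop:replication_lp} with the explicit payoff formulas already derived, together with a standard no-arbitrage pricing argument. First I would identify the target function to be replicated: by the definition of the IL protection claim in Eqs \eqref{eq:il_funded_p} and \eqref{eq:il_borrowed_p}, its payoff at maturity $T$, viewed as a function of the terminal price $p_T$, is exactly $Payoff^{funded}$ or $Payoff^{borrowed}$, given in closed form by Eq \eqref{eq:il_payoff_v2} under Uniswap V2 and by Eq \eqref{eq:il_payoff3} under Uniswap V3. I would then feed this payoff into the generic scheme of Proposition \ref{prop:replication_lp} in place of the function $IL(p)$, so that the discrete first difference $\delta IL(k_n)$ of Eq \eqref{eq:ph4} is evaluated on the protection payoff at the traded strikes.

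Next I would split the strike axis at the $T$-forward price $p_0$, using the put grid with $k_N \le p_0$ and the call grid with $k_1 \ge p_0$ as prescribed in the proposition. Applying the put-side weights of Eq \eqref{eq:ph5} produces the subportfolio $\Pi^{put}$, and applying the call-side weights of Eq \eqref{eq:phc3} produces $\Pi^{call}$; by Proposition \ref{prop:replication_lp} each reproduces the target payoff on its own half-line, matching at the grid points with piecewise-linear interpolation in between. Since every retained put payoff $(k_n-p)^+$ vanishes for $p\ge k_N$ and every retained call payoff $(p-k_m)^+$ vanishes for $p\le k_1$, the two subportfolios do not overlap across the forward, so by linearity of payoffs $\Pi=\Pi^{put}+\Pi^{call}$ reproduces the protection payoff on the whole price axis, which is precisely Eq \eqref{eq:il_static_rep}.

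The one place where something must actually be checked rather than merely cited, and hence the main obstacle, is that no additional cash (bond) or forward leg is needed in $\Pi$. A pure put-and-call portfolio of this type necessarily evaluates to zero at the forward $p_0$, since all retained puts are struck at or below $p_0$ and all retained calls at or above it; the construction is therefore consistent only if the target payoff itself vanishes at $p_0$. This holds for every specification used here, because the nominal IL of Eqs \eqref{eq:il_funded} and \eqref{eq:il_borrowed_nom} is zero when $p_T=p_0$ (there is no loss when the price is unchanged), and hence so is the protection payoff. I would record this observation explicitly, as it is what allows the affine terms appearing in the decompositions to be absorbed entirely into option positions, with the boundary weights $w_N=\delta IL(k_N)$ and $w_1=\delta IL(k_1)$ carrying the local slope of the payoff at the forward.

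Finally, for the cost statement I would invoke the law of one price. Because $\Pi$ matches the protection claim payoff state by state at time $T$, absence of arbitrage in the traded options market forces the inception value of the claim to equal the inception cost of $\Pi$; by linearity of the pricing functional this cost $\Pi_0$ is obtained by replacing each option payoff in Eq \eqref{eq:il_static_rep} with its market price observed at inception and summing against the same weights, which completes the argument.
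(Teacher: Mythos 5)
Your proposal is correct and takes essentially the route the paper intends: the corollary is obtained by substituting the protection payoff from Eqs \eqref{eq:il_payoff_v2} and \eqref{eq:il_payoff3} into the generic construction of Proposition \ref{prop:replication_lp}, summing the two one-sided sub-portfolios (which match the target exactly at the grid strikes, with piecewise-linear interpolation between them), and pricing the resulting portfolio by linearity at inception. Your added consistency check --- that an out-of-the-money-puts-plus-out-of-the-money-calls portfolio necessarily pays zero at $p_{0}$, which is admissible only because the nominal IL and hence the protection payoff vanish at $p_{T}=p_{0}$, so no bond or forward leg is required --- is correct and makes explicit a point the paper leaves unstated.
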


In Figure (\ref{il_replication_borrowed}), we illustrate the application of
formulas \eqref{eq:ph2} and \eqref{eq:phc2} for replicating of IL for
borrowed Uniswap V3 LP using $1m$ USDT notional, $p_{0}=2000$ ETH/USDT with $%
p_{a}=1500$ and $p_{b}=2500$. We use strikes with widths of $50$ USDT in
alignment with ETH options traded on Deribit exchange (for options with
maturity of less than 3 days, Deribit introduces new strikes with widths of $%
25$). In subplot (A), we show the IL of the borrowed LP position, and the
payoffs of replicating calls and puts portfolios (with negative signs to
align with the P\&L). In subplot (B), we show the residual computed as the
difference between the IL and the payoff of the replication portfolios. In
Subplot (C), we show the number of put and call option contracts for the
replication portfolios.

From Eq \eqref{eq:ph11} it is clear that the approximation error is zero at
strikes in the grid, which is illustrated in subplot (B). The maximum value
of the residual is $0.025\%$ or $2.5$ basis points, which is very small. A
small approximation error with a similar magnitude will occur in case, $%
p_{0} $, $p_{a}$, $p_{b}$ are not placed exactly at the strike grid.

\begin{figure}[]
\begin{center}
\includegraphics[width=0.8\textwidth, angle=0]
{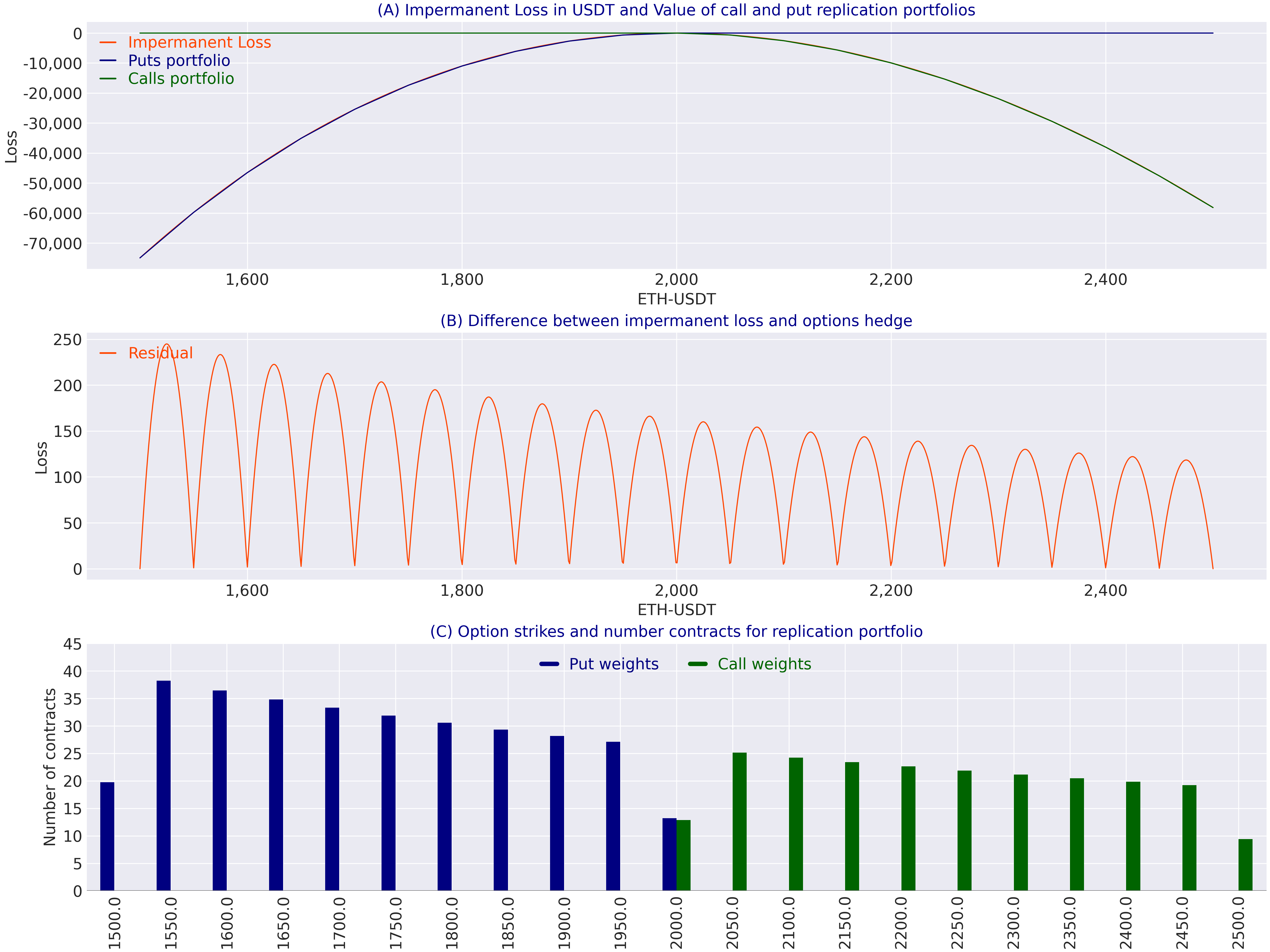}\vspace*{-1.\baselineskip}
\end{center}
\caption{Replication of P\&L of borrowed Uniswap V3 LP for allocation of $1m$
USDT notional, $p_{0}=2000$ ETH/USDT with $p_{a}=1500$ and $p_{b}=2500$. (A)
Impermanent loss in USDT and (negative) values of replicating puts and call
portfolios; (B) Residual, which is the spread between LP P\&L and options
replication portfolios; (B) Number of option contracts for put and calls
portfolios.}
\label{il_replication_borrowed}
\end{figure}

\section{Model-dependent Valuation of Protection Claims against IL}

\label{sec:mdp}

In this section we develop the model-dependent valuation and dynamics
hedging of IL protection claims for Uniswap V2 and V3 protocols.

\subsection{Exponential Price Dynamics}

We consider a continuous-time market with a fixed horizon date $T^{*}>0$ and
uncertainty modeled on probability space $(\Omega, \mathbb{F}, \mathbb{P})$
equipped with filtration $\mathbb{F}=\{\mathcal{F}_t\}_{0\le t\le T^*}$. We
assume that $\mathbb{F}$ is right-continuous and satisfies usual conditions.

We introduce exponential price dynamics for price $p_{t}$: 
\begin{equation}  \label{eq:mgf1}
p_{t} = p_{0}e^{x_{t}}, \ x_{0} = 0,
\end{equation}
where $x_{t}$ is a stochastic process driving the log performance. We assume
an arbitrage-free market with a risk-neutral measure $\mathbb{Q}$ such that%
\footnote{%
This assumption is valid for complete markets. For incomplete markets, e.g.
dynamics including stochastic volatility or jumps, we fix a martingale
measure using specific risk preferences (see for an example \citet{Lewis2000}%
). \citet{SeppRakhmonoV2023} consider the existence of equivalent
risk-neutral measures for stochastic volatility models.}: 
\begin{equation}  \label{eq:mgf1a}
\begin{split}
\mathbb{E}^{\mathbb{Q}}[p_{T}\left.\right|\mathcal{F}_{t}] & \equiv p_{t}%
\mathbb{E}^{\mathbb{Q}}[e^{x_{T}}\left.\right|\mathcal{F}_{t}] = p_{t}
e^{(r-q)(T-t)}
\end{split}%
\end{equation}
where $r$ is the discount rate and $q$ is the borrow rate.

For hedging a IL protection claim, a trader needs to sell short the
underlying token. Short-selling can be readily executed either through a
centralized exchange (CEX) using perpetual futures or through borrowing the
token on a DeFi protocol using stablecoins as collateral. In the CEX case,
the borrow rate $q$ is the negative of the funding rate reported by CEX (by
the convention of crypto CEXes, the funding rate is the rate paid by traders
with long positions). In the DeFi case, the borrow rate $q$ is the accrued
borrow rate.

As for the discount rate $r$, most DEXes and CEXes (such as Deribit when
marking their listed options) assume zero discount rate. We would call $r$
as a low risk opportunity cost available in DeFi with the risk being a
potential hack of blockchain technology when deposited and staked assets
could be appropriated. Staking of high quality stablecoins in top DeFi
protocols would yield $1\%-2\%$ in the current environment, which is far
less than rates on government short-term bonds in traditional markets ($%
4\%-5\%$ as of June 2024).

When we value issued IL protection claims, we emphasize that the entry price 
$p_{0}$ is fixed at the time of initialising of an LP so that the entry
price $p_{0}$ becomes a parameter of the IL formula in Eq %
\eqref{eq:il_dec_borrowed1} along with the lower and upper ranges. Once the
valuation time advanced to time $t$ and new price is observed we need to
compute the expected IL using price $p_{t}$. We introduce the following
decomposition at time $t$ for the total log-performance $x_{T}$ 
\begin{equation}  \label{eq:mgf1b}
x_{T} = x_{t} + x_{\tau},
\end{equation}
where $x_{t}$ is the realised log-performance over period $(0, t]$ and $%
x_{\tau}$ is the stochastic performance over the period $(t, T]$. Here $\tau$%
, $\tau=T-t$, is the time-to-maturity. As a result, we model $p_{T}$ using
Eq \eqref{eq:mgf1} with \eqref{eq:mgf1b} as follows 
\begin{equation}  \label{eq:mgf1bb}
p_{T}=p_{0}e^{x_{t} + x_{\tau} }.
\end{equation}

\subsection{Model-dependent Valuation}

We consider the valuation of claims for IL protection as defined in Eqs %
\eqref{eq:il_funded_p} and \eqref{eq:il_borrowed_p}. We focus on the
valuation of these payoffs for Uniswap V2 and V3 using the payoff
decomposition and by applying the exponential model in Eqs \eqref{eq:mgf1}
and \eqref{eq:mgf1bb}. Hereby, we fix maturity time $T$.

\begin{corollary}[Payoff of IL protection claim in Uniswap V2 AMM under
exponential model (\protect\ref{eq:mgf1})]
Applying exponential model in Eq \eqref{eq:mgf1bb} to Eqs %
\eqref{eq:il_payoff_v2}, we obtain 
\begin{equation}  \label{eq:il_payoff_v2b}
\begin{split}
& Payoff^{funded}(x_{T}) = 1 - e^{ \frac{1}{2} (x_{t} + x_{\tau})} , \\
& Payoff^{borrowed}(x_{T}) = \frac{1}{2} \left( e^{ \frac{1}{2} (x_{t} +
x_{\tau})} - 1\right)^{2}
\end{split}%
\end{equation}
\end{corollary}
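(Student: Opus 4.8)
The plan is straightforward substitution.  The statement to be proved is the final corollary: under the exponential model, the Uniswap V2 IL protection payoffs take the stated exponential form.  Recall that the Uniswap V2 IL protection payoffs were already derived in closed form in Eq~\eqref{eq:il_payoff_v2}, namely
\[
Payoff^{funded}(p_{T}) = 1 - \sqrt{\tfrac{p_{T}}{p_{0}}}, \qquad
Payoff^{borrowed}(p_{T}) = \tfrac{1}{2}\left(\sqrt{\tfrac{p_{T}}{p_{0}}} - 1\right)^{2}.
\]
So the only content of this corollary is to rewrite the ratio $p_{T}/p_{0}$ using the exponential model.

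First I would invoke the exponential parametrization of Eq~\eqref{eq:mgf1bb}, which gives $p_{T} = p_{0}e^{x_{t}+x_{\tau}}$, and hence
\[
\frac{p_{T}}{p_{0}} = e^{x_{t}+x_{\tau}}, \qquad
\sqrt{\frac{p_{T}}{p_{0}}} = e^{\frac{1}{2}(x_{t}+x_{\tau})}.
\]
Substituting this single identity into each of the two payoff expressions from Eq~\eqref{eq:il_payoff_v2} immediately yields
\[
Payoff^{funded}(x_{T}) = 1 - e^{\frac{1}{2}(x_{t}+x_{\tau})}, \qquad
Payoff^{borrowed}(x_{T}) = \frac{1}{2}\left(e^{\frac{1}{2}(x_{t}+x_{\tau})} - 1\right)^{2},
\]
which are precisely the two formulas asserted in the corollary.

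There is essentially no obstacle here: the result is a one-line consequence of composing the model's definition $p_{T}=p_{0}e^{x_{T}}$ (with the decomposition $x_{T}=x_{t}+x_{\tau}$) with the already-established Uniswap V2 payoff formulas.  The only point requiring any care is bookkeeping of the square root: one must observe that taking the square root of $e^{x_{t}+x_{\tau}}$ produces the factor $\tfrac{1}{2}$ in the exponent, and that no absolute-value or sign ambiguity arises because the exponential is strictly positive.  I would therefore present the proof as the explicit substitution above rather than as any separate argument.
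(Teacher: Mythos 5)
Your proposal is correct and matches the paper's (implicit) argument exactly: the corollary is obtained by substituting $p_{T}=p_{0}e^{x_{t}+x_{\tau}}$ from Eq~\eqref{eq:mgf1bb} into the Uniswap V2 payoffs of Eq~\eqref{eq:il_payoff_v2}, so that $\sqrt{p_{T}/p_{0}}=e^{\frac{1}{2}(x_{t}+x_{\tau})}$, which is precisely what you did. The paper offers no separate proof for this corollary because it is this one-line substitution, and your handling of the square root (no sign ambiguity since the exponential is positive) is the only point of care, correctly noted.
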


\begin{corollary}[Payoff of claim for IL protection for Uniswap V3 AMM under
exponential model (\protect\ref{eq:mgf1})]
Applying exponential model in Eq \eqref{eq:mgf1bb} to Eqs %
\eqref{eq:il_payoff3}, we obtain 
\begin{equation}  \label{eq:il_payoff_e1}
\begin{split}
& Payoff^{funded}(x_{\tau}) = - \left[u^{funded}_{0}(x_{t}+x_{\tau}) +
u_{1/2}(x_{t}+x_{\tau}) + u_{1}(x_{t}+x_{\tau})\right], \\
& Payoff^{borrowed} (x_{\tau}) = - \left[u^{borrowed}_{0}(x_{t}+x_{\tau}) +
u_{1/2}(x_{t}+x_{\tau}) + u_{1}(x_{t}+x_{\tau})\right],
\end{split}%
\end{equation}
where the linear part is computed by 
\begin{equation}  \label{eq:lp1}
\begin{split}
& u^{funded}_{0}(x_{t}+x_{\tau}) = -\frac{p_{0}}{\sqrt{p_{b}}}
e^{x_{t}+x_{\tau}} + \left(\frac{p_{0}}{\sqrt{p_{b}}}-2\sqrt{p_{0}}\right) \\
& u^{borrowed}_{0}(x_{t}+x_{\tau}) =- \sqrt{p_{0}} \left(e^{x_{t}+x_{\tau}}
+1\right), \\
\end{split}%
\end{equation}
the square root part is computed by: 
\begin{equation}  \label{eq:mgf3}
\begin{split}
u_{1/2}(x_{t}+x_{\tau}) & = \sqrt{p_{0}}\exp\left\{\frac{1}{2}
(x_{t}+x_{\tau}) \right\} \mathbbm{1}\left\{ x_{a} < x_{t}+x_{\tau} <
x_{b}\right\}
\end{split}%
\end{equation}
with $x_{a}=\ln(p_{a}/p_{0})$ and $x_{b}=\ln(p_{b}/p_{0})$.

The option part is computed by 
\begin{equation}  \label{eq:mgf2}
\begin{split}
u_{1}(x_{t}+x_{\tau}) = & \frac{1}{\sqrt{p_{a}}}\max\left\{
p_{a}-p_{0}e^{x_{t}+x_{\tau}}, 0\right\} -\frac{1}{\sqrt{p_{b}}} \max\left\{
p_{0}e^{x_{t}+x_{\tau}}-p_{b}, 0 \right\} \\
& -2\sqrt{p_{a}} \mathbbm{1}\left\{x_{t}+x_{\tau}\leq x_{a}\right\} -2\sqrt{%
p_{b}}\mathbbm{1}\left\{x_{t}+x_{\tau} \geq x_{b}\right\}
\end{split}%
\end{equation}
\end{corollary}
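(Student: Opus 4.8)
The plan is to prove the corollary by direct substitution of the exponential reparametrization $p_T = p_0 e^{x_t + x_\tau}$ from \eqref{eq:mgf1bb} into the already-established payoff decomposition \eqref{eq:il_payoff3}, treating each of the three components $u_0$, $u_{1/2}$, $u_1$ separately. Since \eqref{eq:il_payoff3} expresses the protection payoffs as $-\left[u_0 + u_{1/2} + u_1\right]$ evaluated at the terminal price, the entire content of the corollary is the re-expression of each component $u_\cdot(p_T)$ of \eqref{eq:il_payoff2} as a function of the log-return $x_t + x_\tau$. The only two facts needed are the algebraic identity $\sqrt{p_T} = \sqrt{p_0}\,e^{\frac{1}{2}(x_t + x_\tau)}$ and the strict monotonicity of the map $z \mapsto p_0 e^z$, which lets me translate every price-threshold event into an equivalent log-return-threshold event.

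First I would handle the linear part. Substituting $p_T = p_0 e^{x_t+x_\tau}$ into $u_0^{funded}$ and $u_0^{borrowed}$ of \eqref{eq:il_payoff2}, the additive constants carry over verbatim while each term proportional to $p_T$ becomes proportional to $e^{x_t+x_\tau}$; collecting terms reproduces the two expressions in \eqref{eq:lp1} (in particular $u_0^{borrowed}$ factorizes as $-\sqrt{p_0}\left(e^{x_t+x_\tau}+1\right)$). Next I would treat the square-root part: writing $\sqrt{p_T} = \sqrt{p_0}\,e^{\frac{1}{2}(x_t+x_\tau)}$ and converting the band $\{p_a < p_T < p_b\}$ into $\{x_a < x_t + x_\tau < x_b\}$ with $x_a = \ln(p_a/p_0)$ and $x_b = \ln(p_b/p_0)$ (the two events coincide because $z \mapsto p_0 e^z$ is strictly increasing) yields \eqref{eq:mgf3}.

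The remaining step is the option part $u_1$. Here I would substitute $p_T = p_0 e^{x_t+x_\tau}$ into the two intrinsic values $\max\{p_a - p_T, 0\}$ and $\max\{p_T - p_b, 0\}$, which become $\max\{p_a - p_0 e^{x_t+x_\tau}, 0\}$ and $\max\{p_0 e^{x_t+x_\tau} - p_b, 0\}$, and into the two digital indicators $\mathbbm{1}\{p_T \le p_a\}$ and $\mathbbm{1}\{p_T \ge p_b\}$, which by the same monotone change of variable become $\mathbbm{1}\{x_t + x_\tau \le x_a\}$ and $\mathbbm{1}\{x_t + x_\tau \ge x_b\}$. Assembling the three reparametrized components and carrying the overall minus sign of \eqref{eq:il_payoff3} through then gives \eqref{eq:il_payoff_e1}.

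I do not expect any analytical obstacle, since the statement is a change-of-variable restatement of a decomposition proved earlier. The one genuine care-point, and where a slip is easiest, is the sign and inequality bookkeeping: one must track how the global minus sign from the payoff definitions \eqref{eq:il_funded_p} and \eqref{eq:il_borrowed_p} is distributed across the linear, square-root, and option components, and must keep the strict inequalities defining the square-root band distinct from the non-strict inequalities of the two digital indicators, so that the boundary cases $p_T = p_a$ and $p_T = p_b$ are assigned consistently on both sides of the reparametrization.
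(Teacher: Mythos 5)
Your approach — substitute $p_T=p_0e^{x_t+x_\tau}$ from \eqref{eq:mgf1bb} into each of the three components of \eqref{eq:il_payoff2}, use $\sqrt{p_T}=\sqrt{p_0}\,e^{(x_t+x_\tau)/2}$, and convert price thresholds to log-return thresholds via monotonicity of $z\mapsto p_0e^z$ — is exactly the right one, and it is the same (implicit) derivation the paper relies on: the paper gives no separate proof of this corollary, treating it as an immediate change of variables on the decomposition already established in Propositions \ref{prop:il_dec_funded1} and \ref{prop:il_dec_borrowed1}. Your handling of the linear and square-root parts is correct and reproduces \eqref{eq:lp1} and \eqref{eq:mgf3} verbatim.

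However, on the option part your claim that the assembly ``then gives \eqref{eq:il_payoff_e1}'' does not quite hold as stated, and the obstruction sits precisely at the sign-bookkeeping point you flagged as the danger zone. A faithful substitution into $u_1$ of \eqref{eq:il_payoff2} yields $-\tfrac{1}{\sqrt{p_a}}\max\{p_a-p_0e^{x_t+x_\tau},0\}+\tfrac{1}{\sqrt{p_b}}\max\{p_0e^{x_t+x_\tau}-p_b,0\}+2\sqrt{p_a}\,\mathbbm{1}\{x_t+x_\tau\le x_a\}+2\sqrt{p_b}\,\mathbbm{1}\{x_t+x_\tau\ge x_b\}$, whereas the printed \eqref{eq:mgf2} has all four signs reversed, i.e.\ it equals the \emph{negative} of what direct substitution produces. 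One can confirm that the signs of \eqref{eq:il_payoff2} are the correct ones by checking a region directly: for $p_T\le p_a$ in the borrowed case, $u_0+u_{1/2}+u_1$ with \eqref{eq:il_payoff2} recovers $p_T(\tfrac{1}{\sqrt{p_a}}-\tfrac{1}{\sqrt{p_0}})-(\sqrt{p_0}-\sqrt{p_a})$, matching \eqref{eq:pnl_borrowed_V3}, while the flipped version does not. So either \eqref{eq:mgf2} carries a typo (the outer minus of \eqref{eq:il_payoff3} having been absorbed into $u_1$ but not removed from \eqref{eq:il_payoff_e1}), or the corollary as printed double-counts that minus on the option term. Your proof as described is sound but will land on the sign-consistent version of the option part, not on \eqref{eq:mgf2} verbatim; you should state this explicitly rather than assert agreement with the printed formula.
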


Given payoff function in Eqs \eqref{eq:il_payoff_v2b} and %
\eqref{eq:il_payoff_e1}, the present value of the IL protection claim at
time $t$ is computed under the risk-neutral measure $\mathbb{Q}$ by: 
\begin{equation}  \label{eq:mgf4}
\begin{split}
& PV(t,p_{t};T) = e^{-r \tau}\mathbb{E}^{\mathbb{Q}}[Payoff
(x_{\tau})\left.\right|\mathcal{F}_{t}]
\end{split}%
\end{equation}

\subsection{Valuation in Black-Scholes-Merton (BSM) model}

We consider the BSM model with the price dynamics under the risk-neutral
measure $\mathbb{Q}$ given by 
\begin{equation}  \label{eq:bs1}
dp_{t} = \mu p_{t} dt + \sigma p_{t} dw_{t}, \ p_0 = p
\end{equation}
where $\mu=r-q$ is the risk-neutral drift, $w_{t}$ is a Brownian motion with 
$w_{0}=0$. Accordingly, the log-performance $x_t=\log p_t / p_0$ is driven
by 
\begin{equation}  \label{eq:bs2}
dx_{t} = \left(\mu-\frac{1}{2}\sigma^{2}\right) dt + \sigma dw_{t}, \ x_0 = 0
\end{equation}
and the distribution of $x_{\tau}$ is normal with mean $\left(\mu-\frac{1}{2}%
\sigma^{2}\right) \tau$ and volatility $\sigma \sqrt{\tau}$.

\begin{proposition}[BSM Value of IL protection claim under Uniswap V2]
\label{prop:bs_v2} Applying model dynamics \eqref{eq:bs1} to payoff
functions in Eq \eqref{eq:il_payoff_v2b} using valuation operator in Eq %
\eqref{eq:mgf4}, we obtain 
\begin{equation}  \label{eq:il_payoff_v2bb}
\begin{split}
& PV^{funded}(t,p_{t}) = e^{-r \tau} \left[ 1 - e^{ \frac{1}{2} x_{t}} G
\left(\tau; -\frac{1}{2}\right) \right], \\
& PV^{borrowed}(t,p_{t}) = \frac{1}{2} e^{-r \tau} \left[ e^{(x_{t}
+\mu\tau)} - 2e^{ \frac{1}{2} x_{t}} G \left(\tau; -\frac{1}{2}\right) + 1%
\right]
\end{split}%
\end{equation}
where 
\begin{equation}  \label{eq:bs2a}
G\left(\tau; -\frac{1}{2}\right) = \exp \left\{ \frac{1}{2} \left(\mu-\frac{1%
}{2}\sigma^{2}\right) \tau + \frac{1}{8} \sigma^2 \tau \right\}
\end{equation}
\end{proposition}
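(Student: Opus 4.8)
The plan is to evaluate the two conditional expectations in the valuation operator \eqref{eq:mgf4} directly from the Gaussian moment generating function, exploiting the decomposition \eqref{eq:mgf1b}. The structural fact that drives everything is that $x_t$ is $\mathcal{F}_t$-measurable, so any factor $e^{\beta x_t}$ is a constant that can be pulled outside the conditional expectation, while the residual increment $x_\tau$ is, conditionally on $\mathcal{F}_t$, normally distributed under $\mathbb{Q}$ with mean $(\mu-\tfrac12\sigma^2)\tau$ and variance $\sigma^2\tau$. Hence for any constant $\alpha$,
\[
\mathbb{E}^{\mathbb{Q}}[e^{\alpha x_\tau} \mid \mathcal{F}_t] = \exp\Bigl\{\alpha\bigl(\mu-\tfrac12\sigma^2\bigr)\tau + \tfrac12\alpha^2\sigma^2\tau\Bigr\}.
\]
Setting $\alpha=\tfrac12$ reproduces exactly $G(\tau;-\tfrac12)$ from \eqref{eq:bs2a}, while setting $\alpha=1$ gives $e^{\mu\tau}$, which is nothing but the forward/martingale identity \eqref{eq:mgf1a}.

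For the funded claim I would insert the payoff from \eqref{eq:il_payoff_v2b}, factor the deterministic term $e^{\frac12 x_t}$ out of the expectation, and apply the $\alpha=\tfrac12$ case to obtain $PV^{funded} = e^{-r\tau}(1 - e^{\frac12 x_t} G(\tau;-\tfrac12))$, which is the first stated formula. For the borrowed claim I would expand the squared payoff in \eqref{eq:il_payoff_v2b} as $e^{x_t+x_\tau} - 2e^{\frac12(x_t+x_\tau)} + 1$ and take the expectation term by term: the first term uses $\alpha=1$ and collapses to $e^{x_t+\mu\tau}$, the second uses $\alpha=\tfrac12$ and gives $e^{\frac12 x_t} G(\tau;-\tfrac12)$, and the third is constant. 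Collecting the three contributions with the prefactor $\tfrac12 e^{-r\tau}$ yields the second stated formula.

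There is no genuine analytic obstacle; the result is an exercise in the Gaussian moment generating function. The only points deserving attention are bookkeeping ones: correctly isolating the $\mathcal{F}_t$-measurable exponential factor before integrating, and verifying that the $\alpha=1$ moment in the borrowed case reduces to $e^{\mu\tau}$, so that the leading term of $PV^{borrowed}$ is consistent with the risk-neutral drift $\mu=r-q$.
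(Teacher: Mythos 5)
Your proposal is correct and follows essentially the same route as the paper: the paper's proof simply quotes the Gaussian moment generating function $G(\tau;\phi)=\mathbb{E}^{\mathbb{Q}}[e^{-\phi x_\tau}\mid\mathcal{F}_t]=\exp\{-(\mu-\tfrac12\sigma^2)\phi\tau+\tfrac12\phi^2\sigma^2\tau\}$ and leaves the term-by-term evaluation implicit, which is exactly what you carry out with $\alpha=-\phi\in\{\tfrac12,1\}$. Your additional remarks on pulling out the $\mathcal{F}_t$-measurable factor $e^{\beta x_t}$ and on the $\alpha=1$ case reducing to the martingale identity are correct fillings-in of steps the paper omits.
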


\begin{proof}
For the dynamics in Eq \eqref{eq:bs2}, we apply the fact that the moment generation function $G(\tau; \phi)$ of $x_{\tau}$ is given by:
\begin{equation} \label{eq:bs2gmf}
G(\tau; \phi) \equiv \mathbb{E}^{\mathbb{Q}}[\exp \left\{-\phi x_{\tau} \right\}\left.\right|\mathcal{F}_{t}] =  \exp \left\{-\left(\mu-\frac{1}{2}\sigma^{2}\right) \phi \tau + \frac{1}{2} \phi^2\sigma^2 \tau  \right\}
\end{equation}
\end{proof}

\begin{proposition}[BSM value of IL protection claim under Uniswap V3]
\label{prop:bsf}

The values of IL protection claims with payoff functions in Eq %
\eqref{eq:il_payoff_e1} under valuation operator in Eq (\ref{eq:mgf4}) and
BSM dynamics \eqref{eq:bs2} are given by 
\begin{equation}  \label{eq:bsf1}
\begin{split}
& PV^{funded}(t, p_{t}) = - \left[U^{funded}_{0}(p_{t}) + U_{1/2}(p_{t}) +
U_{1}(p_{t})\right], \\
& PV^{borrowed}(t, p_{t}) = - \left[U^{borrowed}_{0}(p_{t}) + U_{1/2}(p_{t})
+ U_{1}(p_{t})\right],
\end{split}%
\end{equation}
where the linear part is computed by: 
\begin{equation}  \label{eq:bsf2}
\begin{split}
& U^{funded}_{0}(t, p_{t}) = e^{-q\tau}\frac{p_{t}}{\sqrt{p_{b}}} -
e^{-r\tau}\left(\frac{p_{0}}{\sqrt{p_{b}}}-2\sqrt{p_{0}}\right) \\
& U^{borrowed}_{0}(t, p_{t}) = \sqrt{p_{0}}\left(\frac{p_{t}}{p_{0}}%
e^{-q\tau} + e^{-r\tau}\right).
\end{split}%
\end{equation}

Here $U_{1/2}(p_{t})$ is the BSM value of the square root payoff in Eq (\ref%
{eq:mgf3}) computed by: 
\begin{equation}
U_{1/2}(p_{t})= 2 e^{-r\tau} \sqrt{p_{t}} \exp\left\{\frac{1}{2}\mu \tau-%
\frac{1}{8}\sigma^{2}\tau\right\} \left(\mathbf{N} \left( \frac{%
\ln(p_{b}/p_{t})-(r-q)\tau}{\sigma \sqrt{\tau}} \right) -\mathbf{N} \left( 
\frac{\ln(p_{a}/p_{t})-(r-q)\tau}{\sigma \sqrt{\tau}} \right) \right),
\end{equation}
where $\mathbf{N}$ is the cpdf of normal random variable.

The option part is computed by 
\begin{equation}  \label{eq:bsf3}
\begin{split}
U_{1}(p_{t}) = & \frac{1}{\sqrt{p_{a}}} O^{BSM}(p_{t}; p_{a}, -1) -\frac{1}{%
\sqrt{p_{b}}} O^{BSM}(p_{t}; p_{b}, +1) \\
& -2\sqrt{p_{a}} D^{BSM}(p_{t}; p_{a}, -1) -2\sqrt{p_{b}}D^{BSM}(p_{t};
p_{b}, +1).
\end{split}%
\end{equation}

Here $O^{BSM}(p_{t}; k, +1)$ and $O^{BSM}(p_{t}; k,-1)$ are the BSM prices
of vanilla call and put with the strike price $k$ and indicator $%
\omega\in\{+1,-1\}$ respectively: 
\begin{equation}  \label{eq:bsm}
O^{BSM}(p_{t}; k, \omega) = e^{-q\tau} p_{t} \mathbf{N} (\omega d_{+}(p_{t},
k)) - ke^{-r\tau} \mathbf{N} (\omega d_{-}(p_{t}, k)),
\end{equation}
where 
\begin{equation}  \label{eq:d2}
d_{\pm}(p_{t},k) = \frac{\ln(p_{t}/k)+(r-q)\tau}{\sigma\sqrt{\tau}}\pm\frac{1%
}{2}\sigma\sqrt{\tau},
\end{equation}
and $D^{BSM}(p_{t}; k,+1)$ and $D^{BSM}(p_{t}; k, -1)$ are digital call and
put options with strike price $k$, respectively: 
\begin{equation}  \label{eq:dg}
D^{BSM}(p_{t}; k, \omega) = e^{-r \tau} \mathbf{N} (\omega d_{-}(p_{t}, k)).
\end{equation}
\end{proposition}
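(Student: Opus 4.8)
The plan is to apply the discounted risk-neutral valuation operator $PV=e^{-r\tau}\mathbb{E}^{\mathbb{Q}}[\,\cdot\,|\mathcal{F}_t]$ of Eq~\eqref{eq:mgf4} term by term to the payoff decomposition in Eq~\eqref{eq:il_payoff_e1}, exploiting linearity of the expectation so that the linear part $u_0$, the square-root part $u_{1/2}$, and the option part $u_1$ can be valued independently and then recombined into Eq~\eqref{eq:bsf1}. Under the BSM dynamics \eqref{eq:bs2} the increment $x_\tau$ is normal with mean $\left(\mu-\frac{1}{2}\sigma^2\right)\tau$ and variance $\sigma^2\tau$, and $\mu=r-q$, so each of the three building blocks reduces to an elementary Gaussian computation. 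Throughout I would use the identity $p_0 e^{x_t+x_\tau}=p_T$ to rewrite the log-variable payoffs of Eqs~\eqref{eq:lp1}, \eqref{eq:mgf3}, \eqref{eq:mgf2} directly in terms of $p_T$.

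For the linear part I would substitute $p_0 e^{x_t+x_\tau}=p_T$ and invoke the drift/martingale condition \eqref{eq:mgf1a}, namely $\mathbb{E}^{\mathbb{Q}}[p_T|\mathcal{F}_t]=p_t e^{(r-q)\tau}$, for the term proportional to $e^{x_t+x_\tau}$, while discounting the constant terms directly by $e^{-r\tau}$. The cancellation of $e^{-r\tau}$ against $e^{(r-q)\tau}$ produces the $e^{-q\tau}$ factor on the price-linear term, reproducing the two expressions for $U^{funded}_0$ and $U^{borrowed}_0$ in Eq~\eqref{eq:bsf2}.

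The square-root term is the only genuinely nontrivial piece, and I expect the bookkeeping there to be the main obstacle. Writing $\sqrt{p_T}=\sqrt{p_t}\,e^{\frac{1}{2}x_\tau}$, I would evaluate $\mathbb{E}^{\mathbb{Q}}\!\left[e^{\frac{1}{2}x_\tau}\mathbbm{1}\{x_a<x_t+x_\tau<x_b\}\right]$ by completing the square in the normal density. The exponential weight $e^{\frac{1}{2}x_\tau}$ tilts the Gaussian: it generates the prefactor $\exp\left\{\frac{1}{2}\mu\tau-\frac{1}{8}\sigma^2\tau\right\}$ and simultaneously shifts the effective mean by $\frac{1}{2}\sigma^2\tau$, so that after the change of variable the surviving cumulative-normal arguments carry the drift $(r-q)\tau$, i.e.\ $\frac{\ln(p_b/p_t)-(r-q)\tau}{\sigma\sqrt{\tau}}$ and its $p_a$ analogue. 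Tracking the finite two-sided range together with the fractional exponent $\frac{1}{2}$ through this tilt is the delicate step; the outcome is the stated $U_{1/2}(p_t)$.

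Finally, for the option part I would identify each summand of $u_1$ with a standard contract priced under BSM: $\max\{p_a-p_T,0\}$ and $\max\{p_T-p_b,0\}$ are a vanilla put struck at $p_a$ and a vanilla call struck at $p_b$, with $\mathbb{Q}$-values $O^{BSM}(p_t;p_a,-1)$ and $O^{BSM}(p_t;p_b,+1)$ from Eqs~\eqref{eq:bsm}--\eqref{eq:d2}; the indicators $\mathbbm{1}\{p_T\le p_a\}$ and $\mathbbm{1}\{p_T\ge p_b\}$ are digital put and call options, valued via $\mathbb{Q}(p_T\le p_a)=\mathbf{N}(-d_-(p_t,p_a))$ and $\mathbb{Q}(p_T\ge p_b)=\mathbf{N}(d_-(p_t,p_b))$, i.e.\ $D^{BSM}(p_t;p_a,-1)$ and $D^{BSM}(p_t;p_b,+1)$ from Eq~\eqref{eq:dg}. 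Attaching the coefficients $\frac{1}{\sqrt{p_a}}$, $\frac{1}{\sqrt{p_b}}$, $2\sqrt{p_a}$, $2\sqrt{p_b}$ reproduces Eq~\eqref{eq:bsf3}, and summing the valued linear, square-root, and option components yields Eq~\eqref{eq:bsf1}.
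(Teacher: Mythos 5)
Your proposal is correct and follows essentially the same route as the paper's proof in Appendix \ref{ap_prop:bsf}: term-by-term valuation by linearity of the discounted expectation, the martingale condition \eqref{eq:mgf1a} for the linear part, a Gaussian tilt/completion of the square for the square-root payoff, and standard BSM vanilla and digital formulas for the option part. One remark: your computation of the square-root term yields $e^{-r\tau}\sqrt{p_t}\exp\left\{\tfrac{1}{2}\mu\tau-\tfrac{1}{8}\sigma^2\tau\right\}\left(\mathbf{N}(\cdot)-\mathbf{N}(\cdot)\right)$ \emph{without} the leading factor $2$, which agrees with the paper's own appendix derivation in Eq.~\eqref{eq:bs5} and with the definition of $u_{1/2}$ in Eq.~\eqref{eq:mgf3}; the factor $2$ appearing in the proposition's display of $U_{1/2}$ is an inconsistency internal to the paper rather than a gap in your argument.
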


\begin{proof}
See Appendix \ref{ap_prop:bsf}.
\end{proof}

In Figure (\ref{fig_bsm_value_in_range}), we show the annualised cost (APR) $%
\%$ for the cost of BSM hedge for the borrowed LP as a function of the range
multiple $m$ such that $p_{a}(m)=e^{-m}p_{0}$ and $p_{b}(m)=e^{m}p_{0}$. We
use two weeks to maturity $T=14/365$ and different values of log-normal
volatility $\sigma$. All being the same, it is more expensive to hedge
narrow ranges.

\begin{figure}[]
\begin{center}
\includegraphics[width=0.7\textwidth, angle=0]
{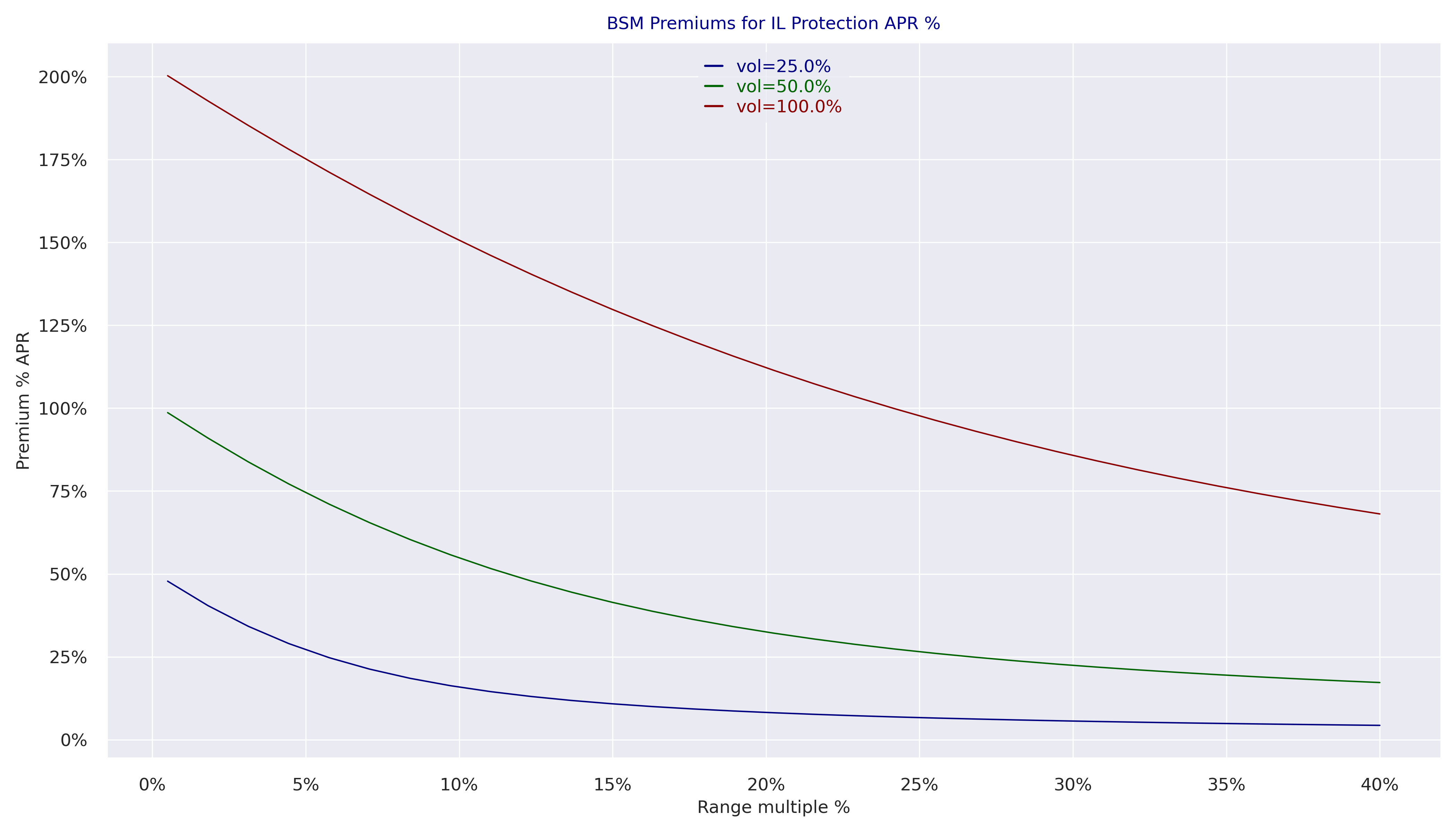}\vspace*{-1.\baselineskip}
\end{center}
\caption{BSM premium annualised ($U^{borrower}(t, p_{t})/T$) for borrowed LP
computed using Eq \eqref{eq:bsf1} with time to maturity of two weeks $%
T=14/365$ and notional of $1$ USDT as function of the range multiple $m$
such that $p_{a}(m)=e^{-m}p_{0}$ and $p_{b}(m)=e^{m}p_{0}$.}
\label{fig_bsm_value_in_range}
\end{figure}

The protection seller must hedge the claim dynamically using option delta.
As a result, we consider option delta under BSM model.

\begin{corollary}[The delta of IL protection claim under BSM model]
\begin{equation}
\begin{split}
& \Delta ^{funded}(t,p_{t})=-\left[ \partial
_{p}U_{0}^{funded}(p_{t})+\partial _{p}U_{1/2}(p_{t})+\partial
_{p}U_{1}(p_{t})\right] , \\
& \Delta ^{borrowed}(t,p_{t})=-\left[ \partial
_{p}U_{0}^{borrowed}(p_{t})+\partial _{p}U_{1/2}(p_{t})+\partial
_{p}U_{1}(p_{t})\right] ,
\end{split}
\label{eq:bs7a}
\end{equation}%
where 
\begin{equation}
\partial _{p}U^{funded}(t,p_{t})=e^{-q\tau }\frac{1}{\sqrt{p_{b}}},\
\partial _{p}U^{borrowed}(t,p_{t})=e^{-q\tau }\frac{1}{\sqrt{p_{0}}},
\label{eq:bs7b}
\end{equation}%
and 
\begin{equation}
\begin{split}
& \partial _{p}U_{1/2}(p_{t})=\frac{1}{2\sqrt{p_{t}}}\exp \left\{ \frac{1}{2}%
\mu \tau -\frac{1}{8}\sigma ^{2}\tau \right\} \left( \mathbf{N}\left( \frac{%
\ln (p_{b}/p_{t})-(r-q)\tau }{\sigma \sqrt{\tau }}\right) -\mathbf{N}\left( 
\frac{\ln (p_{a}/p_{t})-(r-q)\tau }{\sigma \sqrt{\tau }}\right) \right) \\
& -\frac{1}{\sigma \sqrt{\tau }\sqrt{p_{t}}}\exp \left\{ \frac{1}{2}\mu \tau
-\frac{1}{8}\sigma ^{2}\tau \right\} \left( \mathbf{n}\left( \frac{\ln
(p_{b}/p_{t})-(r-q)\tau }{\sigma \sqrt{\tau }}\right) -\mathbf{n}\left( 
\frac{\ln (p_{a}/p_{t})-(r-q)\tau }{\sigma \sqrt{\tau }}\right) \right) .
\end{split}
\label{eq:bs8}
\end{equation}%
Finally 
\begin{equation}
\begin{split}
\partial _{p}U_{1}(p_{t})& =\frac{1}{\sqrt{p_{a}}}\Delta
_{O}(p_{t};p_{a},-1)-\frac{1}{\sqrt{p_{b}}}\Delta _{O}(p_{t};p_{b},+1) \\
& -2\sqrt{p_{a}}\Delta _{D}(p_{t};p_{a},-1)-2\sqrt{p_{b}}\Delta
_{DG}(p_{t};p_{b},+1)
\end{split}
\label{eq:bs7}
\end{equation}%
where $\Delta _{O}(p_{t};k,+1)$ and $\Delta _{D}(p_{t};k,-1)$ are
Black-Scholes-Merton deltas of vanilla option and digital option,
respectively, given by: 
\begin{equation}
\Delta _{O}(p_{t};k,\omega )=\mathbf{N}(\omega d_{1}),\ \Delta
_{D}(p_{t};k,\omega )=\frac{\omega e^{-r\tau }}{p_{t}\sigma \sqrt{\tau }}%
\mathbf{n}(d_{2})  \label{eq:bs8a}
\end{equation}
\end{corollary}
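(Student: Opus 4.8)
The plan is to recognise that the delta is, by definition, the partial derivative of the present value with respect to the spot price $p_t$. Since Proposition \ref{prop:bsf} expresses $PV = -[U_0 + U_{1/2} + U_1]$ for both the funded and borrowed claims, linearity of differentiation reduces the task to differentiating each of the three building blocks separately, which is exactly the structure of Eq \eqref{eq:bs7a}. I would therefore compute $\partial_p U_0$, $\partial_p U_{1/2}$ and $\partial_p U_1$ in turn and assemble them with the overall minus sign. The linear part is immediate: differentiating the affine-in-$p_t$ expressions of Eq \eqref{eq:bsf2}, only the term proportional to $p_t$ survives, producing the constants $e^{-q\tau}/\sqrt{p_b}$ and $e^{-q\tau}/\sqrt{p_0}$ of Eq \eqref{eq:bs7b}.

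For the square-root part I would apply the product rule to $U_{1/2}$, treating the prefactor $\sqrt{p_t}$ and the difference of normal CDFs as the two factors. Differentiating $\sqrt{p_t}$ gives the factor $1/(2\sqrt{p_t})$ and reproduces the CDF-difference term on the first line of Eq \eqref{eq:bs8}, while the chain rule through the arguments $d_{a,b} = (\ln(p_{a,b}/p_t)-(r-q)\tau)/(\sigma\sqrt{\tau})$, using $\partial_p d_{a,b} = -1/(p_t\sigma\sqrt{\tau})$, yields the density-difference term on the second line. The key observation is that the two boundary contributions do \emph{not} cancel, so both the $\mathbf{N}$-difference and the $\mathbf{n}$-difference persist in the final expression.

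For the option part I would differentiate Eq \eqref{eq:bsf3} term by term. Writing $d_1 = d_+$ and $d_2 = d_-$ for the quantities of Eq \eqref{eq:d2}, the vanilla delta of $O^{BSM}(p_t;k,\omega)$ collapses to $\mathbf{N}(\omega d_+)$ after invoking the standard Black--Scholes identity $e^{-q\tau}p_t\,\mathbf{n}(d_+) = k e^{-r\tau}\mathbf{n}(d_-)$, which forces the density contributions coming from $d_+$ and $d_-$ to cancel. For the digital prices $D^{BSM}(p_t;k,\omega) = e^{-r\tau}\mathbf{N}(\omega d_-)$, direct differentiation with $\partial_p d_- = 1/(p_t\sigma\sqrt{\tau})$ gives $\Delta_D(p_t;k,\omega) = \omega e^{-r\tau}\mathbf{n}(d_-)/(p_t\sigma\sqrt{\tau})$. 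Substituting these two building blocks back, and carrying the fixed coefficients $\pm 1/\sqrt{p_{a,b}}$ and $\mp 2\sqrt{p_{a,b}}$ from Eq \eqref{eq:bsf3}, produces Eq \eqref{eq:bs7}, and collecting the three parts recovers Eq \eqref{eq:bs7a}.

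The only genuinely delicate step is the square-root term: it is the single place where the product rule and the chain rule interact, and where the boundary-density terms must be tracked explicitly rather than cancelling as they conveniently do for the vanilla options. Everything else is routine differentiation once the valuation formulas of Proposition \ref{prop:bsf} are available, so the main effort is careful bookkeeping of the prefactors $e^{-r\tau}$, $\exp\{\tfrac12\mu\tau-\tfrac18\sigma^2\tau\}$ and the $1/\sqrt{p_t}$ factor in the square-root contribution.
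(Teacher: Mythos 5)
Your proposal is correct and follows exactly the paper's approach: the paper's own proof is the one-line remark ``By taking the partial derivative wrt $p$ in Eq (\ref{eq:bsf1})'', and you simply carry out that differentiation term by term, correctly identifying the product/chain-rule structure of the square-root part and the standard cancellation $e^{-q\tau}p_t\,\mathbf{n}(d_+)=ke^{-r\tau}\mathbf{n}(d_-)$ that reduces the vanilla delta to $\mathbf{N}(\omega d_+)$. The only caveats are bookkeeping ones already present in the paper's statement rather than in your argument (e.g.\ the prefactor $2e^{-r\tau}$ of $U_{1/2}$ and the $e^{-q\tau}$ in the vanilla delta are dropped in Eqs (\ref{eq:bs8}) and (\ref{eq:bs8a})).
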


\begin{proof}
By taking the partial derivative wrt $p$ in Eq (\ref{eq:bsf1}).
\end{proof}

In Figure \ref{bsm_value_in_range}, we show BSM delta for borrowed LP
computed using Eq \eqref{eq:bs7a} with time to maturity of two weeks $%
T=14/365$ for borrowed LP in Uniswap V3 with $p_{0}=2000$, $p_{a}=1500$, $%
p_{b}=2500$, and notional of $1000000$ USD. The initial units in the LP is $%
(220.36,559282.18)$ units of ETH and USDT, respectively. The static hedge is
constructed by shorting $220.36$ units of ETH. The first line labeled LP
ETH-Units corresponds to the excess ETH units of borrowed LP with zero units
at $p=p_{0}$ and being under-hedged on the downside and over-hedged on the
upside. ETH option hedge shows the BSM delta computed using Eq %
\eqref{eq:bs7a} (the hedge for borrower LP is implemented using the negative
sign of BSM delta). For high volatilities or large maturity times, BSM delta
under-hedges near the range.

\begin{figure}[]
\begin{center}
\includegraphics[width=0.7\textwidth, angle=0] {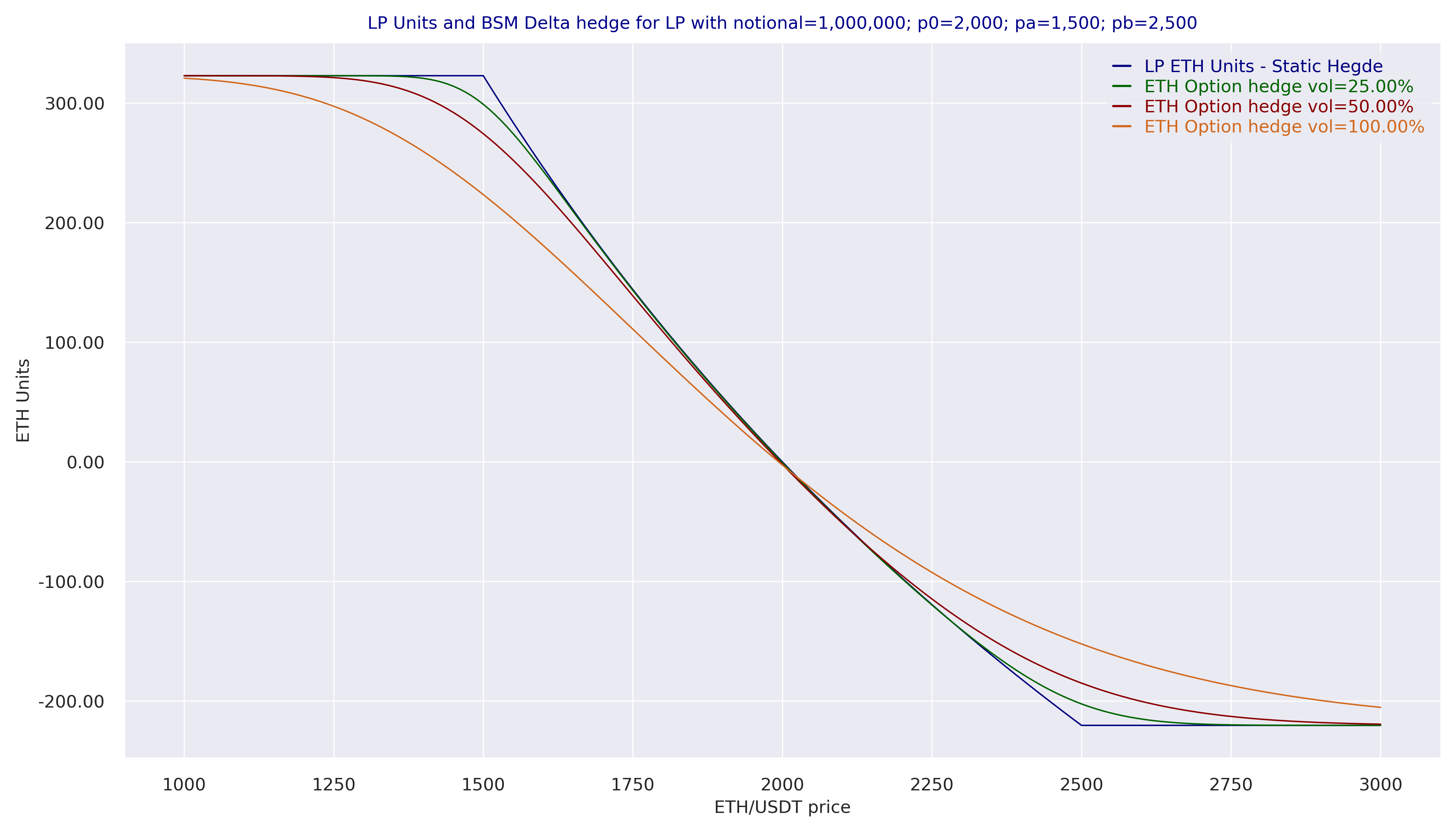}%
\vspace*{-1.\baselineskip}
\end{center}
\caption{BSM delta for borrowed LP computed using Eq \eqref{eq:bs7a} with
time to maturity of two weeks $T=14/365$ for Uniswap V3 LP with $p_{0}=2000$%
, $p_{a}=1500$, $p_{b}=2500$, and notional of $1000000$ USDT.}
\label{bsm_value_in_range}
\end{figure}

\subsection{Valuation using Moment Generating Function}

We consider a wide class of Markovian exponential dynamics in Eq %
\eqref{eq:mgf1} for which the moment generating function (MGF) for the
log-return $x_{\tau}$ is available in closed-form. The closed-form solution
for the MGF is available under a wide class of models including
jump-diffusions and diffusions with stochastic volatility. Thus, we can
develop analytic solution for model-dependent valuation of IL protection
under various models with analytic MGF.

We denote the MGF by $G(\tau; \phi)$, where $\phi$ is a complex-valued
transform variable such that $\phi=\phi_{r}+i\phi_{i}$, $i=\sqrt{-1}$.
Formally, the MGF solves the following problem: 
\begin{equation}  \label{eq:mgf9}
G(\tau; \phi) = \mathbb{E}^{\mathbb{Q}}\left[e^{-\phi x_{\tau}}\left.\right|%
\mathcal{F}_{t}\right]
\end{equation}
where the expectation is computed using model dynamics under the
risk-neutral measure $\mathbb{Q}$ and $\mathcal{F}_{t}$ is information set.

Using the MGF, the density of $x^{\prime }=x_{\tau}$ denoted by $P(\tau,x;
x^{\prime })$ is computed by the Fourier inversion 
\begin{equation}  \label{eq:mgf10}
\begin{split}
& P(\tau,x; x^{\prime }) = \frac{1}{\pi}\Re\left[ \int^{\infty}_{0}\exp
\left\{ \phi x^{\prime }\right\} G(\tau; \phi)\,d\phi\right] \equiv \frac{1}{%
\pi}\Re\left[ \int^{\infty}_{0}\exp \left\{ \phi (x^{\prime }-x) \right\}
E(\tau; \phi)\,d\phi\right]
\end{split}%
\end{equation}
with $d \phi \equiv d \phi_{i}$ and $G(\tau; \phi)\equiv e^{x\phi}E(\tau;
\phi)$.

For valuation of the IL protection claim using MGF $G(\tau; \phi)$, we need
to evaluate vanilla and digital options, and the square root payoff in Eq %
\eqref{eq:il_payoff_e1}. First, we derive generic valuation method for
payoff $u(x_{\tau})$. We evaluate the present value $U(\tau, x_{t})$ at time 
$t$ of the payoff function $u(x_{\tau})$ which is given similarly to Eq (\ref%
{eq:mgf4}) by 
\begin{equation}  \label{eq:mgf11}
U(\tau,x_{t}) = e^{-r \tau}\mathbb{E}^{\mathbb{Q}}[u(x_{\tau})\left.\right|%
\mathcal{F}_{t}]
\end{equation}

Using Eq(\ref{eq:mgf10}) we obtain 
\begin{equation}  \label{eq:mgf_pricing}
\begin{split}
U(\tau, x) &=e^{-r\tau} \int^{\infty}_{-\infty}u(x^{\prime })P(\tau,x;
x^{\prime })dx^{\prime } \\
&=e^{-r\tau}\Re\left[ \int^{\infty}_{-\infty}u(x^{\prime })\left[\frac{1}{\pi%
} \int^{\infty}_{0}e^{\phi (x^{\prime }-x)} E(\tau,\phi)d\phi\right]%
dx^{\prime }\right] \\
& = \frac{1}{\pi} e^{-r \tau} \Re\left[\int^{\infty}_{0}\widehat{u}%
(\phi)E(\tau,\phi)d\phi\right],
\end{split}%
\end{equation}
where we assume that the inner integrals are finite to exchange the order of
the integration. Here $\widehat{u}(\phi)$ is the transformed payoff function
defined by 
\begin{equation}  \label{eq:mgf13}
\widehat{u}(\phi) = e^{-\phi x}\int^{\infty}_{-\infty}e^{\phi x^{\prime }}
u(x^{\prime }) dx^{\prime }.
\end{equation}

\begin{proposition}[Value of IL protection claim under Uniswap V2 under
exponential model dynamics \eqref{eq:bs1} with using the MGF]
\label{prop:bs_v2v} Applying dynamics \eqref{eq:bs1} to payoff functions in
Eq \eqref{eq:il_payoff_v2b} under expectation operator \eqref{eq:mgf4}, we
obtain 
\begin{equation}  \label{eq:il_payoff_v2_mgf}
\begin{split}
& PV^{funded}(t,p_{t}) = e^{-r \tau}\left[ 1 - e^{ \frac{1}{2} x_{t}} G
\left( \tau; -\frac{1}{2}\right) \right], \\
& PV^{borrowed}(t,p_{t}) = \frac{1}{2} e^{-r \tau} \left[ e^{(x_{t}
+\mu\tau)} - 2e^{ \frac{1}{2} x_{t}} G\left(\tau,-\frac{1}{2}\right) + 1%
\right]
\end{split}%
\end{equation}
\end{proposition}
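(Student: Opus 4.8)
The plan is to substitute the Uniswap V2 payoffs of Eq~\eqref{eq:il_payoff_v2b} directly into the risk-neutral valuation operator Eq~\eqref{eq:mgf4} and to reduce every resulting expectation to an evaluation of the moment generating function $G(\tau;\phi)$ from Eq~\eqref{eq:mgf9}. The only structural facts required are linearity of the conditional expectation, the $\mathcal{F}_t$-measurability of the realised log-performance $x_t$ (so that the factors $e^{\frac{1}{2}x_t}$ and $e^{x_t}$ may be taken outside the expectation), and the decomposition $x_T=x_t+x_\tau$ from Eq~\eqref{eq:mgf1b} in which $x_\tau$ carries all of the randomness over $(t,T]$.

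For the funded claim I would write
\begin{equation*}
PV^{funded}(t,p_t)=e^{-r\tau}\,\mathbb{E}^{\mathbb{Q}}[\,1-e^{\frac{1}{2}(x_t+x_\tau)}\,|\,\mathcal{F}_t]=e^{-r\tau}\left[1-e^{\frac{1}{2}x_t}\,\mathbb{E}^{\mathbb{Q}}[e^{\frac{1}{2}x_\tau}\,|\,\mathcal{F}_t]\right],
\end{equation*}
and then identify the remaining expectation as $\mathbb{E}^{\mathbb{Q}}[e^{-(-\frac{1}{2})x_\tau}\,|\,\mathcal{F}_t]=G(\tau;-\tfrac{1}{2})$ by the definition~\eqref{eq:mgf9}, which gives the first line of~\eqref{eq:il_payoff_v2_mgf}.

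For the borrowed claim I would first expand the square, writing $\tfrac{1}{2}\big(e^{\frac{1}{2}(x_t+x_\tau)}-1\big)^2=\tfrac{1}{2}\big(e^{x_t+x_\tau}-2e^{\frac{1}{2}(x_t+x_\tau)}+1\big)$, so that the valuation splits into three exponential moments against the common prefactor $\tfrac{1}{2}e^{-r\tau}$. The cross term reproduces the factor already handled above and contributes a multiple of $e^{\frac{1}{2}x_t}G(\tau;-\tfrac{1}{2})$, while the constant contributes $1$. The remaining term $e^{x_t}\mathbb{E}^{\mathbb{Q}}[e^{x_\tau}\,|\,\mathcal{F}_t]=e^{x_t}G(\tau;-1)$ is where the one genuinely measure-dependent input enters: I would invoke the risk-neutral drift normalisation Eq~\eqref{eq:mgf1a}, which forces $\mathbb{E}^{\mathbb{Q}}[e^{x_\tau}\,|\,\mathcal{F}_t]=e^{\mu\tau}$ with $\mu=r-q$, i.e.\ $G(\tau;-1)=e^{\mu\tau}$. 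Collecting the three terms then yields the second line of~\eqref{eq:il_payoff_v2_mgf}.

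The computation is routine and I do not expect a serious obstacle; the single point requiring care is that $G(\tau;-1)$ must be replaced by $e^{\mu\tau}$ rather than left as a free transform value, which is precisely the martingale condition that ties the formula to an arbitrage-free measure. The result is model-free in the sense that it holds for \emph{any} exponential dynamics admitting a closed-form MGF, and the BSM Proposition~\ref{prop:bs_v2} is recovered by substituting the Gaussian MGF~\eqref{eq:bs2gmf}; in particular $G(\tau;-\tfrac{1}{2})$ then reduces to the explicit expression~\eqref{eq:bs2a}.
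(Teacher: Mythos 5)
Your proposal is correct and follows exactly the paper's (much terser) proof, which simply cites the definition of the MGF in Eq \eqref{eq:mgf9} and the martingale condition in Eq \eqref{eq:mgf1a}; you have filled in the routine algebra correctly, including the key identification $G(\tau;-1)=e^{\mu\tau}$ for the quadratic term of the borrowed payoff. Nothing further is needed.
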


\begin{proof}
We apply the definition of the MGF in Eq \eqref{eq:mgf9} and the martingale condition in Eq \eqref{eq:mgf1a}.
\end{proof}Accordingly, the IL under Uniswap V2 can be solved analytically
for a wide class of models which is first concluded in \citet{Lipton2024}.

\begin{proposition}[Valuation of IL protection claim under Uniswap V3]
Given the MGF $G(\tau; \phi)$ defined in Eq \eqref{eq:mgf10} for log-price $%
x_{\tau}$ in the exponential model in Eq \eqref{eq:mgf1} and payoff
functions in Eq \eqref{eq:il_payoff_e1}, we obtain the following valuation
formula 
\begin{equation}  \label{eq:mgfp1}
\begin{split}
& U^{funded}(t, p_{t}) = - \left[U^{funded}_{0}(p_{t}) + U_{1/2}(p_{t}) +
U_{1}(p_{t})\right], \\
& U^{borrowed}(t, p_{t}) = - \left[U^{borrowed}_{0}(p_{t}) + U_{1/2}(p_{t})
+ U_{1}(p_{t})\right],
\end{split}%
\end{equation}
where $U^{funded}_{0}(t, p_{t})$ and $U^{borrowed}_{0}(t, p_{t}) $ are model
independent linear parts computed as in BSM model in Eq \eqref{eq:bsf2} 
\begin{equation}  \label{eq:mgfp2}
\begin{split}
& U^{funded}_{0}(t, p_{t}) = e^{-q\tau}\frac{p_{t}}{\sqrt{p_{b}}} -
e^{-r\tau}\left(\frac{p_{0}}{\sqrt{p_{b}}}-2\sqrt{p_{0}}\right) \\
& U^{borrowed}_{0}(t, p_{t}) = \sqrt{p_{0}}\left(\frac{p_{t}}{p_{0}}%
e^{-q\tau} + e^{-r\tau}\right).
\end{split}%
\end{equation}

$U_{1/2}(p_{t})$ is the square root claim computed using valuation formula %
\eqref{eq:mgf_pricing} with transform in Eq \eqref{eq:srp3}.

The option part is computed by 
\begin{equation}  \label{eq:mgfp3}
\begin{split}
u_{1}(p_{t}) = & \frac{1}{\sqrt{p_{a}}} O(p_{t}; p_{a}, -1) -\frac{1}{\sqrt{%
p_{b}}} O(p_{t}; p_{b}, +1) \\
& -2\sqrt{p_{a}} D(p_{t}; p_{a}, -1) -2\sqrt{p_{b}}D(p_{t}; p_{b}, +1)
\end{split}%
\end{equation}
where $O(p_{t}; k, +1)$ and $O(p_{t}, k, -1)$ are model values of call and
put options, respectively, computed in Eq \eqref{eq:v9}; $D(p_{\tau};k, +1)$
and $D(p_{\tau};k, -1)$ are model values of digital options computed in Eq %
\eqref{eq:do1}.
\end{proposition}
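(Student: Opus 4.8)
The plan is to combine the linearity of the discounted risk-neutral expectation in Eq \eqref{eq:mgf4} with the additive payoff decomposition of Eq \eqref{eq:il_payoff_e1}. Writing $Payoff = -\left[u_{0} + u_{1/2} + u_{1}\right]$ and using that $\mathbb{E}^{\mathbb{Q}}[\,\cdot\mid\mathcal{F}_{t}]$ is linear, the present value immediately splits into the negative sum of the discounted expectations of the three component payoffs, which reproduces the outer structure of Eqs \eqref{eq:mgfp1}. It then suffices to value each component separately, and I would treat the three pieces from the easiest to the most delicate.

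First I would handle the linear part $u_{0}$. Since both $u_{0}^{funded}$ and $u_{0}^{borrowed}$ are affine in the terminal price $p_{T} = p_{0} e^{x_{t} + x_{\tau}}$, their valuation requires only the first moment of $p_{T}$. Applying the martingale identity \eqref{eq:mgf1a}, $\mathbb{E}^{\mathbb{Q}}[p_{T}\mid\mathcal{F}_{t}]=p_{t} e^{(r-q)\tau}$, to the price term and discounting the constant term at rate $r$ reproduces Eq \eqref{eq:mgfp2}. This also explains why $U_{0}$ is model-independent and coincides with the BSM expression \eqref{eq:bsf2}: only the mean of $p_{T}$, fixed by no-arbitrage, enters. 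Next I would treat the option part $u_{1}$, which is a fixed linear combination of a put struck at $p_{a}$, a call struck at $p_{b}$, and two digital options struck at $p_{a}$ and $p_{b}$. Its present value is therefore obtained by substituting the corresponding model prices $O(p_{t};k,\omega)$ and $D(p_{t};k,\omega)$ of the vanilla and digital options, as computed from the MGF in Eqs \eqref{eq:v9} and \eqref{eq:do1}, into the same linear combination; this yields Eq \eqref{eq:mgfp3}.

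The remaining and genuinely exotic piece is the square root claim $u_{1/2}(x') = \sqrt{p_{0}}\,e^{x'/2}\,\mathbbm{1}\{x_{a} < x' < x_{b}\}$, which is the main obstacle. For this term I would apply the generic Fourier valuation formula \eqref{eq:mgf_pricing}, $U(\tau,x)=\frac{1}{\pi}e^{-r\tau}\Re\left[\int^{\infty}_{0} \widehat{u}(\phi)E(\tau,\phi)\,d\phi\right]$, with the transformed payoff $\widehat{u}(\phi)$ of Eq \eqref{eq:srp3}. Computing $\widehat{u}(\phi)$ from the definition \eqref{eq:mgf13} amounts to integrating $e^{\phi x'}\sqrt{p_{0}}\,e^{x'/2}$ over the truncated range $(x_{a},x_{b})$, which is elementary. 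The delicate point is justifying the interchange of the order of integration used in \eqref{eq:mgf_pricing}: I would need to select the integration contour $\phi=\phi_{r}+i\phi_{i}$ so that the half-integer damping introduced by the $e^{x'/2}$ factor keeps both $\widehat{u}(\phi)$ finite and $E(\tau,\phi)$ within its strip of analyticity.

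The hard part will therefore be the rigorous treatment of this inversion. Because the payoff is truncated to a bounded interval, its transform is entire in $\phi$, so the surviving constraint is the integrability of $E(\tau,\phi)$ along the chosen contour; establishing this bound and the resulting Fubini interchange is where the real work lies, whereas the linear and option parts follow at once from the martingale property and from the previously established vanilla and digital pricing formulas. Assembling the three valued components with the signs dictated by Eq \eqref{eq:il_payoff_e1} then gives Eqs \eqref{eq:mgfp1}.
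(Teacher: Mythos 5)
Your proposal is correct and follows essentially the same route as the paper: split the payoff by linearity, price the affine part $U_{0}$ via the martingale condition \eqref{eq:mgf1a}, and price the option and square-root parts by the Fourier formula \eqref{eq:mgf_pricing} with the corresponding payoff transforms. The only substantive content of the paper's proof that you defer rather than carry out is the explicit computation of the transforms for the capped vanilla payoff (yielding the Lipton--Lewis formula \eqref{eq:v8}, finite on the strip $-1<\Re[\phi]<0$) and for the digital payoffs, which is what fixes the common contour $\phi=iy-1/2$ along which all five transforms and $E(\tau;\phi)$ are evaluated together; your treatment of the square-root transform (entire in $\phi$ because the range $(x_a,x_b)$ is bounded) and of the Fubini interchange matches the paper's.
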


We emphasize that for the computation of the payoff transform $\widehat{u}%
(\phi)$ of capped option in Eq \eqref{eq:v8} we use $\phi=iy-1/2$. Then for
call and put digitals we use Eq \eqref{eq:do2} and \eqref{eq:do4} also with $%
\phi=iy-1/2$. Finally the transform of the square root payoff in Eq %
\eqref{eq:srp3} can be also evaluated using $\phi=iy-1/2$. Thus for
numerical implementation we fix the grid of $\{y\}$ and set $\phi=iy-1/2$.
We the compute MGF $E(\tau; \phi)$ and 5 transforms of payoffs along $%
\phi=iy-1/2$, and compute 5 option values using Eq \eqref{eq:mgf_pricing}.
Accordingly, the numerical implementation of pricing formula in Eq %
\eqref{eq:mgfp1} is efficient.

\begin{proof}

\textbf{Vanilla call and put options}

We represent the put and call payoff functions using capped payoffs as
\begin{equation} \label {eq:v11}
\begin{split}
  & c(p_{\tau},k) = \max\left\{e^{x_{\tau}}-k, 0\right\}=p_{\tau}- \min\left\{e^{x_{\tau}}, k\right\},\\
	& p(p_{\tau},k) = \max\left\{k-e^{x_{\tau}}, 0\right\}=k - \min\left\{e^{x_{\tau}}, k\right\}
\end{split} 
\end{equation}
Accordingly, we need to evaluate option on the capped payoff using Eq(\ref{eq:mgf_pricing}) so that we can value vanilla calls and puts using
\begin{equation} \label{eq:v9}
\begin{split}
  & O(p_{t}; k, +1)  = e^{-q\tau}p_{t}- U(p_{t}, k)\\
  & O(p_{t}; k, -1)  = e^{-r\tau} k - U(p_{t}, k)
\end{split} 
\end{equation}
where $U(p_{t}; k, \omega)$ is the value of the capped payoff.

Applying Eq (\ref{eq:mgf13}) for capped payoff $u(x')=\min \left\{ e^{x'}, k \right\}$, we obtain
\begin{equation} \label {eq:mgf14}
\begin{split}
   \widehat{u}(\phi) & =e^{-\phi x} \left( \frac{e^{(\phi+1) k^{*}}}{\phi+1} - e^{k^{*}}\frac{1}{\Phi}e^{\phi k^{*}}\right) = e^{-\phi x}\left(-e^{(\phi+1) k^{*}} \frac{1}{(\phi+1)\phi} \right)= -k e^{-\phi x^{*}} \frac{1}{(\phi+1)\phi}
\end{split} 
\end{equation}
where $x^{*}=\ln(p_{t}/k)+\mu\tau$ is the log-moneyness, $k^{*}=\ln k-\mu\tau$ with the first integral being finite for $\Re[\phi]>-1$ and the second integral being finite for $\Re[\phi]<0$. The integral (\ref{eq:mgf14}) is finite for $-1<\phi_{r}<0$. Setting $\phi=iy-1/2$, we derive:
\begin{equation}\label{eq:mgf15}
\widehat{u}(\phi = iy-1/2) = -k e^{-(iy-1/2) x^{*}}  \frac{1}{(1/2+iy)(-1/2+iy)}=
k e^{-(iy-1/2) x^{*}}  \frac{1}{y^{2}+1/4}
\end{equation}

Finally we obtain the valuation formula for capped payoff known as Lipton-Lewis formula (\cite{Lipton2001}, \cite{Lewis2000}) as follows
\begin{equation} \label {eq:v8}
\begin{split}
  U(p_{t}, k)&  = \frac{k e^{-r\tau} }{\pi} \Re\left[\int^{\infty}_{0} e^{-(iy-1/2) x^{*}}  \frac{1}{y^{2}+1/4} E(\tau; \phi=iy-1/2)dy\right],
\end{split} 
\end{equation}
where $x^{*}=\ln(p_{t}/k)+\mu\tau$ is log-moneyness.

\textbf{Digital options}

We represent the value of digital calls and puts as follows
\begin{equation} \label {eq:do1}
\begin{split}
  & D(p_{\tau},k=x_{b}, +1) = \mathbbm{1}\left\{ x_{\tau} \geq k \right\} = 1 - \mathbbm{1}\left\{ x_{\tau} < k \right\},\\
	& D(p_{\tau},k=x_{a}, -1) = \mathbbm{1}\left\{ x_{\tau} \leq k \right\} = 1 - \mathbbm{1}\left\{ x_{\tau}  > k \right\}
\end{split} 
\end{equation}

We compute the transform of the payoff function in (\ref{eq:mgf_pricing}) for digital call as follows
\begin{equation} \label{eq:do2}
\begin{split}
\widehat{u}^{c}(\phi) & =  e^{-\phi x} \int^{\infty}_{-\infty} e^{\phi x'} U^{c}(x') dx'= e^{-\phi x}\int^{\infty}_{x_{b}} \exp\left\{ \phi x'\right\} dx'=  - e^{-\phi x}\frac{1}{\phi} \exp\left\{ \phi x_{b}\right\} 
\end{split} 
\end{equation}
where the integral converges if $\phi_{r}<0$.

We compute the transform of the payoff function in (\ref{eq:mgf_pricing}) for digital put as follows
\begin{equation} \label{eq:do3}
\begin{split}
   \widehat{u}^{p}(\phi) & =  e^{-\phi x} \int^{\infty}_{-\infty} e^{\phi x'} U^{p}(x') dx' = e^{-\phi x}\int^{x_{a}}_{-\infty} \exp\left\{ \phi x'\right\} dx' = e^{-\phi x}\frac{1}{\phi} \exp\left\{ \phi x_{a}\right\} 
\end{split} 
\end{equation}
where the integral converges if $\phi_{r}>0$.

We note that using \eqref{eq:do1}, we can evaluate digital put as 
\begin{equation} \label{eq:do4}
\begin{split}
& D(p_{\tau},k=x_{a}, \omega=-1)=1-D(p_{\tau},k=x_{a}, \omega=+1)
\end{split} 
\end{equation}
so that we can use call transform in Eq \eqref{eq:do2} to evaluate both call and put digital with $\phi_{r}<0$.

\textbf{Square root payoff}

We evaluate the value function in Eq (\ref{eq:mgf11}) corresponding to the square root payoff in Eq (\ref{eq:mgf3}) as follows
\begin{equation} \label{eq:srp1} 
U_{1/2}(\tau, p_{t})  =  e^{-r\tau}\sqrt{p_{t}}\mathbb{E}\left[u(x) \right]\\
\end{equation}
where
\begin{equation} \label{eq:srp2} 
u(x) = \exp\left\{\frac{1}{2}x\right\} \mathbbm{1}\left\{ x_{a}  < x_{\tau} < x_{b} \right\}
\end{equation}
and $x_{a}=\ln(p_{a}/p_{t})$ and $x_{b}=\ln(p_{b}/p_{t})$.

We compute the transform of the payoff function in (\ref{eq:mgf_pricing}) as follows
\begin{equation} \label{eq:srp3}
\begin{split}
   \widehat{u}(\phi) & =  e^{-\phi x} \int^{\infty}_{-\infty} e^{\phi x'} u(x') dx'\\
	& = e^{-\phi x}\int^{x_{b}}_{x_{a}} \exp\left\{\left(\phi+\frac{1}{2}\right)x'\right\} dx'\\
		& = e^{-\phi x}\frac{1}{\left(\phi+\frac{1}{2}\right)} \left[\exp\left\{\left(\phi+\frac{1}{2}\right)x_{b}\right\} - \exp\left\{\left(\phi+\frac{1}{2}\right)x_{a}\right\} \right]
\end{split} 
\end{equation}
where the integral converges for $\phi_{r}\in \mathbf{R}$. For $x_{b}\rightarrow \infty$ we set $\phi_{r}<0$, and for $x_{a}\rightarrow -\infty$ we set $\phi_{r}>0$.

\end{proof}

\subsection{Application of Log-normal SV Model}

We apply the log-normal SV model which can handle positive correlation
between returns and volatility observed in price-volatility dynamics of
digital assets (see \citet{SeppRakhmonoV2023} for details). We consider
price dynamics under the risk-neutral measure $\mathbb{Q}$ for the spot
price $S_{t}$ and the instantaneous volatility $\sigma_{t}$ as follows 
\begin{equation}  \label{eq:svmd1}
\begin{split}
dS_{t}&=r(t)S_{t}dt+\sigma_{t}S_{t}dW^{(0)}_{t}, \ S_{0} =S, \\
d\sigma_{t}& =\left(\kappa_{1} + \kappa_{2}\sigma_{t} \right)(\theta -
\sigma_{t})dt+ \beta \sigma_{t}dW^{(0)}_{t} + \varepsilon \sigma_{t}
dW^{(1)}_{t}, \ \sigma_{0}=\sigma
\end{split}%
\end{equation}
where $W^{(0)}$, $W^{(1)}$ are uncorrelated Brownian motions, $\kappa_1>0$
and $\kappa_2 \geq 0$ are linear and quadratic mean-reversion rates
respectively, $\theta>0$ is the mean of the volatility, $\beta\in\mathbb{R}$
is the volatility beta which measures the sensitivity of the volatility to
changes in the spot price, and $\varepsilon>0$ is the volatility of residual
volatility.

\cite{SeppRakhmonoV2023} find the first-order solution to MGF defined in Eq %
\eqref{eq:mgf9} is given as follows 
\begin{equation}  \label{eq:b11}
G(\tau;\phi)= \exp \left\{-\phi x\right\} E^{[1]}(\tau,\phi),
\end{equation}
where $E^{[1]}$ is the exponential-affine function 
\begin{equation}  \label{eq:b12}
\begin{split}
& E^{[1]}(\tau,\phi)= \exp
\left\{\sum^{2}_{k=0}A^{(k)}(\tau;\phi)(\sigma-\theta)^{k} \right\},
\end{split}%
\end{equation}
where $\vartheta^{2}=\beta^{2}+\varepsilon^{2}$ and vector function $\mathbf{%
A}(\tau)=\left\{A^{(k)}(\tau,\Phi)\right\}$, $k=0,1,2$, solve the quadratic
differential system as a function of $\tau$: 
\begin{equation}  \label{eq:OrigMGFLogAndQVFirsOrder}
A^{(k)}_\tau=\mathbf{A}^\top M^{(k)}\mathbf{A} + \left(L^{(k)}\right)^\top 
\mathbf{A} + H^{(k)},
\end{equation}
\begin{equation*}
\begin{split}
M^{(k)}&=\left\{ 
\begin{pmatrix}
0 & 0 & 0 \\ 
0 & \frac{\theta^2\vartheta^2}{2} & 0 \\ 
0 & 0 & 0 \\ 
&  & 
\end{pmatrix}%
, 
\begin{pmatrix}
0 & 0 & 0 \\ 
0 & \theta\vartheta^2 & \theta^2\vartheta^2 \\ 
0 & \theta^2\vartheta^2 & 0 \\ 
&  & 
\end{pmatrix}%
, 
\begin{pmatrix}
0 & 0 & 0 \\ 
0 & \frac{\vartheta^2}{2} & 2\theta\vartheta^2 \\ 
0 & 2\theta\vartheta^2 & 2\theta^2\vartheta^2 \\ 
&  & 
\end{pmatrix}
\right\}, \\
L^{(k)}&=\left\{%
\begin{pmatrix}
0 \\ 
-\theta^2\beta\phi \\ 
\theta^2\vartheta^2 \\ 
\end{pmatrix}%
, 
\begin{pmatrix}
0 \\ 
-(\kappa_1+\kappa_2\theta)-2\theta\beta\phi \\ 
2(\theta\vartheta^2-\theta^2\beta\phi) \\ 
\end{pmatrix}%
, 
\begin{pmatrix}
0 \\ 
-\beta\phi-\kappa_2 \\ 
\vartheta-2(\kappa_1+\kappa_2\theta)-4\theta\beta\phi%
\end{pmatrix}
\right\}, \\
H^{(k)}&=\left\{ \frac{1}{2}\theta^2\left(\phi ^2+\phi\right), \
\theta\left(\phi ^2+\phi-2\psi\right), \ \frac{1}{2}\left(\phi
^2+\phi\right) \right\},
\end{split}%
\end{equation*}
with the initial condition $\mathbf{A}(0)=( 0, \ 0, \ 0 )^\top$. The second
order solution is provided in Theorem 4.6. \cite{SeppRakhmonoV2023}.

In Subplot (A) of Figure (\ref{joint_logsv_figure})\footnote{%
Github project \url{https://github.com/ArturSepp/StochVolModels} provides
Python code for this computations using the log-normal SV model.}, we show
the implied volatilities of the log-normal SV model for a range of
volatility of residual volatility $\varepsilon$ with $\beta=0$. In Subplot
(B), we show the premium APR for IL protection as a function of range
multiple for a range of $\varepsilon$. We see that the model-value of IL
protection is is not very sensitive to tails of implied distribution (or,
equivalently, to the convexity of the implied volatility). The reason is
that the most of the value of IL protection is derived from the center of
returns distribution.

\begin{figure}[]
\begin{center}
\includegraphics[width=0.70\textwidth, angle=0]
{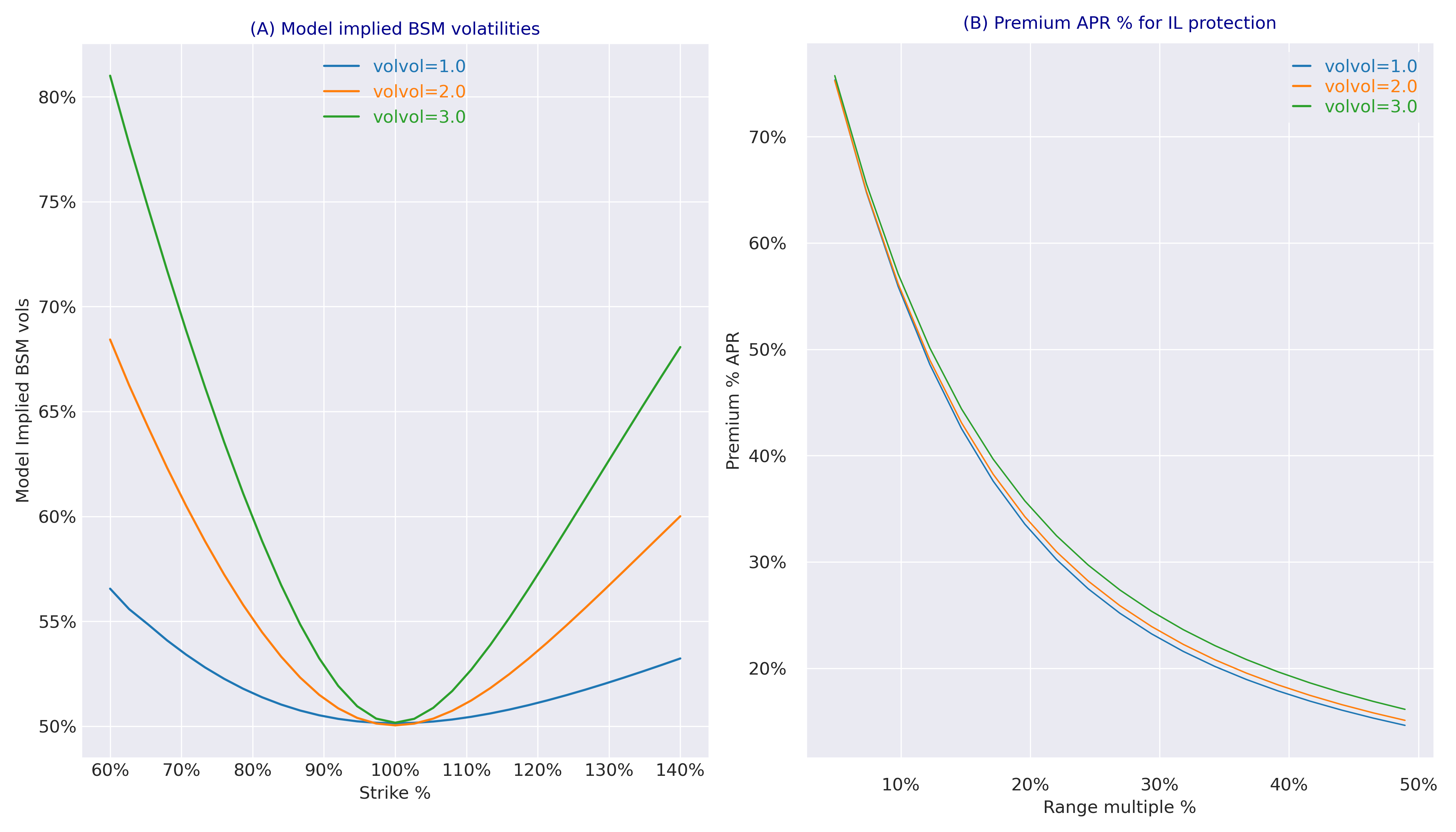}\vspace*{-1.\baselineskip}
\end{center}
\caption{(A) BSM volatilities implied by log-normal SV model as function of
volatility-of-volatility parameter $\protect\varepsilon$ for $\protect\tau%
=14/365$; (B) Premiums APR computed using log-normal SV model for borrowed
LP as function of the range multiple $m$ such that $p_{a}(m)=e^{-m}p_{0}$
and $p_{b}(m)=e^{m}p_{0}$. Other model parameters include $\protect\sigma%
_{0}=\protect\theta=0.50$, $\protect\kappa_{1}=2.21$, $\protect\kappa%
_{2}=2.18$, $\protect\beta=0.0$.}
\label{joint_logsv_figure}
\end{figure}

\section{Conclusions}

\label{sec:conc}

We have developed a unified approach for hedging of Impermanent Loss (IL)
which arises when providing liquidity to Automated Market Making (AMM) Pools
in blockchain ecosystem. We have introduced the two ways to create a
liquidity provision (LP) including a funded LP (with the long initial
exposure to underlying token) and a borrowed LP (with the zero initial
exposure to the underlying token). We have applied Uniswap V2 and V3
protocols, which are constant function market maker (CFMM) most commonly
employed by most of decentralized exchanges. We have shown that the IL can
be represented with a non-linear function of the current spot price. As a
result, using traditional methods of financial engineering, we can handle
the valuation and risk-management of the IL protection claim which delivers
the negative of the IL at a fixed maturity date.

First, we have derived a static replication approach for the IL arising from
a generic constant function market maker (CFMM) using a portfolio of traded
call and put options at a fixed maturity date. This approach allows for
model-free replication of the IL when a liquid options market exists, which
is the case for core digital assets including Bitcoin and Ethereum.

Second, for digital assets without a liquid options market, we have
developed a model-based approach using the decomposition of the IL function
into vanilla options, digital options, and an exotic square root payoff. We
have derived a closed-form valuation formula for a wide class of price
dynamics with tractable characteristic and moment generating functions (MGF)
by means of Fourier transform.

Model-based valuation can be employed by a few crypto trading companies that
currently sell over-the-counter IL protection claims. When using model-based
dynamics delta-hedging for the replication of the payoff of the IL
protection claim, the profit-and-loss (P\&L) of the dynamic delta-hedging
strategy will be primarily driven by the realised variance of the price
process. Thus, the total P\&L of a trading desk will be the difference
between premiums received (from selling IL protection claims) and the
variance realised through delta-hedging. Trading desk can employ our results
for the analysis of price dynamics and hedging strategies which optimize
their total P\&L.

For liquidity providers, who buy IL protection claims for their LP position,
the total P\&L will be driven by the difference between accrued fees from LP
positions and costs of IL protection claims. The cost of the IL protection
claim can be estimated beforehand using either the cost of static options
replicating portfolio or costs of buying IL protection from a trading desk.
As a result, liquidity providers can focus on selecting DEX pools and
liquidity ranges where expected fees could exceed hedging costs. Thus,
liquidity providers can apply our analysis optimal allocation to LP 
pools and for creating static replication portfolios using either traded
options or assessing costs quoted by providers of IL protection\footnote{%
The empirical analysis of the profitability of LP strategies is a hot area
in AMM-related literature, see for an example among others %
\citet{Heimbach2022}, \citet{Cartea2023}, \citet{Bergault2023}, %
\citet{Cartea2024}, \citet{Li2023}, \citet{Li2024}.}.


\section*{Appendices}

\subsection{Proof of Proposition \protect\ref{prop:pnl_funded_V3}}

\label{ap_prop:pnl_funded_V3}

\begin{proof}
Using Eq(\ref{eq:pnl_funded}) with Eq(\ref{eq:V34}) for $p_{t} \in (p_{a}, p_{b})$: 
\begin{equation} \label{eq:V36} 
\begin{split}
P\&L^{funded} & =\left(p_{t} x_{t} + y_{t}\right) - \left(p_{0}x_{0} +  y_{0}\right)\\
& =L \left(p_{t} \left(\frac{1}{\sqrt{p_{t}}} - \frac{1}{\sqrt{p_{b}}}\right) + \left(\sqrt{p_{t}} - \sqrt{p_{a}}\right)\right) - L \left(p_{0} \left(\frac{1}{\sqrt{p_{0}}} - \frac{1}{\sqrt{p_{b}}}\right) + \left(\sqrt{p_{0}} - \sqrt{p_{a}}\right)\right)\\
& =L \left[ \left( \sqrt{p_{t}} - \frac{p_{t}}{\sqrt{p_{b}}}\right) + \left(\sqrt{p_{t}} - \sqrt{p_{a}}\right) -  \left(\sqrt{p_{0}} - \frac{p_{0}}{\sqrt{p_{b}}}\right) - \left(\sqrt{p_{0}} - \sqrt{p_{a}}\right)\right]\\
& =L \left[ \left( \sqrt{p_{t}} - \frac{p_{t}}{\sqrt{p_{b}}}\right) + \sqrt{p_{t}}  -  \left(\sqrt{p_{0}} - \frac{p_{0}}{\sqrt{p_{b}}}\right) - \sqrt{p_{0}}\right]\\
& =L \left[ 2\left(\sqrt{p_{t}} -\sqrt{p_{0}} \right)  +   \frac{p_{0}-p_{t}}{\sqrt{p_{b}}}\right]\\
\end{split}
\end{equation}

For $p_{t} \leq p_{a}$, using (\ref{eq:V34a}):
\begin{equation} \label{eq:V36a} 
\begin{split}
P\&L^{funded} & = \left(p_{t} x_{t} + y_{t} \right) - \left(p_{0} x_{0} + y_{0}\right)\\
& = L \left[ \left(p_{t} \left(\frac{1}{\sqrt{p_{a}}} - \frac{1}{\sqrt{p_{b}}}\right) +0\right) - \left(p_{0} \left(\frac{1}{\sqrt{p_{0}}} - \frac{1}{\sqrt{p_{b}}}\right)  + \left(\sqrt{p_{0}} - \sqrt{p_{a}}\right)\right) \right]\\
& = L \left[ p_{t} \left(\frac{1}{\sqrt{p_{a}}} - \frac{1}{\sqrt{p_{b}}}\right) - \left(p_{0} \left(\frac{1}{\sqrt{p_{0}}} - \frac{1}{\sqrt{p_{b}}}\right)  + \left(\sqrt{p_{0}} - \sqrt{p_{a}}\right)\right) \right]\\
& = L \left[ p_{t} \left(\frac{1}{\sqrt{p_{a}}} - \frac{1}{\sqrt{p_{b}}}\right) + \frac{p_{0}}{\sqrt{p_{b}}}  - 2\sqrt{p_{0}} + \sqrt{p_{a}}\right]\\
\end{split}
\end{equation}

For $p_{t}  \geq  p_{b}$, using Eq (\ref{eq:V34b}):
\begin{equation} \label{eq:V36b} 
\begin{split}
P\&L^{funded} & = \left(p_{t} x_{t} + y_{t} \right) - \left(p_{0} x_{0} + y_{0}\right)\\
& = L\left[ \left(0 + \left(\sqrt{p_{b}} - \sqrt{p_{a}}\right) \right) - \left(p_{0} \left(\frac{1}{\sqrt{p_{0}}} - \frac{1}{\sqrt{p_{b}}}\right)  + \left(\sqrt{p_{0}} - \sqrt{p_{a}}\right)\right)\right]\\
& = L\left[\left(\sqrt{p_{b}} - \sqrt{p_{a}}\right) - \left(p_{0} \left(\frac{1}{\sqrt{p_{0}}} - \frac{1}{\sqrt{p_{b}}}\right)  + \left(\sqrt{p_{0}} - \sqrt{p_{a}}\right)\right)\right]\\
& = L\left[\left(\sqrt{p_{b}} - \sqrt{p_{a}}\right) + \frac{p_{0}}{\sqrt{p_{b}}}  - 2\sqrt{p_{0}} + \sqrt{p_{a}}\right]\\
& = L\left[ \sqrt{p_{b}}  + \frac{p_{0}}{\sqrt{p_{b}}}  - 2\sqrt{p_{0}} \right]
\end{split}
\end{equation}

\end{proof}

\subsection{Proof to Proposition \protect\ref{prop:pnl_borrowed_V3}}

\label{ap_prop:pnl_borrowed_V3}

\begin{proof} [Proposition \ref{prop:pnl_borrowed_V3}]
Using Eq(\ref{eq:pnl_borrowed}) with Eq(\ref{eq:V34}) for $p \in (p_{a}, p_{b})$: 
\begin{equation} \label{eq:V37} 
\begin{split}
P\&L^{borrowed} & = \left(p_{t} x_{t} + y_{t} \right) - \left(p_{t} x_{0} + y_{0}\right)\\
& = L \left[ \left(p_{t} \left(\frac{1}{\sqrt{p_{t}}} - \frac{1}{\sqrt{p_{b}}}\right) + \left(\sqrt{p_{t}} - \sqrt{p_{a}}\right) \right) - \left(p_{t} \left(\frac{1}{\sqrt{p_{0}}} - \frac{1}{\sqrt{p_{b}}}\right)  + \left(\sqrt{p_{0}} - \sqrt{p_{a}}\right)\right) \right]\\
& = L \left[ p_{t} \left(\frac{1}{\sqrt{p_{t}}} - \frac{1}{\sqrt{p_{0}}}\right) + \left(\sqrt{p_{t}} - \sqrt{p_{0}}\right)  \right]\\
& = L \left[ p_{t} \left(\frac{\sqrt{p_{0}}-\sqrt{p_{t}}}{\sqrt{p_{0}}\sqrt{p_{t}}}\right) + \left(\sqrt{p_{t}} - \sqrt{p_{0}}\right)  \right]\\
& = L \left[  \frac{\sqrt{p_{0}p_{t}}-p_{t}}{\sqrt{p_{0}}} + \left(\sqrt{p_{t}} - \sqrt{p_{0}}\right)  \right]\\
& = L \left[  \frac{2\sqrt{p_{0}p_{t}}-p_{t}-p_{0}}{\sqrt{p_{0}}} \right]\\
& =  - \frac{L}{\sqrt{p_{0}}}  \left(\sqrt{p_{t}}-\sqrt{p_{0}}\right)^{2} =  -  L \sqrt{p_{0}} \left(\sqrt{\frac{p_{t}}{p_{0}}}-1\right)^{2}
\end{split}
\end{equation}

For $p \leq p_{a}$, using (\ref{eq:V34a}):
\begin{equation} \label{eq:V37a} 
\begin{split}
P\&L^{borrowed} & = \left(p_{t} x_{t} + y_{t} \right) - \left(p_{t} x_{0} + y_{0}\right)\\
& = L \left[ \left(p_{t} \left(\frac{1}{\sqrt{p_{a}}} - \frac{1}{\sqrt{p_{b}}}\right) +0\right) - \left(p_{t} \left(\frac{1}{\sqrt{p_{0}}} - \frac{1}{\sqrt{p_{b}}}\right)  + \left(\sqrt{p_{0}} - \sqrt{p_{a}}\right)\right) \right]\\
& = L \left[  p_{t} \left(\frac{1}{\sqrt{p_{a}}} - \frac{1}{\sqrt{p_{0}}}\right)   - \left(\sqrt{p_{0}} - \sqrt{p_{a}}\right) \right]\\
\end{split}
\end{equation}

For $p  \geq  p_{b}$, using Eq (\ref{eq:V34b}):
\begin{equation} \label{eq:V37b} 
\begin{split}
P\&L^{borrowed} & = \left(p_{t} x_{t} + y_{t} \right) - \left(p_{t} x_{0} + y_{0}\right)\\
& = L\left[ \left(0 + \left(\sqrt{p_{b}} - \sqrt{p_{a}}\right) \right) - \left(p_{t} \left(\frac{1}{\sqrt{p_{0}}} - \frac{1}{\sqrt{p_{b}}}\right)  + \left(\sqrt{p_{0}} - \sqrt{p_{a}}\right)\right)\right]\\
& = L\left[   \left(\sqrt{p_{b}} - \sqrt{p_{0}}\right)- p_{t} \left(\frac{1}{\sqrt{p_{0}}} - \frac{1}{\sqrt{p_{b}}}\right) \right]\\
\end{split}
\end{equation}

\end{proof}

\subsection{Proof to Proposition \protect\ref{prop:il_dec_funded1}}

\label{ap_prop:il_dec_funded1} 
\begin{proof}

We apply P\&L of Funded LP in Eq (\ref{eq:pnl_funded_V3}) as follows.

\begin{equation} \label{eq:ilf1} 
P\&L \ funded^{(y)}(p_{t}) = \begin{cases}
 2\left(\sqrt{p_{t}} -\sqrt{p_{0}} \right)  +   \frac{p_{0}-p_{t}}{\sqrt{p_{b}}}  \quad & \, p_{t} \in (p_{a}, p_{b}) \\
  p_{t} \left(\frac{1}{\sqrt{p_{a}}} - \frac{1}{\sqrt{p_{b}}}\right) + \frac{p_{0}}{\sqrt{p_{b}}}  - 2\sqrt{p_{0}} + \sqrt{p_{a}}  \quad & p_{t} \leq p_{a} \\
\sqrt{p_{b}}  + \frac{p_{0}}{\sqrt{p_{b}}}  - 2\sqrt{p_{0}}\quad & p_{t}  \geq  p_{b} \\
     \end{cases}
\end{equation}

We consider the range part in the interval $p_{t} \in (p_{a}, p_{b}) $ as follows.
\begin{equation} \label{eq:ilf1a} 
\begin{split}
Range(p_{t})&  =  2\left(\sqrt{p_{t}} -\sqrt{p_{0}} \right)  +   \frac{p_{0}-p_{t}}{\sqrt{p_{b}}}  \\
& = - \frac{p_{t}}{\sqrt{p_{b}}} + 2\sqrt{p_{t}} + \left(\frac{p_{0}}{\sqrt{p_{b}}}-2\sqrt{p_{0}}\right)
\end{split}
\end{equation}

We extend the range part to the interval $p_{t}\in\left(0, +\infty\right)$ as follows
\begin{equation} \label{eq:ilf2} 
\begin{split}
Range(p_{t})&  \equiv - \frac{1}{\sqrt{p_{b}}}u_{1}(p_{t}) + 2u_{1/2}(p_{t}) + \left(\frac{p_{0}}{\sqrt{p_{b}}}-2\sqrt{p_{0}}\right)u_{0}(p_{t})
\end{split}
\end{equation}
where
\begin{equation} \label{eq:ilf3} 
\begin{split}
& u_{1}(p_{t})  = p_{t} \mathbbm{1}\left\{ p_{a} < p_{t} < p_{b}\right\} \\
& u_{1/2}(p_{t})  =  \sqrt{p_{t}} \mathbbm{1}\left\{ p_{a} < p_{t} < p_{b}\right\}\\ 
& u_{0}(p_{t}) = \mathbbm{1}\left\{ p_{a} < p_{t} < p_{b}\right\}
\end{split}
\end{equation}
are extended on $p_{t}\in\left(0, +\infty\right)$ with zero values outside $p_{t} \in (p_{a}, p_{b}) $.

It is clear that $u_{1}(p_{t})$ can be decomposed on $p_{t}\in(0, \infty)$ as a collar option position along with put and call digital:
\begin{equation} \label{eq:ilf4} 
\begin{split}
u_{1}(p_{t}) & =  p_{t} + \max\left\{ p_{a}-p_{t}, 0\right\} - \max\left\{ p_{t}-p_{b}, 0\right\} - p_{a}\mathbbm{1}\left\{p_{t}\leq p_{a}\right\} - p_{b}\mathbbm{1}\left\{p_{t} \geq p_{b}\right\}
\end{split}
\end{equation}

We evaluate $u_{0}(p_{t})$ term as follows
\begin{equation} \label{eq:ilf5} 
\begin{split}
u_{0}(p_{t}) & =  1 - \mathbbm{1}\left\{p_{t} \geq p_{b}\right\} - \mathbbm{1}\left\{p_{t}\leq p_{a}\right\} 
\end{split}
\end{equation}

Then $u_{1/2}(p_{t})$ is the only non-linear claim which needs model valuation
\begin{equation} \label{eq:ilf6} 
\begin{split}
u_{1/2}(p_{t}) & =  \sqrt{p_{t}} \mathbbm{1}\left\{ p_{a} < p_{t} < p_{b}\right\} 
\end{split}
\end{equation}

Thus the range part on $p_{t}\in(0, \infty)$ becomes
\begin{equation} \label{eq:ilf7a} 
\begin{split}
& Range(p_{t})  \equiv - \frac{1}{\sqrt{p_{b}}}u_{1}(p_{t}) + 2u_{1/2}(p_{t}) + \left(\frac{p_{0}}{\sqrt{p_{b}}}-2\sqrt{p_{0}}\right)u_{0}(p_{t})\\
& = - \frac{1}{\sqrt{p_{b}}} \left[p_{t} + \max\left\{ p_{a}-p_{t}, 0\right\} - \max\left\{ p_{t}-p_{b}, 0\right\} - p_{a}\mathbbm{1}\left\{p_{t}\leq p_{a}\right\} - p_{b}\mathbbm{1}\left\{p_{t} \geq p_{b}\right\}\right] + 2u_{1/2}(p_{t})\\
& + \left(\frac{p_{0}}{\sqrt{p_{b}}}-2\sqrt{p_{0}}\right) \left[1 - \mathbbm{1}\left\{p_{t} \geq p_{b}\right\} - \mathbbm{1}\left\{p_{t}\leq p_{a}\right\} \right]\\
& = - \frac{1}{\sqrt{p_{b}}} p_{t} + 2u_{1/2}(p_{t})+ \left(\frac{p_{0}}{\sqrt{p_{b}}}-2\sqrt{p_{0}}\right)  \\
& - \frac{1}{\sqrt{p_{b}}} \max\left\{ p_{a}-p_{t}, 0\right\} + \frac{1}{\sqrt{p_{b}}} \max\left\{ p_{t}-p_{b}, 0\right\} + \frac{1}{\sqrt{p_{b}}}p_{a}\mathbbm{1}\left\{p_{t}\leq p_{a}\right\} + \sqrt{p_{b}}\mathbbm{1}\left\{p_{t} \geq p_{b}\right\}\\
& - \left(\frac{p_{0}}{\sqrt{p_{b}}}-2\sqrt{p_{0}}\right) \left[\mathbbm{1}\left\{p_{t} \geq p_{b}\right\} + \mathbbm{1}\left\{p_{t}\leq p_{a}\right\} \right] \\
& = - \frac{1}{\sqrt{p_{b}}} p_{t} + 2u_{1/2}(p_{t})+ \left(\frac{p_{0}}{\sqrt{p_{b}}}-2\sqrt{p_{0}}\right)  - \frac{1}{\sqrt{p_{b}}} \max\left\{ p_{a}-p_{t}, 0\right\} + \frac{1}{\sqrt{p_{b}}} \max\left\{ p_{t}-p_{b}, 0\right\} \\
& + \left[ \frac{1}{\sqrt{p_{b}}}p_{a}   - \left(\frac{p_{0}}{\sqrt{p_{b}}}-2\sqrt{p_{0}}\right)   \right]\mathbbm{1}\left\{p_{t}\leq p_{a}\right\} + \left[\sqrt{p_{b}}-\left(\frac{p_{0}}{\sqrt{p_{b}}}-2\sqrt{p_{0}}\right) \right]\mathbbm{1}\left\{p_{t} \geq p_{b}\right\}\\
&  = - \frac{1}{\sqrt{p_{b}}} p_{t} + 2u_{1/2}(p_{t})+ \left(\frac{p_{0}}{\sqrt{p_{b}}}-2\sqrt{p_{0}}\right)  - \frac{1}{\sqrt{p_{b}}} \max\left\{ p_{a}-p_{t}, 0\right\} + \frac{1}{\sqrt{p_{b}}} \max\left\{ p_{t}-p_{b}, 0\right\} \\
& + \left[ \frac{p_{a}-p_{0}}{\sqrt{p_{b}}}  + 2\sqrt{p_{0}}  \right]\mathbbm{1}\left\{p_{t}\leq p_{a}\right\} + \left[\frac{p_{b}-p_{0}}{\sqrt{p_{b}}}+2\sqrt{p_{0}} \right]\mathbbm{1}\left\{p_{t} \geq p_{b}\right\}
\end{split}
\end{equation}

Next we evaluate the put side for $p_{t}\leq p_{a}$:
\begin{equation} \label{eq:ilf7} 
\begin{split}
Put(p_{t}) & = p_{t} \left(\frac{1}{\sqrt{p_{a}}} - \frac{1}{\sqrt{p_{b}}}\right) + \frac{p_{0}}{\sqrt{p_{b}}}  - 2\sqrt{p_{0}} + \sqrt{p_{a}} \\
& = \left[ \frac{\sqrt{p_{b}}-\sqrt{p_{a}}}{\sqrt{p_{a}}\sqrt{p_{b}}}\right] \left(   p_{t} \pm p_{a} \right) + \left(\frac{p_{0}}{\sqrt{p_{b}}}  - 2\sqrt{p_{0}} + \sqrt{p_{a}} \right)\\
& = \left[ \frac{\sqrt{p_{b}}-\sqrt{p_{a}}}{\sqrt{p_{a}}\sqrt{p_{b}}}\right] \left( -\left\{p_{a}-p_{t} \right\} + p_{a}  \right)+  \left(\frac{p_{0}}{\sqrt{p_{b}}}  - 2\sqrt{p_{0}} + \sqrt{p_{a}} \right)  \\
& = -\left[ \frac{\sqrt{p_{b}}-\sqrt{p_{a}}}{\sqrt{p_{a}}\sqrt{p_{b}}}\right] \left\{p_{a}-p_{t}\right\} +  \left[ \frac{\sqrt{p_{b}}-\sqrt{p_{a}}}{\sqrt{p_{a}}\sqrt{p_{b}}}\right]  p_{a} +  \left(\frac{p_{0}}{\sqrt{p_{b}}}  - 2\sqrt{p_{0}} + \sqrt{p_{a}} \right)\\
& = -\left[ \frac{\sqrt{p_{b}}-\sqrt{p_{a}}}{\sqrt{p_{a}}\sqrt{p_{b}}}\right]  \left\{p_{a}-p_{t}\right\} +   \frac{\sqrt{p_{b}}\sqrt{p_{a}}-p_{a}}{\sqrt{p_{b}}} + \frac{p_{0} - 2\sqrt{p_{0}}\sqrt{p_{b}}+ \sqrt{p_{b}}\sqrt{p_{a}}}{\sqrt{p_{b}}}\\
& = -\left[ \frac{\sqrt{p_{b}}-\sqrt{p_{a}}}{\sqrt{p_{a}}\sqrt{p_{b}}}\right] \left\{p_{a}-p_{t} \right\} + \frac{ (p_{0}-p_{a}) - 2\sqrt{p_{0}}\sqrt{p_{b}}+ 2\sqrt{p_{b}}\sqrt{p_{a}}}{\sqrt{p_{b}}}\\
& = -\left[ \frac{\sqrt{p_{b}}-\sqrt{p_{a}}}{\sqrt{p_{a}}\sqrt{p_{b}}}\right]\left\{p_{a}-p_{t}\right\} + \frac{p_{0}-p_{a}}{\sqrt{p_{b}}} - 2(\sqrt{p_{0}}-\sqrt{p_{a}})\\
\end{split}
\end{equation}

We further extend the last expression on $p_{t}\in(0, +\infty)$ 
\begin{equation} \label{eq:ilf8} 
\begin{split}
Put(p_{t}) & = -\left[ \frac{\sqrt{p_{b}}-\sqrt{p_{a}}}{\sqrt{p_{a}}\sqrt{p_{b}}}\right] \max \left\{p_{a}-p_{t}, 0 \right\} + \left[\frac{p_{0}-p_{a}}{\sqrt{p_{b}}} - 2(\sqrt{p_{0}}-\sqrt{p_{a}})\right] \mathbbm{1}\left\{p_{t}\leq p_{a}\right\}\\
\end{split}
\end{equation}

Next we extend the call side for $p_{t}\in(0, +\infty)$ as follows
\begin{equation} \label{eq:ilf9} 
\begin{split}
 Call(p_{t}) &= \left[\sqrt{p_{b}}  + \frac{p_{0}}{\sqrt{p_{b}}}  - 2\sqrt{p_{0}}  \right] \mathbbm{1}\left\{p_{t} \geq p_{b}\right\}
\end{split}
\end{equation}

Finally we sum up the three parts on $p_{t}\in(0, +\infty)$ as
\begin{equation} \label{eq:ilf9a} 
\begin{split}
 & Range(p_{t})  + Put(p_{t}) + Call(p_{t}) \\
& = - \frac{1}{\sqrt{p_{b}}} p_{t} + 2u_{1/2}(p_{t})+ \left(\frac{p_{0}}{\sqrt{p_{b}}}-2\sqrt{p_{0}}\right)  - \frac{1}{\sqrt{p_{b}}} \max\left\{ p_{a}-p_{t}, 0\right\} + \frac{1}{\sqrt{p_{b}}} \max\left\{ p_{t}-p_{b}, 0\right\} \\
& + \left[ \frac{p_{a}-p_{0}}{\sqrt{p_{b}}}  + 2\sqrt{p_{0}}  \right]\mathbbm{1}\left\{p_{t}\leq p_{a}\right\} + \left[\frac{p_{b}-p_{0}}{\sqrt{p_{b}}}+2\sqrt{p_{0}} \right]\mathbbm{1}\left\{p_{t} \geq p_{b}\right\}\\
&  -\left[ \frac{\sqrt{p_{b}}-\sqrt{p_{a}}}{\sqrt{p_{a}}\sqrt{p_{b}}}\right] \max \left\{p_{a}-p_{t}, 0 \right\} + \left[\frac{p_{0}-p_{a}}{\sqrt{p_{b}}} - 2(\sqrt{p_{0}}-\sqrt{p_{a}})\right] \mathbbm{1}\left\{p_{t}\leq p_{a}\right\}\\
& + \left[ \sqrt{p_{b}}  + \frac{p_{0}}{\sqrt{p_{b}}}  - 2\sqrt{p_{0}}  \right] \mathbbm{1}\left\{p_{t} \geq p_{b}\right\}\\
& =   - \frac{1}{\sqrt{p_{b}}} p_{t} + 2u_{1/2}(p_{t})+ \left(\frac{p_{0}}{\sqrt{p_{b}}}-2\sqrt{p_{0}}\right)  \\
& -\left[ \frac{1}{\sqrt{p_{b}}} + \frac{\sqrt{p_{b}}-\sqrt{p_{a}}}{\sqrt{p_{a}}\sqrt{p_{b}}} \right]\max\left\{ p_{a}-p_{t}, 0\right\}  + \frac{1}{\sqrt{p_{b}}} \max\left\{ p_{t}-p_{b}, 0\right\} \\
& + \left[ \frac{p_{a}-p_{0}}{\sqrt{p_{b}}}  + 2\sqrt{p_{0}} + \frac{p_{0}-p_{a}}{\sqrt{p_{b}}} - 2(\sqrt{p_{0}}-\sqrt{p_{a}}) \right]\mathbbm{1}\left\{p_{t}\leq p_{a}\right\} \\
& + \left[\frac{p_{b}-p_{0}}{\sqrt{p_{b}}}+2\sqrt{p_{0}}+\sqrt{p_{b}}  + \frac{p_{0}}{\sqrt{p_{b}}}  - 2\sqrt{p_{0}}  \right]\mathbbm{1}\left\{p_{t} \geq p_{b}\right\}\\
& =  - \frac{1}{\sqrt{p_{b}}} p_{t} + 2u_{1/2}(p_{t})+ \left(\frac{p_{0}}{\sqrt{p_{b}}}-2\sqrt{p_{0}}\right) \\
& -\frac{1}{\sqrt{p_{a}}} \max\left\{ p_{a}-p_{t}, 0\right\}  + \frac{1}{\sqrt{p_{b}}} \max\left\{ p_{t}-p_{b}, 0\right\}  + 2\sqrt{p_{a}} \mathbbm{1}\left\{p_{t}\leq p_{a}\right\}  + 2\sqrt{p_{b}}\mathbbm{1}\left\{p_{t} \geq p_{b}\right\}
\end{split}
\end{equation}

\end{proof}

\subsection{Proof of Proposition \protect\ref{ap_prop:il_dec_borrowed1}}

\label{ap_prop:il_dec_borrowed1} 
\begin{proof}

Using P\&L of borrowed LP in Eq \eqref{eq:pnl_borrowed_V3} we obtain the following.

\begin{equation} \label{eq:mdp1} 
P\&L\ borrowed^{(y)}(p_{t}) =  \begin{cases}
  -\sqrt{p_{0}} \left(\sqrt{\frac{p_{t}}{p_{0}}}-1\right)^{2} 
\quad & \, p_{t} \in (p_{a}, p_{b}) \\
 p_{t} \left(\frac{1}{\sqrt{p_{a}}} - \frac{1}{\sqrt{p_{0}}}\right) -\left(\sqrt{p_{0}} - \sqrt{p_{a}}\right)  \quad & p_{t} \leq p_{a} \\
 \left(\sqrt{p_{b}} - \sqrt{p_{0}}\right) -p_{t} \left(\frac{1}{\sqrt{p_{0}}} - \frac{1}{\sqrt{p_{b}}}\right) \quad & p_{t}  \geq  p_{b} \\
\end{cases}
\end{equation}
where $L=1$

We split the payoff in the three parts.

The range part we evaluate for $p_{t} \in (p_{a}, p_{b}) $ as follows
\begin{equation} \label{eq:rp1} 
\begin{split}
Range(p_{t}) & =  -\sqrt{p_{0}} \left(\frac{p_{t}}{p_{0}} -2\sqrt{\frac{p_{t}}{p_{0}}}+1\right)\\
& \equiv    \left(\frac{1}{\sqrt{p_{0}}}u_{1}(p_{t}) -  2u_{1/2}(p_{t})+\sqrt{p_{0}} u_{0}(p_{t}) \right)
\end{split}
\end{equation}
It is clear that the first term can be decomposed for $p_{t}\in(0, \infty)$ as a collar option position :
\begin{equation} \label{eq:rp1a} 
\begin{split}
u_{1}(p_{t}) & =  p_{t} + \max\left\{ p_{a}-p_{t}, 0\right\} - \max\left\{ p_{t}-p_{b}, 0\right\} - p_{a}\mathbbm{1}\left\{p_{t}\leq p_{a}\right\} - p_{b}\mathbbm{1}\left\{p_{t} \geq p_{b}\right\}
\end{split}
\end{equation}
where $\mathbbm{1}\{x\}$ is the indicator function. 

We evaluate $u_{0}(p_{t})$ term is as follows:
\begin{equation} \label{eq:rp2} 
\begin{split}
u_{0}(p_{t}) & =  1 - \mathbbm{1}\left\{p_{t} \geq p_{b}\right\} - \mathbbm{1}\left\{p_{t}\leq p_{a}\right\} 
\end{split}
\end{equation}

Then $u_{1/2}(p_{t})$ is the only non-linear claim which needs model valuation
\begin{equation} \label{eq:rp3} 
\begin{split}
u_{1/2}(p_{t}) & =  \sqrt{p_{t}} \mathbbm{1}\left\{ p_{a} < p_{t} < p_{b}\right\} 
\end{split}
\end{equation}

We evaluate the put side for $p_{t} \leq p_{a}$ as follows
\begin{equation} \label{eq:mdp2} 
\begin{split}
Put(p_{t}) & =  \left(\sqrt{p_{0}} - \sqrt{p_{a}}\right) -  p_{t} \left(\frac{\sqrt{p_{0}}-\sqrt{p_{a}}}{\sqrt{p_{0}p_{a}}} \right)\\
& = \left(\frac{\sqrt{p_{0}}-\sqrt{p_{a}}}{\sqrt{p_{0}p_{a}}} \right)  \left[\pm p_{a} + \sqrt{p_{0}p_{a}} -  p_{t} \right]\\
& = \left(\frac{\sqrt{p_{0}}-\sqrt{p_{a}}}{\sqrt{p_{0}p_{a}}} \right)  \left[ \left(p_{a}-p_{t} \right)  + \left(\sqrt{p_{0}p_{a}} -  p_{a}\right) \right]\\
& = \frac{\sqrt{p_{0}}-\sqrt{p_{a}}}{\sqrt{p_{0}p_{a}}} \left[ p_{a}-p_{t} \right]  + \frac{\sqrt{p_{0}}-\sqrt{p_{a}}}{\sqrt{p_{0}p_{a}}}\left(\sqrt{p_{0}p_{a}} -  p_{a}\right)\\
& = \frac{\sqrt{p_{0}}-\sqrt{p_{a}}}{\sqrt{p_{0}p_{a}}} \left[ p_{a}-p_{t} \right]  + \frac{\left(\sqrt{p_{0}}-\sqrt{p_{a}}\right)^{2}}{\sqrt{p_{0}}}\\
\end{split}
\end{equation}
As a result, the payoff on the put side can be written for $p_{t}\in(0, \infty)$ as follows
\begin{equation} \label{eq:mdp3} 
\begin{split}
Put(p_{t}) & = \frac{\sqrt{p_{0}}-\sqrt{p_{a}}}{\sqrt{p_{0}p_{a}}} \max\left\{ p_{a}-p_{t}, 0\right\}  + \frac{\left(\sqrt{p_{0}}-\sqrt{p_{a}}\right)^{2}}{\sqrt{p_{0}}} \mathbbm{1}\left\{p_{t}\leq p_{a}\right\}
\end{split}
\end{equation}

Second, we evaluate the call side for $p_{t}  \geq  p_{b}$ as follows
\begin{equation} \label{eq:mdp4} 
\begin{split}
Call(p_{t}) & = p_{t} \left(\frac{\sqrt{p_{b}}-\sqrt{p_{0}}}{\sqrt{p_{0}p_{b}}} \right) -  \left(\sqrt{p_{b}} - \sqrt{p_{0}}\right)\\
& =  \left(\frac{\sqrt{p_{b}}-\sqrt{p_{0}}}{\sqrt{p_{0}p_{b}}} \right)  \left[ p_{t} - \sqrt{p_{0}p_{b}} \pm p_{b} \right]\\
& =  \left(\frac{\sqrt{p_{b}}-\sqrt{p_{0}}}{\sqrt{p_{0}p_{b}}} \right)  \left[ p_{t}-p_{b}\right] + \left(\frac{\sqrt{p_{b}}-\sqrt{p_{0}}}{\sqrt{p_{0}p_{b}}} \right) \left[p_{b} - \sqrt{p_{0}p_{b}} \right]\\
& =  \left(\frac{\sqrt{p_{b}}-\sqrt{p_{0}}}{\sqrt{p_{0}p_{b}}} \right)  \left[ p_{t}-p_{b}\right] + \frac{\left(\sqrt{p_{b}}-\sqrt{p_{0}}\right)^{2}}{\sqrt{p_{0}}} 
\end{split}
\end{equation}

As a result, the payoff on the call side can be written for $p_{t}\in(0, \infty)$ as follows
\begin{equation} \label{eq:mdp5} 
\begin{split}
Call(p_{t}) & = \left(\frac{\sqrt{p_{b}}-\sqrt{p_{0}}}{\sqrt{p_{0}p_{b}}} \right)  \max\left\{ p_{t}-p_{b}, 0 \right\} + \frac{\left(\sqrt{p_{b}}-\sqrt{p_{0}}\right)^{2}}{\sqrt{p_{0}}} \mathbbm{1}\left\{p_{t} \geq p_{b}\right\}
\end{split}
\end{equation}

Summing all together:
\begin{equation} \label{eq:rp10} 
\begin{split}
& Range(p_{t}) + Put(p_{t}) + Call(p_{t}) \\
& =  -  2u_{1/2}(p_{t}) + \frac{1}{\sqrt{p_{0}}} \left[p_{t} + \max\left\{ p_{a}-p_{t}, 0\right\} - \max\left\{ p_{t}-p_{b}, 0\right\} - p_{a}\mathbbm{1}\left\{p_{t}\leq p_{a}\right\} - p_{b}\mathbbm{1}\left\{p_{t} \geq p_{b}\right\}\right] \\
& + \sqrt{p_{0}}\left[1 - \mathbbm{1}\left\{p_{t} \geq p_{b}\right\} - \mathbbm{1}\left\{p_{t}\leq p_{a}\right\} \right]\\
& + \frac{\sqrt{p_{0}}-\sqrt{p_{a}}}{\sqrt{p_{0}p_{a}}} \max\left\{ p_{a}-p_{t}, 0\right\}  + \frac{\left(\sqrt{p_{0}}-\sqrt{p_{a}}\right)^{2}}{\sqrt{p_{0}}} \mathbbm{1}\left\{p_{t}\leq p_{a}\right\}\\
& + \left(\frac{\sqrt{p_{b}}-\sqrt{p_{0}}}{\sqrt{p_{0}p_{b}}} \right)  \max\left\{ p_{t}-p_{b}, 0 \right\} + \frac{\left(\sqrt{p_{b}}-\sqrt{p_{0}}\right)^{2}}{\sqrt{p_{0}}} \mathbbm{1}\left\{p_{t} \geq p_{b}\right\}\\
& =  -  2u_{1/2}(p_{t}) + \frac{1}{\sqrt{p_{0}}} p_{t}   + \sqrt{p_{0}} \\
& + \left[ \frac{\sqrt{p_{0}}-\sqrt{p_{a}}}{\sqrt{p_{0}p_{a}}} +\frac{1}{\sqrt{p_{0}}}  \right]\max\left\{ p_{a}-p_{t}, 0\right\} \\
& + \left[\frac{\sqrt{p_{b}}-\sqrt{p_{0}}}{\sqrt{p_{0}p_{b}}}-\frac{1}{\sqrt{p_{0}}}  \right]\max\left\{ p_{t}-p_{b}, 0 \right\} \\
& +  \left[ \frac{\left(\sqrt{p_{0}}-\sqrt{p_{a}}\right)^{2}}{\sqrt{p_{0}}} - \frac{p_{a}}{\sqrt{p_{0}}} -\sqrt{p_{0}}\right] \mathbbm{1}\left\{p_{t}\leq p_{a}\right\}\\
& + \left[\frac{\left(\sqrt{p_{b}}-\sqrt{p_{0}}\right)^{2}}{\sqrt{p_{0}}} -\frac{p_{b}}{\sqrt{p_{0}}} -\sqrt{p_{0}}\right] \mathbbm{1}\left\{p_{t} \geq p_{b}\right\}\\
&  =  -  2r_{1}(p_{t}) + \frac{1}{\sqrt{p_{0}}} p_{t}   + \sqrt{p_{0}} \\
& + \frac{1}{\sqrt{p_{a}}}\max\left\{ p_{a}-p_{t}, 0\right\} \\
& -\frac{1}{\sqrt{p_{b}}}  \max\left\{ p_{t}-p_{b}, 0 \right\} \\
& -2\sqrt{p_{a}} \mathbbm{1}\left\{p_{t}\leq p_{a}\right\}\\
& -2\sqrt{p_{b}}\mathbbm{1}\left\{p_{t} \geq p_{b}\right\}
\end{split}
\end{equation}

The final result follows by collecting payoffs of vanilla puts and calls and digitals.

\end{proof}

\subsection{Proof to Proposition \protect\ref{prop:replication_lp}}

\label{sc:replication_lp}

\begin{proof}

For the put side, it is clear that
\begin{equation} \label{eq:ph6} 
\begin{split}
\Pi(K) = \sum^{N}_{n=1}w_{n}P_{n}(K) =  \sum^{N}_{n=1}w_{n} \max \left(K_{n}-K, 0 \right)
\end{split}
\end{equation}

We define the first-order derivatives at discrete strike points as follows:
\begin{equation} \label{eq:ph6a} 
\begin{split}
& \delta IL(K_{n}) = \frac{IL(K_{n})-IL(K_{n-1})}{K_{n}-K_{n-1}}\\
& \delta\Pi(K_{n} ) = \frac{\Pi(K_{n})-\Pi(K_{n-1})}{K_{n}-K_{n-1}}
\end{split}
\end{equation}

In particular for $K_{n} \in \mathcal{K}$:
\begin{equation} \label{eq:ph7} 
\begin{split}
& \Pi(K_{n}) = \sum^{N}_{n' \geq n}w_{n'}P_{n'}(K_{n}) =  \sum^{N}_{n' \geq n}w_{n'} \left(K_{n'}-K_{n} \right)\\
& \Pi(K_{n-1}) =  \sum^{N}_{n' \geq n-1}w_{n'} \left(K_{n'}-K_{n-1} \right) =w_{n-1} \left(K_{n-1}-K_{n-1} \right) + \sum^{N}_{n' \geq n}w_{n'} \left(K_{n'}-K_{n-1} \right) = \sum^{N}_{n' \geq n}w_{n'} \left(K_{n'}-K_{n-1} \right)
\end{split}
\end{equation}

As a result:
\begin{equation} \label{eq:ph8} 
\begin{split}
 \delta\Pi(K_{n} ) & = \frac{1}{{K_{n}-K_{n-1}}} \left( \Pi(K_{n})-\Pi(K_{n-1})\right)\\
& = \frac{1}{{K_{n}-K_{n-1}}} \left( \sum^{N}_{n' \geq n}w_{n'} \left(K_{n'}-K_{n} \right) -\sum^{N}_{n' \geq n}w_{n'} \left(K_{n'}-K_{n-1} \right)\right)\\
& = - \sum^{N}_{n' \geq n}w_{n'} 
\end{split}
\end{equation}
and 
\begin{equation} \label{eq:ph9} 
\begin{split}
& \delta \Pi(K_{n})- \delta \Pi(K_{n-1}) = - w_{n-1} 
\end{split}
\end{equation}

By piece-wise approximation over the interval $x \in (K_{n-1}-K_{n})$:
\begin{equation} \label{eq:ph10} 
\begin{split}
& \delta IL(x) = \delta L(K_{n}) - \frac{K_{n}-x}{K_{n}-K_{n-1}} \left(\delta IL(K_{n})- \delta IL(K_{n-1})\right)
\end{split}
\end{equation}
and 
\begin{equation} \label{eq:ph11} 
\begin{split}
\delta \Pi(x)&  = \delta \Pi(K_{n}) - \frac{K_{n}-x}{K_{n}-K_{n-1}} \left(\delta \Pi(K_{n})- \delta \Pi(K_{n-1})\right)\\
& = \delta \Pi(K_{n}) - \frac{x-K_{n}}{K_{n-1}-K_{n}} \left( - w_{n-1} \right)
\end{split}
\end{equation}

The proof for the call side follows by analogy.

\end{proof}

\subsection{Proof of Proposition \protect\ref{prop:bsf}}

\label{ap_prop:bsf}

\begin{proof}

We use Eq \eqref{eq:il_payoff_e1} as follows
\begin{equation} \label{eq:bs3} 
\begin{split}
& U^{funded}(t, p_{t}) = - e^{-r(T-t)}\mathbb{E}^{\mathbb{Q}}\left[u^{funded}_{0}(x_{t}+x_{\tau}) + u_{1/2}(x_{t}+x_{\tau}) + u_{1}(x_{t}+x_{\tau})\right],\\
& U^{borrowed}(t, p_{t})= - e^{-r(T-t)}\mathbb{E}^{\mathbb{Q}}\left[u^{borrowed}_{0}(x_{t}+x_{\tau}) + u_{1/2}(x_{t}+x_{\tau}) + u_{1}(x_{t}+x_{\tau})\right],
\end{split}
\end{equation}


The square payoff in Eq (\ref{eq:mgf3}) is computed using:
\begin{equation} \label{eq:bs5} 
\begin{split}
U_{1/2}(t, p_{t}) & =  e^{-r(T-t)}\mathbb{E}^{\mathbb{Q}}\left[\sqrt{p_{0}e^{x_{t}+x_{\tau}}} \mathbbm{1}\left\{ x_{a} < x_{t}+x_{\tau} < x_{b}\right\} \right]\\
& =  e^{-r(T-t)}\sqrt{ p_{t}}\int^{\infty}_{-\infty} \exp\left\{\frac{1}{2}\left(\mu-\frac{1}{2}\sigma^{2}\right)  \tau + \frac{1}{2}\sigma \sqrt{\tau} x\right\} \mathbbm{1}\left\{ x_{a}-x_{t} < x < p_{b}-x_{t}\right\} \mathbf{n}(x)dx\\
& = e^{-r(T-t)}\sqrt{ p_{t}} \int^{-d_{-}(p_{t},p_{b})}_{-d_{-}(p_{t},p_{a})} \exp\left\{\frac{1}{2}\left(\mu-\frac{1}{2}\sigma^{2}\right)  \tau + \frac{1}{2}\sigma  \sqrt{ \tau}x\right\} \mathbf{n}(x)dx\\
& = e^{-r(T-t)}\sqrt{ p_{t}}   \exp\left\{\frac{1}{2}\left(\mu-\frac{1}{2}\sigma^{2}\right) \tau\right\}  \int^{-d_{-}(p_{t},p_{b})}_{-d_{-}(p_{t},p_{a})} \exp\left\{\frac{1}{2}\sigma \sqrt{\tau} x\right\} \mathbf{n}(x)dx\\
& = e^{-r(T-t)}\sqrt{ p_{t}}   \exp\left\{\frac{1}{2}\left(\mu-\frac{1}{2}\sigma^{2}\right) \tau\right\}   \left(m(-d_{-}(p_{t},p_{b})) -m(-d_{-}(p_{t},p_{a})) \right)\\
& = e^{-r(T-t)}\sqrt{ p_{t}}   \exp\left\{\frac{1}{2}\mu \tau-\frac{1}{8}\sigma^{2}\tau\right\}   \left(\mathbf{N} \left( \frac{\ln(p_{b}/p_{t})-(r-q)\tau}{\sigma \sqrt{\tau}} \right) -\mathbf{N} \left( \frac{\ln(p_{a}/p_{t})-(r-q)\tau}{\sigma \sqrt{\tau}} \right)  \right)
\end{split}
\end{equation}
where $\mathbf{n}$ is normal pdf and 
\begin{equation}\label {eq:bs6}
\begin{split}
& m(x) = \exp\left\{\frac{1}{8}\sigma^{2} \tau \right\} \mathbf{N} (x-\frac{1}{2}\sigma \sqrt{\tau} ) 
\end{split}
\end{equation}

The option part is computed using the option pricing formulas for BSM model.

\end{proof}

\subsection{Carr-Madan representation}

\label{sec:carrmadan} In \cite{Carr2001} a representation of an arbitrary,
twice-differentiable function (payoff) in terms of put and call payoffs was
given. Here we derive the same representation relaxing the smoothness
assumption, only requiring that its first derivative possesses the
generalized derivative everywhere.

To this end, assume that $f:\mathbb{R}\mapsto \mathbb{R}$, is such that $%
f^{\prime }$ has generalized derivative at every point, and fix arbitrary $%
S,\,F\geq0$. Then we have 
\begin{align}
f(S)&=f(F)+I_{\{S>F\}}\int_F^Sf^{\prime
}(u)\,du-I_{\{S<F\}}\int_S^Ff^{\prime }(u)\,du  \notag \\
&=f(F)+I_{\{S>F\}}\int_F^S\left[f^{\prime }(F)+\int_F^uf^{\prime \prime
}(v)\,dv\right]du- I_{\{S<F\}}\int_S^F\left[f^{\prime }(F)-\int_u^Ff^{\prime
\prime }(v)\,dv\right]du  \notag \\
&=f(F)+f^{\prime }(F)(S-F)+I_{\{S>F\}}\int_F^S\int_F^uf^{\prime \prime
}(v)\,dv\,du +I_{\{S<F\}}\int_S^F\int_u^F f^{\prime \prime }(v)\,dv\,du 
\notag \\
&=f(F)+f^{\prime }(F)(S-F)+\int_F^Sf^{\prime \prime +}dv+\int_0^Ff^{\prime
\prime +}dv,  \label{e:rs}
\end{align}
where in the last step we used Fubini's theorem. (Note that the upper limit
in the first integral in \eqref{e:rs} can be replaced by infinity.)

\end{document}